\documentclass[12pt,a4paper,amsmath,amssymb]{amsart}

\usepackage[latin1]{inputenc}
\usepackage{amsthm}
\usepackage{latexsym}
\usepackage{amsfonts}
\usepackage{bbm,bm,dsfont}
\usepackage{eufrak}
\usepackage{color} 
\usepackage{graphicx}

\newtheorem{theorem}{Theorem}
\newtheorem{lemma}{Lemma}
\newtheorem{corollary}{Corollary}
\newtheorem{proposition}{Proposition}

\theoremstyle{definition}
\newtheorem{remark}{Remark}
\newtheorem{example}{Example}
\newtheorem{definition}{Definition}

%VARIT:

%fonts
\newcommand{\mc}[1]{\mathcal{#1}}

%numbers
\newcommand{\N}{\mathbb N}
\newcommand{\R}{\mathbb R}
\newcommand{\Z}{\mathbb Z}
\newcommand{\C}{\mathbb C}

\newcommand{\T}{\mathbb T}
\newcommand{\F}{\mathbb F}

%Hilbert space
\newcommand{\hil}{\mathcal{H}} %Hilbert space

%trace, ket, ketbra, etc
\newcommand{\tr}[1]{\mathrm{tr}\left[#1\right]} %trace
 %ket
 %ketbra
 %range
%other
\def\<{\langle}
\def\>{\rangle}

\newcommand{\id}{\mathbbm{1}} %identity operator
\newcommand{\fii}{\varphi}

\newcommand{\eps}{\varepsilon}

\newcommand{\ovl}{\overline}

%POISTA:
\setlength{\textwidth}{17cm}
\addtolength{\textheight}{2.0cm}
\addtolength{\voffset}{-2.4cm}
\addtolength{\hoffset}{-2.0cm}

\begin{document}

\title{Compatibility of covariant quantum channels\\ with emphasis on Weyl symmetry}

\author{Erkka Haapasalo}
\address{Department of Physics and Center for Field Theory and Particle Physics, Fudan University, Shanghai 200433, China}
\email{erkka\_haapasalo@fudan.edu.cn}

\maketitle

\begin{abstract}
Compatibility conditions of quantum channels featuring symmetry through covariance are studied. Compatibility here means the possibility of obtaining two or more channels through partial trace out of a broadcasting channel. We see that covariance conditions can be used to simplify compatibility conditions as the broadcasting channel can be assumed to be covariant in a particular way. A particular emphasis is on Weyl covariance and in determining compatibility conditions for Weyl-covariant channels. The concrete examples studied include the case of a non-compact continuous phase space and the case of a finite phase space.

\vspace{10pt}
\noindent{\bf Mathematics Subject Classification (2010):} Primary 20C35, 81R15, 81S30; Secondary 20C25, 43A07, 43A35
\end{abstract}

\section{Introduction}

One of the features of quantum theory which is responsible for its radical deviation from classical physics is the fact that quantum measurement processes cannot typically be realized simultaneously with a joint device. This notion of {\it incompatibility} is widely studied for quantum observables (positive-operator-valued measures) \cite{Busch86,BuHeSchuSte2013,HeKiSchu2015,
HeReSta2008,Lahti2003,LaPu97} and its negation in this context is often called as {\it joint measurability}. Incompatibility can be seen as a non-classical resource since there is an operational connection between the incompatibility of observables and detecting EPR-steering \cite{QuiVeBru2014,GuMoUo2014}. Thus identifying incompatibility is important. In addition to observables, incompatibility has been studied for a wider range of quantum devices \cite{Haapasalo2015,HeMi2017,HeReRyZi2018}.

Since incompatibility is defined through negation, the impossibility of joining, we must first be able to characterize the compatible device pairs or larger assemblages. This is, in general, a demanding task. However, in the presence of symmetry properties the problem simplifies. Furthermore, symmetry is a pervasive feature in most physical measuring devices \cite{VaradarajanKirja} and it is prudent to use this for our advantage.

This paper concentrates on covariant quantum channels and their compatibility properties. Quantum channels are an inalienable part of quantum measurements \cite{kirja} since describe changes of the quantum system in measurement processes or through time evolution. Compatibility of channels means that the channels have a broadcasting channel (a joint channel) from which the channels can be obtained as margins. Hence, compatibility of channels is closely connected to the concept of approximate quantum cloning. In this work, we utilize the symmetry properties of channels and show that in most physical situations, symmetry properties simplify determining compatible channels. This is due to the fact that the joint channels of covariant channels can be restricted to the smaller set of covariant joint channels. Similar methods and results have been successfully used before in the study of joint measurability \cite{Werner2004}. As particular examples, we consider channels covariant under phase space shifts, the Weyl-covariant channels.

The paper is organized as follows: In Section \ref{sec:def}, general notations are fixed and definitions for channels and their covariance are given. We go on to define compatibility in Section \ref{sec:compcov} and give a result stating that, in varied situations, joint channels of covariant channels can be restricted to covariant joint channels. Some compatibility conditions for covariant channels are presented in Subsection \ref{subsec:compdilat}. Our main results deal with phase-space-covariant channels and their compatibility conditions which are discussed in the case of a generalized phase space in Section \ref{sec:gPhaseSpace}. The structure of covariant joint channels for Weyl covariance is discussed in Subsection \ref{subsec:Wjoin} and in Subsection \ref{subsec:Wcompdilat} compatibility conditions for Weyl-covariant channels are given. In Section \ref{sec:PhysPhase}, we discuss the consequences of the preceding section for two types of physical phase spaces, for a non-compact continuous phase space in Subsection \ref{sec:contphasespace} and for finite phase space in Subsection \ref{subsec:finitephase}. In the former case, a particular emphasis is on Gaussian Weyl-covariant channels. Finally in Section \ref{sec:multipartite} some of the earlier results are generalized to the multipartite case.

\section{Definitions}\label{sec:def}

We let $\N$ be the set $\{1,\,2,\ldots\}$ of positive natural numbers. The Hilbert spaces in this treatise are always assumed to be complex. Any additional assumptions on the Hilbert spaces, such as separability, will be explicitly stated in each case. For any Hilbert space $\hil$, we denote by $\mc L(\hil)$, $\mc T(\hil)$, and respectively $\mc U(\hil)$ the sets of bounded linear operators, trace-class operators, and respectively unitary operators on $\hil$. We denote by $\id_\hil$ the identity operator on $\hil$; we often omit the subscript when there is no risk of confusion. Of special interest in quantum physics is the set $\mc S(\hil)\subset\mc T(\hil)$ of positive trace-1 operators on $\hil$, since these operators represent the {\it states} of the quantum system described by $\hil$. Moreover, we denote the inner product, which is linear in the second argument throughout this treatise, of any Hilbert space by $\<\cdot|\cdot\>$; there should be no confusion regarding which Hilbert space is in question in each situation.

Let $\hil$ and $\mc K$ be Hilbert spaces. We say that a linear map $\Phi:\mc L(\mc K)\to\mc L(\hil)$ is {\it completely positive} if for any $n\in\N$, $B_1,\ldots,\,B_n\in\mc L(\mc K)$, and $\fii_1,\ldots,\,\fii_n\in\hil$
$$
\sum_{i,j=1}^n\<\fii_i|\Phi(B_i^*B_j)\fii_j\>\geq0.
$$
If $\Phi(\id_{\mc K})=\id_\hil$, we say that $\Phi$ is {\it unital}. If $\Phi$ is (completely) positive and for any increasing net $(B_\alpha)_{\alpha\in A}\subset\mc L(\mc K)$ of self-adjoined operators we have
$$
\sup_{\alpha\in A}\Phi(B_\alpha)=\Phi\big(\sup_{\alpha\in A} B_\alpha\big)
$$
(or, equivalently, $\Phi$ is (completely) positive and continuous with respect to the ultraweak topologies of $\mc L(\mc K)$ and $\mc L(\hil)$), we say that $\Phi$ is {\it normal}.
\begin{definition}
A normal unital completely positive linear map $\Phi:\mc L(\mc K)\to\mc L(\hil)$ is called a {\it channel} and the set of such channels is denoted by ${\bf Ch}(\hil,\mc K)$.
\end{definition}

Using the duality $\mc T(\hil)^*=\mc L(\hil)$, we may define the predual $\Phi_*:\mc T(\hil)\to\mc T(\mc K)$ for any channel $\Phi\in{\bf Ch}(\hil,\mc K)$ through
$$
\tr{\Phi_*(T)B}=\tr{T\Phi(B)},\qquad T\in\mc T(\hil),\quad B\in\mc L(\mc K).
$$
Clearly $\Phi_*$ is fully determined by its restriction on $\mc S(\hil)$ and $\Phi_*(\rho)\in\mc S(\mc K)$ for any $\rho\in\mc S(\hil)$. Indeed, the physical relevance of channels is in that they describe transformations of quantum states. From the physical point of view, complete-positivity arises from the requirement that trivially extended maps should be positive as well. This means that, for any $n\in\N$, the map $\Phi\otimes{\rm id}_{\mc M_{n\times n}(\C)}:\mc L(\mc K)\otimes\mc M_{n\times n}(\C)\to\mc L(\hil)\otimes\mc M_{n\times n}(\C)$ is required to be positive. Here we denote by $\mc M_{n\times n}(\C)$ the algebra of ($n\times n$)-matrices with complex entries and by ${\rm id}_{\mc M_{n\times n}(\C)}$ the identity map on this algebra. Moreover, the tensor product channel is the extension of the map $\mc L(\mc K)\times\mc M_{n\times n}(\C)\ni(B,M)\mapsto\big(\Phi(B),M\big)\in\mc L(\hil)\times\mc M_{n\times n}(\C)$.

Let $G$ be a group and fix a Hilbert space $\hil$. A map $U:G\to\mc U(\hil)$ is called a {\it projective unitary representation} if the map $G\ni g\mapsto\alpha^U_g\in{\rm Aut}\big(\mc L(\hil)\big)$, $\alpha^U_g(A)=U(g)AU(g)^*$, $g\in G$, $A\in\mc L(\hil)$, is a group homomorphism. This can be shown to be equivalent with
\begin{itemize}
\item $U(e)=\id_\hil$ (with $e$ being the neutral element of $G$) and
\item $U(gh)=m(g,h)U(g)U(h)$ for all $g,\,h\in G$ where $m:G\times G\to\T$ (with $\T$ being the torus, the set of modulus-1 complex numbers) is a {\it $G$-multiplier}, i.e.,
\begin{itemize}
\item $m(g,e)=1=m(e,g)$ for all $g\in G$ and
\item $m(g,h)m(gh,k)=m(g,hk)m(h,k)$ for all $g,\,h,\,k\in G$.
\end{itemize}
\end{itemize}
If the multiplier $m$ above has the constant value 1, $U$ is often called a(n ordinary) unitary representation.

Fix now Hilbert spaces $\hil$ and $\mc K$, group $G$ and projective representations $U:G\to\mc U(\hil)$ and $V:G\to\mc U(\mc K)$. The following definition of covariant channels is the same as, e.g.,\ in \cite{HaPe2017}:
\begin{definition}
We denote by ${\bf Ch}_U^V$ the subset of channels $\Phi\in{\bf Ch}(\hil,\mc K)$ such that
$$
\Phi\circ\alpha^V_g=\alpha^U_g\circ\Phi,\qquad g\in G.
$$
We call channels $\Phi\in{\bf Ch}_U^V$ {\it $(U,V)$-covariant}.
\end{definition}

In typical situations the symmetry group $G$ is locally compact and second countable (lcsc) and the representations are strongly continuous, i.e.,\ the maps $g\mapsto U(g)\fii$, $\fii\in\hil$, are continuous with respect to the locally compact topology of $G$ and the natural topology of $\hil$, and similarly for $V$.

Let $G$ be a locally compact group and $L^\infty(G)$ be the space of (equivalence classes of) essentially bounded functions measurable with respect to any of the equivalent left (or right) Haar measure of $G$. Let us concentrate on the left-invariant case. A locally compact group $G$ is said to be {\it amenable} if it allows an {\it invariant mean} $L^\infty(G)\ni f\mapsto\ovl f\in\C$, i.e.,\ $f\mapsto\ovl f$ is a positive linear functional (i.e.,\ linear map such that $\ovl f\geq0$ whenever $f\geq0$ ,i.e.,\ $f$ is a pointwise-positive function) such that $\ovl 1=1$ (where $1$ on LHS stands for the function having the constant value 1), and, for any $f\in L^\infty(G)$ and $g\in G$, $\ovl{f^g}=\ovl f$ where $f^g\in L^\infty(G)$, $f^g(h)=f(g^{-1}h)$ for all $h\in G$. Compact groups are amenable. Indeed, when we fix the left Haar measure $\mu$ of a compact $G$ with $\mu(G)=1$, the map
$$
L^\infty(G)\ni f\mapsto\ovl f=\int_G f\,d\mu\in\C
$$
is an invariant mean. According to the Markov-Kakutani fixed point theorem, also Abelian groups are amenable. In our most important examples, the symmetry groups will be amenable and invariant means will be used to make general channels covariant.

\section{Compatibility of covariant channels}\label{sec:compcov}

Let $\hil$, $\mc K_1$, and $\mc K_2$ be Hilbert spaces. The following is a central definition of this work:
\begin{definition}\label{def:margincomp}
Let $\Psi\in{\bf Ch}(\hil,\mc K_1\otimes\mc K_2)$. Define the {\it margins} $\Psi_{(i)}\in{\bf Ch}(\hil,\mc K_i)$, $i=1,\,2$, of $\Psi$ through
$$
\Psi_{(1)}(A)=\Psi(A\otimes\id_{\mc K_2}),\quad\Psi_{(2)}(B)=\Psi(\id_{\mc K_1}\otimes B),\qquad A\in\mc L(\mc K_1),\quad B\in\mc L(\mc K_2).
$$
Let $\Phi_i\in{\bf Ch}(\hil,\mc K_i)$, $i=1,\,2$. The channels $\Phi_1$ and $\Phi_2$ are said to be {\it compatible} if there is a {\it joint channel} $\Psi\in{\bf Ch}(\hil,\mc K_1\otimes\mc K_2)$ for $\Phi_1$ and $\Phi_2$, i.e.,\ $\Phi_i=\Psi_{(i)}$, $i=1,\,2$.
\end{definition}

Suppose that $\Phi_i\in{\bf Ch}(\hil,\mc K_i)$, $i=1,\,2$ are compatible and have a joint channel $\Psi\in{\bf Ch}(\hil,\mc K_1\otimes\mc K_2)$. Let us denote the partial trace over $\mc K_i$ on $\mc T(\mc K_1\otimes\mc K_2)$ by ${\rm tr}_i$, $i=1,\,2$. For any $\rho\in\mc S(\hil)$, we have
$$
\Phi_{1\,*}(\rho)=\Psi_{(1)\,*}(\rho)={\rm tr}_2\big[\Psi_*(\rho)\big]\quad\Phi_{2\,*}(\rho)=\Psi_{(2)\,*}(\rho)={\rm tr}_1\big[\Psi_*(\rho)\big].
$$
Thus, compatibility of two channels means that the channels can be seen as reduced dynamics of a broadcasting Schr\"odinger channel $\Psi_*:\mc S(\hil)\to\mc S(\mc K_1\otimes\mc K_2)$.

\begin{remark}\label{rem:0resource}
Let us make a couple of observations on operations that preserve compatibility. Pick compatible channels $\Phi_i\in{\bf Ch}(\hil,\mc K_i)$, $i=1,\,2$. Let us assume that $\Psi\in{\bf Ch}(\hil,\mc K_1\otimes\mc K_2)$ is a joint channel for $\Phi_1$ and $\Phi_2$. Channels of the form $\Phi'_i=\Gamma\circ\Phi_i\in{\bf Ch}(\hil',\mc K_i)$, $i=1,\,2$,  where $\hil'$ is some Hilbert space and $\Gamma\in{\bf Ch}(\hil,\hil')$, are {\it pre-processings of $\Phi_1$ and $\Phi_2$ with a common pre-processor}. Such pre-processings are compatible as well. Indeed, the channel $\Gamma\circ\Psi$ is easily seen to be a joint channel for $\Phi'_1$ and $\Phi'_2$.

Whenever $\mc K_i'$, $i=1,\,2$, are some Hilbert spaces and $\Delta_i\in{\bf Ch}(\mc K_i,\mc K_i')$ are channels, the channels $\tilde{\Phi}_i=\Phi_i\circ\Delta_i\in{\bf Ch}(\hil,\mc K_i')$ are called as {\it post-processings of $\Phi_1$ and $\Phi_2$}. Such post-processings are also compatible. To see this, define the channel $\Delta_1\otimes\Delta_2\in{\bf Ch}(\mc K_1\otimes\mc K_2,\mc K_1'\otimes\mc K_2')$ as the unique extension of the map $\mc L(\mc K_1')\times\mc L(\mc K_2')\ni(C,D)\mapsto\Delta_1(C)\otimes\Delta_2(D)\in\mc L(\mc K_1\otimes\mc K_2)$ onto $\mc L(\mc K_1'\otimes\mc K_2')$. It is immediate that $\Psi\circ(\Delta_1\otimes\Delta_2)$ is a joint channel for $\tilde{\Phi}_1$ and $\tilde{\Phi}_2$. Thus, the set of compatible pairs of channels is closed under pre-processing with a common pre-processor and under post-processing.
\end{remark}

The following lemma will be crucial in proving the main result, Proposition \ref{prop:covjoint}, of this section.

\begin{lemma}\label{lemma:jointnormal}
Let $\hil$, $\mc K_1$, and $\mc K_2$ be Hilbert spaces. Suppose that $\Psi:\mc L(\mc K_1\otimes\mc K_2)\to\mc L(\hil)$ is a positive linear map whose margins $\Psi_{(i)}$, $i=1,\,2$, are defined as in Definition \ref{def:margincomp}. If $\Psi_{(i)}$, $i=1,\,2$ are normal, so is $\Psi$.
\end{lemma}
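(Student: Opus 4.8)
The plan is to reduce the normality of $\Psi$ to that of its margins by exploiting that normality (equivalently, ultraweak continuity) can be tested on monotone nets, and that a positive map is automatically bounded, so it suffices to control $\Psi$ on a generating set of operators that are dominated by products of the form $A\otimes\id$ and $\id\otimes B$. First I would recall the standard fact that any positive linear map between $C^*$-algebras of operators is bounded, so $\Psi$ is bounded; hence it extends to, and is determined by, its action on the unit ball, and it is enough to check the order-continuity condition $\sup_\alpha\Psi(C_\alpha)=\Psi(\sup_\alpha C_\alpha)$ for bounded increasing nets $(C_\alpha)$ of self-adjoint elements in $\mc L(\mc K_1\otimes\mc K_2)$.

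The key structural observation is a sandwiching estimate: for any $0\le C\le\id_{\mc K_1\otimes\mc K_2}$ and any $\fii\in\hil$ we have $0\le\<\fii|\Psi(C)\fii\>\le\<\fii|\Psi(\id)\fii\>$, and more usefully, one can bound matrix elements of $\Psi(C)$ against those of $\Psi(A\otimes\id)$ and $\Psi(\id\otimes B)$ by using $C\le A\otimes\id$ whenever $A\ge\|C\|_{(1)}$-type dominations hold. Concretely, the plan is: (i) show that the positive cone of $\mc T(\mc K_1\otimes\mc K_2)$, hence the predual formulation, is controlled by the product operators; or, working directly on $\mc L$, (ii) use that for a bounded increasing net $C_\alpha\uparrow C$, the differences $C-C_\alpha\ge 0$ satisfy $C-C_\alpha\le\|C-C_\alpha\|\,\id\le\|C-C_\alpha\|(\id_{\mc K_1}\otimes\id_{\mc K_2})$, which alone is too weak, so instead one decomposes using the two partial traces. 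The cleaner route is to pass to the predual: since $\Psi_{(1)}$ and $\Psi_{(2)}$ are normal, their preduals $\Psi_{(1)*}$, $\Psi_{(2)*}$ map $\mc T(\hil)$ into $\mc T(\mc K_1)$, $\mc T(\mc K_2)$ respectively; then for $\rho\in\mc S(\hil)$ one shows the positive functional $B\mapsto\tr{\rho\,\Psi(B)}$ on $\mc L(\mc K_1\otimes\mc K_2)$ is normal by checking it is dominated in the appropriate sense by $\tr{\rho\,\Psi(A\otimes\id)}$ and $\tr{\rho\,\Psi(\id\otimes B)}$, which are normal, via the inequality that for $0\le C\le\id$, $C\le \mathrm{(something\ built\ from\ }{\rm tr}_2 C\otimes\id,\ \id\otimes{\rm tr}_1 C)$ — more precisely one uses that a normal state's restriction to $\mc L(\mc K_1)\otimes\id$ and $\id\otimes\mc L(\mc K_2)$ together with positivity forces normality because the singular part of the functional would have to vanish on both ``coordinate'' subalgebras and hence, by a cone argument, be zero.

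Therefore the concrete steps I would carry out are: (1) observe $\Psi$ is bounded and positive, so by the Grothendieck/Takesaki decomposition its associated positive functionals $\omega_\rho = \tr{\rho\,\Psi(\,\cdot\,)}$ decompose uniquely as $\omega_\rho = \omega_\rho^n + \omega_\rho^s$ with $\omega_\rho^n$ normal and $\omega_\rho^s$ singular, both positive; (2) restrict to the subalgebra $\mc L(\mc K_1)\otimes\id_{\mc K_2}$: there $\omega_\rho$ agrees with $\rho\mapsto\tr{\rho\,\Psi_{(1)}(\,\cdot\,)}$ which is normal, so the singular part $\omega_\rho^s$ vanishes on $\mc L(\mc K_1)\otimes\id$, and symmetrically on $\id\otimes\mc L(\mc K_2)$; (3) conclude $\omega_\rho^s$ vanishes on the *-algebra generated by these two, hence on a weakly dense *-subalgebra, and since a singular functional that is weak-* continuous on a weakly dense subspace... — here is the real subtlety — one instead argues that a positive singular functional vanishing on $\mc L(\mc K_1)\otimes\id$ must vanish on all $C$ with $0\le C\le A\otimes\id$, and every positive $C\in\mc L(\mc K_1\otimes\mc K_2)$ with finite-rank-dominated support is of this form, giving $\omega_\rho^s=0$ by density and positivity; (4) since $\omega_\rho$ is normal for every $\rho\in\mc S(\hil)$ and $\Psi(B)\in\mc L(\hil)$ is determined by these functionals, $\Psi$ is normal.

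The main obstacle is step (3): a positive singular functional vanishing on each of the two ``marginal'' subalgebras need not obviously vanish on the whole algebra, because the *-algebra they generate, while weakly dense, is not norm-dense, and singular functionals are precisely those \emph{not} controlled by weak-* limits. The fix I anticipate is to avoid generating the whole algebra and instead use the order-theoretic domination $C\le\|C\|\,(\id_{\mc K_1}\otimes\id_{\mc K_2})$ together with a finer estimate: for $C\ge 0$ in the trace class one has $C\le ({\rm tr}_2 C)\otimes\id + \id\otimes({\rm tr}_1 C)$ is false in general, so the correct tool is the CKW-type or simply the fact that $\mc T(\mc K_1\otimes\mc K_2)$ is spanned by positive operators each dominated by an operator of the form $(\text{positive in }\mc T(\mc K_1))\otimes(\text{positive in }\mc T(\mc K_2))$ scaled appropriately — equivalently, that $\|T\|_1$ for $T\ge0$ equals $\tr{({\rm tr}_2 T)\otimes\id\cdot P}$ for a suitable projection — allowing the singular part to be squeezed between two zeros. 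I expect the final write-up to phrase this as: the singular part $\omega_\rho^s$ satisfies $0\le\omega_\rho^s(C)\le\omega_\rho^s(A\otimes\id)=0$ for $C$ in a sufficiently large cone, hence $\omega_\rho^s=0$.
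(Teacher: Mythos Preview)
Your Takesaki-decomposition route is a reasonable alternative, but step (2) is false as written, and the later ``fix'' does not repair it. The claim ``$\omega_\rho|_{\mc L(\mc K_1)\otimes\id}$ normal $\Rightarrow$ $\omega_\rho^s$ vanishes on $\mc L(\mc K_1)\otimes\id$'' fails: the restriction of a singular functional to a von Neumann subalgebra need not be singular, so after restricting you only learn that $\omega_\rho^s|_{\mc L(\mc K_1)\otimes\id}$ is the difference of two normal functionals, i.e.\ normal, not zero. Concretely, with $\mc K_1=\mc K_2=\ell^2$, an orthonormal basis $(e_n)$, and a free ultrafilter $\mc U$, the state $\omega(C)=\lim_{\mc U}\<e_1\otimes e_n|C\,e_1\otimes e_n\>$ is singular on $\mc L(\mc K_1\otimes\mc K_2)$ (it kills compacts), yet $\omega|_{\mc L(\mc K_1)\otimes\id}=\<e_1|\cdot\,e_1\>$ is a nonzero \emph{normal} state. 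Here $\omega_\rho=\omega$, $\omega_\rho^n=0$, $\omega_\rho|_{\mc L(\mc K_1)\otimes\id}$ is normal, but $\omega_\rho^s$ does not vanish on that subalgebra. (Note that in this example the \emph{other} margin is singular, so the lemma is not contradicted; the point is that you cannot peel off the two margins one at a time.) Consequently your step (3), which starts from ``$\omega_\rho^s$ vanishes on both marginal subalgebras'', never gets off the ground.

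The missing idea --- and it is exactly the one the paper uses --- is the order inequality
\[
\id-P_{\mc M}\otimes P_{\mc N}\ \le\ (\id_{\mc K_1}-P_{\mc M})\otimes\id_{\mc K_2}\;+\;\id_{\mc K_1}\otimes(\id_{\mc K_2}-P_{\mc N}),
\]
with $P_{\mc M},P_{\mc N}$ finite-rank projections. This couples the two margins simultaneously. In your language it gives, for every positive $\rho\in\mc T(\hil)$,
\[
\omega_\rho^s(\id)=\omega_\rho^s(\id-P_{\mc M}\otimes P_{\mc N})\le\omega_\rho(\id-P_{\mc M}\otimes P_{\mc N})\le\tr{\rho\,\Psi_{(1)}(\id-P_{\mc M})}+\tr{\rho\,\Psi_{(2)}(\id-P_{\mc N})}\to 0,
\]
using that $\omega_\rho^s$ kills the finite-rank $P_{\mc M}\otimes P_{\mc N}$ and that both margins are normal. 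Hence $\omega_\rho^s=0$ and $\Psi$ is normal. The paper proves the same thing by a direct $\eps/3$ estimate: it uses that displayed inequality to show $\Psi(\id-P_{\mc M}\otimes P_{\mc N})\to 0$, then sandwiches $C-C_\alpha$ by $P_{\mc M}\otimes P_{\mc N}$ and its complement and controls the cross terms with Cauchy--Schwarz for the form $(C_1,C_2)\mapsto\tr{\rho\,\Psi(C_1^*C_2)}$, together with automatic normality on finite-dimensional corners. Either packaging works; the content is the same inequality, and your proposal as it stands does not contain it.
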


\begin{proof}
Denote by $\F_i$ the set of finite-dimensional linear subspaces of $\mc K_i$, $i=1,\,2$. These are directed sets with respect to set inclusion. Consequently, the set $\F_1\times\F_2$ is directed in a natural way. Denote by $P_{\mc M}$ the orthogonal projection onto the subspace $\mc M\in\F_i$, $i=1,\,2$. Using the normality of $\Phi_1$ and $\Phi_2$,
\begin{eqnarray*}
\Psi(\id_{\mc K_1\otimes\mc K_2}-P_{\mc M}\otimes P_{\mc N})&=&\Psi\big(\id_{\mc K_1\otimes\mc K_2}-P_{\mc M}\otimes\id_{\mc K_2}+P_{\mc M}\otimes\id_{\mc K_2}(\id_{\mc K_1\otimes\mc K_2}-\id_{\mc K_1}\otimes P_{\mc N})\big)\\
&\leq&\Psi\big((\id_{\mc K_1\otimes\mc K_2}-P_{\mc M}\otimes\id_{\mc K_2})+(\id_{\mc K_1\otimes\mc K_2}-\id_{\mc K_1}\otimes P_{\mc N})\big)\\
&=&\Psi_{(1)}(\id_{\mc K_1}-P_{\mc M})+\Psi_{(2)}(\id_{\mc K_2}-P_{\mc N})\underset{\begin{tiny}(\mc M,\mc N)\in\F_1\times\F_2\end{tiny}}{\rightarrow}0.
\end{eqnarray*}

Suppose that $(C_\alpha)_{\alpha\in A}\subset\mc L(\mc K_1\otimes\mc K_2)$ is an increasing net of self-adjoined operators with the supremum $C$. Fix a positive $T\in\mc T(\hil)$, $\eps>0$, and $\alpha_0\in A$. Choose $(\mc M,\mc N)\in\F_1\times\F_2$, so that
$$
\tr{T\Psi(\id_{\mc K_1\otimes\mc K_2}-P_{\mc M}\otimes P_{\mc N})}\leq\min\Big\{\frac{\eps^2}{36\tr{T}\|C-C_{\alpha_0}\|^2},\frac{\eps}{3\|C-C_{\alpha_0}\|}\Big\},
$$
and $\alpha_1\in A$, so that $\tr{T\Psi\big(P_{\mc M}\otimes P_{\mc N}(C-C_\alpha)P_{\mc M}\otimes P_{\mc N}\big)}\leq\eps/3$ for all $\alpha\in A$, $\alpha\geq\alpha_1$; the first choice can be made based on the above estimation and the second one on the fact that $\Psi$ restricted on a finite-dimensional sub-algebra is normal. We may now evaluate for any $\alpha\geq\alpha_0,\,\alpha_1$
\begin{eqnarray*}
&&\tr{T\Psi(C-C_\alpha)}\\
&\leq&\tr{T\Psi(P_{\mc M}\otimes P_{\mc N}(C-C_\alpha)P_{\mc M}\otimes P_{\mc N}}\\
&&+\big|\tr{T\Psi\big((\id_{\mc M_1\otimes\mc M_2}-P_{\mc M}\otimes P_{\mc N})(C-C_\alpha)P_{\mc M}\otimes P_{\mc N}\big)}\big|\\
&&+\big|\tr{T\Psi\big(P_{\mc M}\otimes P_{\mc N}(C-C_\alpha)(\id_{\mc M_1\otimes\mc M_2}-P_{\mc M}\otimes P_{\mc N})\big)}\big|\\
&&+\tr{T\Psi\big((\id_{\mc M_1\otimes\mc M_2}-P_{\mc M}\otimes P_{\mc N}\big)(C-C_\alpha)(\id_{\mc M_1\otimes\mc M_2}-P_{\mc M}\otimes P_{\mc N})\big)}\\
&\leq&\tr{T\Psi(P_{\mc M}\otimes P_{\mc N}(C-C_\alpha)P_{\mc M}\otimes P_{\mc N}}\\
&&+2\sqrt{\tr{T\Psi(\id_{\mc K_1\otimes\mc K_2}-P_{\mc M}\otimes P_{\mc N})}\tr{T\Psi\big(P_{\mc M}\otimes P_{\mc N}(C-C_\alpha)^2P_{\mc M}\otimes P_{\mc N}\big)}}\\
&&+\|C-C_\alpha\|\tr{T\Psi(\id_{\mc K_1\otimes\mc K_2}-P_{\mc M}\otimes P_{\mc N})}\\
&\leq&\tr{T\Psi(P_{\mc M}\otimes P_{\mc N}(C-C_\alpha)P_{\mc M}\otimes P_{\mc N}}+2\|C-C_\alpha\|\sqrt{\tr{T\Psi(\id_{\mc K_1\otimes\mc K_2}-P_{\mc M}\otimes P_{\mc N})}\tr{T}}\\
&&+\|C-C_\alpha\|\tr{T\Psi(\id_{\mc K_1\otimes\mc K_2}-P_{\mc M}\otimes P_{\mc N})}\leq\eps/3+\eps/3+\eps/3=\eps.
\end{eqnarray*}
The second estimation above is based on the Cauchy-Schwarz inequality for the positive sesquilinear form $(C_1,C_2)\mapsto\tr{T\Psi(C_1^*C_2)}$ and on the inequality $D\leq\|D\|\id_{\mc K_1\otimes\mc K_2}$ for any self-adjoined $D\in\mc L(\mc K_1\otimes\mc K_2)$. Note also that, as $(C-C_\alpha)_{\alpha\in A}$ is a descending sequence of positive operators, also $(\|C-C_\alpha\|)_{\alpha\in A}$ is non-increasing, as one easily checks. Thus, $\Psi$ is normal.
\end{proof}

For the rest of this section, we shall fix Hilbert spaces $\hil$, $\mc K_1$, and $\mc K_2$, a symmetry group $G$, and projective representations $U:G\to\mc U(\hil)$ and $V_i:G\to\mc U(\mc K_i)$, $i=1,\,2$. We are interested in the compatibility properties of pairs of channels within ${\bf Ch}_U^{V_1}\times{\bf Ch}_U^{V_2}$. Define the representation $V_{12}:G\to\mc U(\mc K_1\otimes\mc K_2)$, $V_{12}(g)=V_1(g)\otimes V_2(g)$, $g\in G$. It follows easily that, whenever $\Psi\in{\bf Ch}_U^{V_{12}}$, then $\Psi_{(i)}\in{\bf Ch}_U^{V_i}$. However, in general, it does not hold that if $\Phi_i\in{\bf Ch}_U^{V_i}$, $i=1,\,2$, are compatible, they should have a joint channel $\Psi\in{\bf Ch}_U^{V_{12}}$. Next we will show however, that in varied situations a covariant joint channel exists.

\begin{proposition}\label{prop:covjoint}
Suppose that $\hil$ and $\mc K_i$, $i=1,\,2$, are separable Hilbert spaces, $G$ is an amenable locally compact group, and $U:G\to\mc U(\hil)$ and $V_i:G\to\mc U(\mc K_i)$, $i=1,\,2$, are strongly continuous projective unitary representations. Any two compatible covariant channels $\Phi_i\in{\bf Ch}_U^{V_i}$, $i=1,\,2$, have a joint channel $\ovl\Psi\in{\bf Ch}_U^{V_{12}}$.
\end{proposition}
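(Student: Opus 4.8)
The plan is to take an arbitrary joint channel and \emph{symmetrize} it over $G$ using the invariant mean furnished by amenability, and then to recover normality of the averaged map from Lemma \ref{lemma:jointnormal}. Fix a joint channel $\Psi\in{\bf Ch}(\hil,\mc K_1\otimes\mc K_2)$ for $\Phi_1$ and $\Phi_2$, and for $\fii,\psi\in\hil$ and $B\in\mc L(\mc K_1\otimes\mc K_2)$ set
\[
f_{\fii,\psi,B}(g)=\big\langle\fii\big|\alpha^U_g\big(\Psi(\alpha^{V_{12}}_{g^{-1}}(B))\big)\psi\big\rangle,\qquad g\in G.
\]
First I would check that $f_{\fii,\psi,B}\in L^\infty(G)$: it is bounded by $\|B\|\,\|\fii\|\,\|\psi\|$, and strong continuity of $U$, $V_1$, $V_2$ together with the coincidence of the weak operator and ultraweak topologies on norm-bounded sets (so that $g\mapsto\Psi\big(\alpha^{V_{12}}_{g^{-1}}(B)\big)$ is weakly continuous, by normality of $\Psi$) makes $g\mapsto f_{\fii,\psi,B}(g)$ continuous. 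Applying an invariant mean $f\mapsto\ovl f$, the assignment $(\fii,\psi)\mapsto\ovl{f_{\fii,\psi,B}}$ is sesquilinear (conjugate-linear in $\fii$, linear in $\psi$) and bounded by $\|B\|$, since $|\ovl f|\leq\|f\|_\infty$; hence it defines an operator $\ovl\Psi(B)\in\mc L(\hil)$, and linearity of the mean makes $B\mapsto\ovl\Psi(B)$ a linear map $\mc L(\mc K_1\otimes\mc K_2)\to\mc L(\hil)$.

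Next I would verify the algebraic properties of $\ovl\Psi$. Positivity of the mean, with positivity of $\Psi$ and of the adjoint actions, gives $\ovl\Psi\geq0$. For complete positivity, since every $\alpha^{V_{12}}_g$ is a $*$-homomorphism we have $\alpha^{V_{12}}_{g^{-1}}(B_i^*B_j)=\alpha^{V_{12}}_{g^{-1}}(B_i)^*\,\alpha^{V_{12}}_{g^{-1}}(B_j)$, so for any $B_1,\ldots,B_n$ and $\fii_1,\ldots,\fii_n$ the function $g\mapsto\sum_{i,j}f_{\fii_i,\fii_j,B_i^*B_j}(g)=\sum_{i,j}\<U(g)^*\fii_i|\Psi(\alpha^{V_{12}}_{g^{-1}}(B_i)^*\alpha^{V_{12}}_{g^{-1}}(B_j))U(g)^*\fii_j\>$ is pointwise nonnegative by complete positivity of $\Psi$, whence its mean $\sum_{i,j}\<\fii_i|\ovl\Psi(B_i^*B_j)\fii_j\>$ is nonnegative. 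Unitality holds since $\alpha^{V_{12}}_{g^{-1}}(\id)=\id$, $\Psi(\id)=\id$ and $\alpha^U_g(\id)=\id$ force $f_{\fii,\psi,\id}\equiv\<\fii|\psi\>$, so $\ovl\Psi(\id)=\id_\hil$. Covariance is the point of the whole averaging: because $g\mapsto\alpha^U_g$ and $g\mapsto\alpha^{V_{12}}_g$ are genuine group homomorphisms (the projective phases cancel in the adjoint actions), the change of variable $g\mapsto hg$ exhibits $f_{\fii,\psi,\alpha^{V_{12}}_h(B)}$ as a left translate of $g\mapsto\<U(h)^*\fii|\alpha^U_g(\Psi(\alpha^{V_{12}}_{g^{-1}}(B)))U(h)^*\psi\>$, and left-invariance of the mean gives $\ovl\Psi\big(\alpha^{V_{12}}_h(B)\big)=\alpha^U_h\big(\ovl\Psi(B)\big)$ for all $h\in G$.

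It remains to identify the margins and then to secure normality. For $A\in\mc L(\mc K_1)$ one has $\alpha^{V_{12}}_{g^{-1}}(A\otimes\id_{\mc K_2})=\alpha^{V_1}_{g^{-1}}(A)\otimes\id_{\mc K_2}$, hence $\Psi\big(\alpha^{V_{12}}_{g^{-1}}(A\otimes\id_{\mc K_2})\big)=\Phi_1\big(\alpha^{V_1}_{g^{-1}}(A)\big)=\alpha^U_{g^{-1}}\big(\Phi_1(A)\big)$ by $(U,V_1)$-covariance of $\Phi_1$; applying $\alpha^U_g$ makes $f_{\fii,\psi,A\otimes\id_{\mc K_2}}\equiv\<\fii|\Phi_1(A)\psi\>$, so $\ovl\Psi_{(1)}=\Phi_1$, and symmetrically $\ovl\Psi_{(2)}=\Phi_2$. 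Thus $\ovl\Psi$ is a unital completely positive linear map whose two margins are the channels $\Phi_1$ and $\Phi_2$, which in particular are normal; by Lemma \ref{lemma:jointnormal}, $\ovl\Psi$ is then normal, so $\ovl\Psi\in{\bf Ch}(\hil,\mc K_1\otimes\mc K_2)$, and combined with the covariance established above, $\ovl\Psi\in{\bf Ch}_U^{V_{12}}$ is the sought joint channel.

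I expect the main obstacle to be precisely this last step: averaging against an invariant mean, which is not ultraweakly continuous in general, need not preserve normality, so one cannot argue directly that $\ovl\Psi$ is a channel — this is exactly the gap filled by Lemma \ref{lemma:jointnormal}, which lets one deduce normality of $\ovl\Psi$ from that of its margins. A secondary technical point is verifying carefully that the integrands $f_{\fii,\psi,B}$ belong to $L^\infty(G)$, which is where strong continuity of the representations (and separability, to keep the attendant continuity and measurability arguments clean) is used.
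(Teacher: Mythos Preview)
Your proposal is correct and follows essentially the same strategy as the paper: average an arbitrary joint channel against an invariant mean to obtain a $(U,V_{12})$-covariant unital completely positive map with the right margins, then invoke Lemma~\ref{lemma:jointnormal} to recover normality. Your function $f_{\fii,\psi,B}$ is exactly the paper's $f_{B,\fii,\psi}$ (just written in terms of the adjoint actions), and your verification of complete positivity, covariance, and the margin identities mirrors the paper's almost step for step; the only minor difference is that you argue continuity of $f_{\fii,\psi,B}$ via normality of $\Psi$ and the coincidence of the weak and ultraweak topologies on bounded sets, whereas the paper is content with Haar measurability.
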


\begin{proof}
Let $\Phi_i\in{\bf Ch}_U^{V_i}$, $i=1,\,2$, be compatible and $\Psi\in{\bf Ch}(\hil,\mc K_1\otimes\mc K_2)$ be a joint channel for them which is not necessarily $(U,V_{12})$-covariant. Next, we shall `covariantize' $\Psi$.

For any $\fii,\,\psi\in\hil$ and $C\in\mc L(\mc K_1\otimes\mc K_2)$, define the function $f_{C,\fii,\psi}:G\to\C$,
$$
f_{C,\fii,\psi}(g)=\<U(g)^*\fii|\Psi\big(V_{12}(g)^*CV_{12}(g)\big)U(g)^*\psi\>,\qquad g\in G.
$$
It follows easily from the separability of the Hilbert spaces and the strong continuity of the representations that $f_{C,\fii,\psi}$ are Haar measurable for all $C\in\mc L(\mc K_1\otimes\mc K_2)$ and $\fii,\,\psi\in\hil$. Moreover,
$$
|f_{C,\fii,\psi}(g)|\leq\|\fii\|\|\psi\|\|C\|,\qquad C\in\mc L(\mc K_1\otimes\mc K_2),\quad\fii,\,\psi\in\hil,\quad g\in G.
$$
Thus $f_{C,\fii,\psi}\in L^\infty(G)$ for all $C\in\mc L(\mc K_1\otimes\mc K_2)$ and $\fii,\,\psi\in\hil$.

Pick an invariant mean $L^\infty(G)\ni f\mapsto\ovl f\in\C$. It can be easily checked that $\ovl{f_{C,\cdot,\cdot}}:\hil\times\hil\to\C$ is a bounded sesquilinear form for every $C\in\mc L(\mc K_1\otimes\mc K_2)$. Thus, there is a map $\ovl\Psi:\mc L(\mc K_1\otimes\mc K_2)\to\mc L(\hil)$ such that
$$
\ovl{f_{C,\fii,\psi}}=\<\fii|\ovl\Psi(C)\psi\>,\qquad C\in\mc L(\mc K_1\otimes\mc K_2),\quad\fii,\,\psi\in\hil.
$$
Reader may easily verify that $\ovl\Psi$ is linear and unital. A simple calculation shows that $f_{C,\fii,\psi}^g=f_{\alpha^{V_{12}}(C),U(g)\fii,U(g)\psi}$ for all $g\in G$, $C\in\mc L(\mc K_1\otimes\mc K_2)$, and $\fii,\,\psi\in\hil$. Using this and the defining feature of the invariant mean, we have
\begin{eqnarray*}
\<\fii|\ovl\Psi\big(V_{12}(g)CV_{12}(g)^*\big)\psi\>=\ovl{f_{\alpha^{V_{12}}(C),\fii,\psi}}=\ovl{f_{C,U(g)^*\fii,U(g)^*\psi}^g}&=&\ovl{f_{C,U(g)^*\fii,U(g)^*\psi}}\\
&=&\<\fii|U(g)\ovl\Psi(C)U(g)^*\psi\>
\end{eqnarray*}
for all $g\in G$, $\fii,\,\psi\in\hil$, and $C\in\mc L(\mc K_1\otimes\mc K_2)$, implying that
$$
\ovl\Psi\circ\alpha^{V_{12}}_g=\alpha^U_g\circ\ovl\Psi,\qquad g\in G.
$$

Fix $n\in\N$ and $\fii_1,\ldots,\,\fii_n\in\hil$ and $C_1,\ldots,\,C_n\in\mc L(\mc K_1\otimes\mc K_2)$. Using the complete positivity of $\Psi$, we have for any $g\in G$
$$
\sum_{i,j=1}^n f_{C_i^*C_j,\fii_i,\fii_j}(g)=\sum_{i,j=1}^n\<U(g)^*\fii_i|\Psi\big(V_{12}(g)^*C_i^*C_jV_{12}(g)\big)U(g)^*\fii_j\>\geq0.
$$
Using the positivity of the invariant mean, it follows that
$$
\sum_{i,j=1}^n\<\fii_i|\ovl\Psi(C_i^*C_j)\fii_j\>=\ovl{\sum_{i,j=1}^n f_{C_i^*C_j,\fii_i,\fii_j}}\geq0.
$$
Define the functions $f_{(1),A,\fii,\psi}\in L^\infty(G)$, $A\in\mc L(\mc K_1)$, $\fii,\,\psi\in\hil$,
$$
f_{(1),A,\fii,\psi}(g)=\<U(g)^*\fii|\Phi_1\big(V_1(g)^*AV_1(g)\big)U(g)^*\psi\>,\qquad g\in G.
$$
It follows from the covariance of $\Phi_1$ that $f_{(1),A,\fii,\psi}$ has the constant value $\<\fii|\Phi_1(A)\psi\>$. Thus, $\ovl{f_{A\otimes\id_{\mc K_2},\fii,\psi}}=\ovl{f_{(1),A,\fii,\psi}}=\<\fii|\Phi_1(A)\psi\>$ for all $A\in\mc L(\mc K_1)$ and $\fii,\,\psi\in\hil$. Hence, for all $\fii,\,\psi\in\hil$ and $A\in\mc L(\mc K_1)$,
$$
\<\fii|\ovl\Psi(A\otimes\id_{\mc K_2})\psi\>=\ovl{f_{A\otimes\id_{\mc K_2},\fii,\psi}}=\<\fii|\Phi_1(A)\psi\>.
$$
Similarly, $\ovl\Psi(\id_{\mc K_1}\otimes B)=\Phi_2(B)$ for all $B\in\mc L(\mc K_2)$. Thus, $\ovl\Psi$ is a covariant unital completely positive linear map whose margins (defined analogously to normal maps) coincide with $\Phi_1$ and $\Phi_2$. The normality of $\ovl\Psi$ now follows immediately from Lemma \ref{lemma:jointnormal}.
\end{proof}

Thus, under the conditions of Proposition \ref{prop:covjoint}, covariant channels are compatible if and only if they have a covariant joint channel. This greatly simplifies the compatibility conditions for covariant channels, as we will see in our examples.

\subsection{Dilation-theoretic conditions for compatibility}\label{subsec:compdilat}

Suppose that $G$ is a group and $U:G\to\mc U(\hil)$ and $V:G\to\mc U(\mc K)$ are projective unitary representations.

\begin{definition}
A triple $(\mc L,J,\ovl U)$ of a Hilbert space $\mc L$, an isometry $J:\hil\to\mc K\otimes\mc L$, and a projective unitary representation $\ovl U:G\to\mc U(\mc L)$ is a {\it covariant Stinespring dilation for a channel $\Phi\in{\bf Ch}_U^V$} if $\Phi(A)=J^*(A\otimes\id_{\mc L})J$, $A\in\mc L(\mc K)$, and $JU(g)=\big(V(g)\otimes\ovl U(g)\big)J$ for all $g\in G$. If additionally the vectors $(A\otimes\id_{\mc L})J\fii$, $A\in\mc L(\mc K)$, $\fii\in\hil$, span a dense subspace in $\mc K$, the covariant dilation $(\mc L,V,\ovl U)$ is called {\it minimal}.
\end{definition}

Any channel $\Phi\in{\bf Ch}_U^V$ possesses a covariant Stinespring dilation and among them there is a minimal one which is unique up to unitary equivalence \cite[Corollary 1]{HaPe2017}. Clearly, if $U$ is associated with the multiplier $m_U$ and $V$ with $m_V$, the projective representation $\ovl U$ in any of the covariant dilations is associated with the multiplier $(g,h)\mapsto m_U(g,h)\ovl{m_V(g,h)}$.

\begin{proposition}\label{prop:covdilat}
Retain the assumptions of Proposition \ref{prop:covjoint}. Let $(\mc L,J,\ovl U)$ be the (essentially unique) covariant minimal Stinespring dilation for $\Phi_1$. The channels $\Phi_1$ and $\Phi_2$ are compatible if and only if there is $\tilde{\Phi}_2\in{\bf Ch}_{\ovl U}^{V_2}$ such that
\begin{equation}\label{eq:covdilat}
\Phi_2(B)=J^*\big(\id_{\mc K_1}\otimes\tilde{\Phi}_2(B)\big)J,\qquad B\in\mc L(\mc K_2).
\end{equation}
\end{proposition}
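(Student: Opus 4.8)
The plan is to prove the two implications separately, the ``if'' direction being routine and the ``only if'' direction resting on the Radon--Nikodym theorem for completely positive maps.

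For the ``if'' part, given $\tilde\Phi_2\in{\bf Ch}_{\ovl U}^{V_2}$ satisfying \eqref{eq:covdilat}, I would form the tensor-product channel ${\rm id}_{\mc L(\mc K_1)}\otimes\tilde\Phi_2\in{\bf Ch}(\mc K_1\otimes\mc L,\mc K_1\otimes\mc K_2)$ (as in Remark \ref{rem:0resource}) and set $\Psi(C)=J^*\big(({\rm id}_{\mc L(\mc K_1)}\otimes\tilde\Phi_2)(C)\big)J$ for $C\in\mc L(\mc K_1\otimes\mc K_2)$. Being a composition of a channel with the normal completely positive map $X\mapsto J^*XJ$, and satisfying $\Psi(\id_{\mc K_1\otimes\mc K_2})=J^*J=\id_\hil$ since $J$ is an isometry, $\Psi$ is a channel; its first margin is $J^*\big(A\otimes\tilde\Phi_2(\id_{\mc K_2})\big)J=J^*(A\otimes\id_{\mc L})J=\Phi_1(A)$ and its second margin is $\Phi_2$ by \eqref{eq:covdilat}, so $\Phi_1$ and $\Phi_2$ are compatible. (One can also check, using $JU(g)=\big(V_1(g)\otimes\ovl U(g)\big)J$ and the covariance of $\tilde\Phi_2$, that this $\Psi$ actually lies in ${\bf Ch}_U^{V_{12}}$, but that is not needed here.)

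For the converse, I would start from a covariant joint channel $\ovl\Psi\in{\bf Ch}_U^{V_{12}}$, which exists by Proposition \ref{prop:covjoint}. For $0\le B\le\id_{\mc K_2}$, both $A\mapsto\ovl\Psi(A\otimes B)$ and $A\mapsto\Phi_1(A)-\ovl\Psi(A\otimes B)=\ovl\Psi\big(A\otimes(\id_{\mc K_2}-B)\big)$ are completely positive, so $A\mapsto\ovl\Psi(A\otimes B)$ is dominated by $\Phi_1$ in the order of completely positive maps. Since $(\mc L,J,\ovl U)$ is minimal, $A\mapsto A\otimes\id_{\mc L}$ together with $J$ is a minimal Stinespring representation of $\Phi_1$, and the Radon--Nikodym theorem for completely positive maps gives a unique $T_B$ in the commutant $\big(\mc L(\mc K_1)\otimes\id_{\mc L}\big)'=\id_{\mc K_1}\otimes\mc L(\mc L)$ with $0\le T_B\le\id$ and $\ovl\Psi(A\otimes B)=J^*T_B(A\otimes\id_{\mc L})J$; writing $T_B=\id_{\mc K_1}\otimes S_B$ gives $\ovl\Psi(A\otimes B)=J^*(A\otimes S_B)J$. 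Uniqueness of $T_B$ makes $B\mapsto S_B$ additive and positively homogeneous on the positive cone, hence it extends to a linear map $\tilde\Phi_2:\mc L(\mc K_2)\to\mc L(\mc L)$ with $\ovl\Psi(A\otimes B)=J^*\big(A\otimes\tilde\Phi_2(B)\big)J$ for all $A,B$; taking $A=\id_{\mc K_1}$ then yields \eqref{eq:covdilat}.

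It would then remain to verify $\tilde\Phi_2\in{\bf Ch}_{\ovl U}^{V_2}$, and at each step I would invoke the uniqueness in the minimal dilation, namely that $S\in\mc L(\mc L)$ is determined by the map $A\mapsto J^*(A\otimes S)J$. Unitality follows from $J^*\big(A\otimes S_{\id_{\mc K_2}}\big)J=\ovl\Psi(A\otimes\id_{\mc K_2})=\Phi_1(A)=J^*(A\otimes\id_{\mc L})J$. Complete positivity follows by inserting $C_i=A_i\otimes B_i$ into the positivity condition for $\ovl\Psi$, rewriting $A_i^*A_j\otimes\tilde\Phi_2(B_i^*B_j)=(A_i\otimes\id_{\mc L})^*\big(\id_{\mc K_1}\otimes\tilde\Phi_2(B_i^*B_j)\big)(A_j\otimes\id_{\mc L})$, using the density of the vectors $(A\otimes\id_{\mc L})J\fii$ in $\mc K_1\otimes\mc L$, and then specializing those vectors to $e\otimes\psi_i$ for a fixed unit vector $e\in\mc K_1$ to get $\sum_{i,j}\<\psi_i|\tilde\Phi_2(B_i^*B_j)\psi_j\>\ge0$. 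Normality follows by comparing $\sup_\alpha\ovl\Psi(A\otimes B_\alpha)$ with $\ovl\Psi(A\otimes B)$ for increasing nets $B_\alpha\nearrow B$ and using normality of $\ovl\Psi$ and of $X\mapsto J^*XJ$. Covariance follows from $\ovl\Psi\big(\alpha^{V_{12}}_g(A\otimes B)\big)=\alpha^U_g\big(\ovl\Psi(A\otimes B)\big)$ together with the identity $U(g)J^*=J^*\big(V_1(g)\otimes\ovl U(g)\big)$ obtained from $JU(g)=\big(V_1(g)\otimes\ovl U(g)\big)J$: after varying $A$ bijectively through $A\mapsto V_1(g)AV_1(g)^*$ and invoking uniqueness, this gives $\tilde\Phi_2\circ\alpha^{V_2}_g=\alpha^{\ovl U}_g\circ\tilde\Phi_2$. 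I expect the main obstacle to be setting up the Radon--Nikodym correspondence correctly---this is where minimality of the dilation is essential and where one must pin down that the commutant is exactly $\id_{\mc K_1}\otimes\mc L(\mc L)$; once this is in place, the remaining verifications are routine bookkeeping with the uniqueness property.
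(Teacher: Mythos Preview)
Your proof is correct and follows the same two-step skeleton as the paper: the ``if'' direction is identical (build $\Psi(A\otimes B)=J^*\big(A\otimes\tilde\Phi_2(B)\big)J$ and read off the margins), and for the ``only if'' direction both you and the paper begin by invoking Proposition~\ref{prop:covjoint} to obtain a covariant joint channel.

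The difference is in how $\tilde\Phi_2$ is extracted from that joint channel. The paper simply cites \cite[Proposition~1, Remark~2]{HaPe2017} as a black box, whereas you unfold the argument: you use the Radon--Nikodym theorem for completely positive maps relative to the minimal Stinespring dilation of $\Phi_1$, identify the commutant as $\id_{\mc K_1}\otimes\mc L(\mc L)$, and then verify unitality, complete positivity, normality, and covariance of the resulting $\tilde\Phi_2$ via the uniqueness property of the minimal dilation. This is almost certainly what the cited reference does, so your argument is not a genuinely different route but rather a self-contained expansion of the citation. The payoff is that your version stands on its own without the external dependency; the cost is only length. Your bookkeeping in each verification step (in particular the density argument for complete positivity and the conjugation identity for covariance) is sound.
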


\begin{proof}
If $\Phi_1$ and $\Phi_2$ are compatible, according to Proposition \ref{prop:covjoint}, they are margins of a covariant joint channel $\Psi\in{\bf Ch}_U^{V_{1,2}}$. The existence of the channel $\tilde{\Phi}_2$ of the claim now follow from \cite[Proposition 1, Remark 2]{HaPe2017}.

Suppose that the channel $\tilde{\Phi}_2\in{\bf Ch}_U^{\ovl U}$ of the claim exists. The map
$$
\mc L(\mc K_1)\times\mc L(\mc K_2)\ni(A,B)\mapsto A\otimes\tilde{\Phi}_2(B)\in\mc L(\mc K_1\otimes\mc L)
$$
uniquely extends into a channel defined on $\mc L(\mc K_1\otimes\mc K_2)$. Thus, we may define the channel $\Psi\in{\bf Ch}(\hil,\mc K_1\otimes\mc K_2)$,
$$
\Psi(A\otimes B)=J^*\big(A\otimes\tilde{\Phi}_2(B)\big)J,\qquad A\in\mc L(\mc K_1),\quad B\in\mc L(\mc K_2).
$$
It is immediate that $\Psi_{(1)}=\Phi_1$, $\Psi_{(2)}=\Phi_2$.
\end{proof}

Note that the above result is symmetric, i.e.,\ we may switch the roles of $\Phi_1$ and $\Phi_2$. One may also verify that the channel $\Psi$ in the latter part of the proof is actually $(U,V_{1,2})$-covariant, although it is not necessary for proving the claim.

\section{General phase space and Weyl covariance}\label{sec:gPhaseSpace}

In this treatise, quantum phase space is modelled on an Abelian locally compact group ${\bf X}$ describing the configuration space of the system and the dual group $\hat{\bf X}$ of characters (continuous group homomorphisms from ${\bf X}$ onto the torus $\T$, the set of modulus-1 complex numbers) describing the momentum space. The momentum space is also a locally compact Abelian group with respect to pointwise multiplication. In notation, we treat ${\bf X}$ as a multiplicative group (with neutral element $1$) and $\hat{\bf X}$ as an additive group (with neutral element $0$). The phase space itself is the group $G:={\bf X}\times\hat{\bf X}$ with the group law $(x,\xi)(y,\upsilon)=(xy,\xi+\upsilon)$. The neutral element of $G$ is thus $e=(1,0)$.

We must place a particular regularity condition on our phase space: we require that there be a continuous map $(\cdot|\cdot):{\bf X}\times\hat{\bf X}\to\R$ such that, for any $x\in{\bf X}$ and $\xi\in\hat{\bf X}$, $(x|\cdot):\hat{\bf X}\to\R$ and $(\cdot|\xi):{\bf X}\to\R$ are group homomorphisms and the dual pairing is given by
$$
\<x,\xi\>=e^{i(x|\xi)},\qquad x\in{\bf X},\quad\xi\in\hat{\bf X}.
$$
This means that a continuous version of the logarithm of the dual pairing can be given in the form of a `scalar product'. This allows us to define the symplectic form $S:G\times G\to\R$,
$$
S(g,h)=(x|\upsilon)-(y|\xi),\qquad g=(x,\xi)\in G,\quad h=(y,\upsilon)\in G.
$$

The (spin-0) particle associated with the phase space $G$ is described by the Hilbert space $\hil:=L^2({\bf X})$, the space of (equivalence classes of) functions $\fii:{\bf X}\to\C$ which are square-integrable with respect to a fixed (and hence any) Haar measure of ${\bf X}$. Translations of the configuration space are mirrored by the representation $U:{\bf X}\to\mc U(\hil)$, $\big(U(x)\fii\big)(y)=\fii(x^{-1}y)$, $x,\,y\in{\bf X}$, $\fii\in\hil$, and translations of the momentum space are represented by $V:\hat{\bf X}\to\mc U(\hil)$, $\big(V(\xi)\fii\big)(x)=\<x,\xi\>\fii(x)$, $x\in{\bf X}$, $\xi\in\hat{\bf X}$, $\fii\in\hil$. The phase space translations manifest themselves through the map $W:G\to\mc U(\hil)$,
$$
W(x,\xi)=e^{\frac{i}{2}(x|\xi)}U(x)V(\xi),\qquad(x,\xi)\in G.
$$
It follows that
$$
W(gh)=e^{\frac{i}{2}S(g,h)}W(g)W(h),\qquad g,\,h\in G,
$$
i.e.,\ $W$ is a projective unitary representation, the so-called {\it Weyl representation}, with the multiplier $(g,h)\mapsto e^{\frac{i}{2}S(g,h)}$. Moreover, we have the canonical commutation relations
\begin{equation}\label{eq:gWproperties}
W(g)^*=W(g^{-1}),\quad W(g)W(h)=e^{-iS(g,h)}W(h)W(g),\qquad g,\,h\in G.
\end{equation}

\begin{proposition}\label{prop:Wirr}
The Weyl representation $W$ is irreducible and operators $W(g)$, $g\in G$, span an ultraweakly dense operator system in $\mc L(\hil)$. Similarly, the projective representation $G\times G\ni (g,h)\mapsto W(g)\otimes W(h)\in\mc U(\hil\otimes\hil)$ is irreducible and the operators $W(g)\otimes W(h)$, $g,\,h\in G$ span an ultraweakly dense operator system in $\mc L(\hil\otimes\hil)$.
\end{proposition}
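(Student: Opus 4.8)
The plan is to obtain all four assertions from Schur's lemma, the bicommutant theorem, and the composition law of the Weyl operators, the only substantive input being two standard facts about the building-block representations $U$ and $V$ on $L^2({\bf X})$. \textbf{Irreducibility of $W$.} First I would reduce the claim to showing that the commutant $\{W(g):g\in G\}'$ is trivial. Since $W(x,0)=U(x)$ and $W(1,\xi)=V(\xi)$, while each $W(g)$ is a scalar multiple of $U(x)V(\xi)$, an operator $A\in\mc L(\hil)$ commutes with every $W(g)$ if and only if it commutes with every $U(x)$ and every $V(\xi)$. The first standard fact I would invoke is that the multiplication operators $V(\xi)=M_{\<\cdot,\xi\>}$ generate the maximal abelian von Neumann algebra $\{M_f:f\in L^\infty({\bf X})\}$ of all multiplication operators on $L^2({\bf X})$ --- a form of the spectral theorem, resting on the fact that the characters $\<\cdot,\xi\>$ separate the points of ${\bf X}$ (Pontryagin duality) together with the Plancherel theorem. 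Being maximal abelian, this algebra is its own commutant, so $A=M_f$ for some $f\in L^\infty({\bf X})$. Commuting with every translation $U(x)$ then forces $f(x^{-1}y)=f(y)$ for almost every $y$ and all $x$, and the second standard fact --- ergodicity of the translation action of ${\bf X}$ on itself with respect to Haar measure --- forces $f$ to be essentially constant. Hence $\{W(g):g\in G\}'=\C\id$ and $W$ is irreducible.

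\textbf{Ultraweak density of $\mathrm{span}\{W(g)\}$.} Here I would use the composition law $W(gh)=e^{\frac{i}{2}S(g,h)}W(g)W(h)$ together with $W(g)^*=W(g^{-1})$ from \eqref{eq:gWproperties} and $W(e)=\id$: these show that $\mc A:=\mathrm{span}\{W(g):g\in G\}$ is a unital $*$-subalgebra of $\mc L(\hil)$, in particular an operator system. Its ultraweak closure is then a von Neumann algebra, which by the bicommutant theorem equals $\mc A''=\{W(g):g\in G\}''=(\C\id)'=\mc L(\hil)$ by the previous step; thus $\mc A$ is ultraweakly dense in $\mc L(\hil)$.

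\textbf{The tensor-product statements.} For irreducibility of $(g,h)\mapsto W(g)\otimes W(h)$ I would note that this family contains both $\{W(g)\otimes\id:g\in G\}$ (take $h=e$) and $\{\id\otimes W(h):h\in G\}$ (take $g=e$), so its commutant is contained in the joint commutant of those two. Since $\mc A$ is ultraweakly dense in $\mc L(\hil)$, the von Neumann algebras generated by $\{W(g)\otimes\id\}$ and by $\{\id\otimes W(h)\}$ are $\mc L(\hil)\otimes\id$ and $\id\otimes\mc L(\hil)$, and their joint commutant is $(\mc L(\hil)\otimes\id)'\cap(\id\otimes\mc L(\hil))'=(\id\otimes\mc L(\hil))\cap(\mc L(\hil)\otimes\id)=\C(\id\otimes\id)$, using the standard identification of the commutant of $\mc L(\hil)\otimes\id$ inside $\mc L(\hil\otimes\hil)$. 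Hence $\{W(g)\otimes W(h):g,h\in G\}'=\C(\id\otimes\id)$. For ultraweak density I would simply repeat the argument of the preceding paragraph: adjoints and products of the operators $W(g)\otimes W(h)$ are again scalar multiples of operators of the same form and $W(e)\otimes W(e)=\id\otimes\id$, so their span is a unital $*$-algebra (operator system) whose ultraweak closure is its bicommutant $(\C(\id\otimes\id))'=\mc L(\hil\otimes\hil)$.

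\textbf{Where the work is.} Everything after the two standard facts cited in the first step is formal --- Schur's lemma, the bicommutant theorem, and the algebraic closedness of the Weyl operators under products and adjoints. The part I expect to need the most care is therefore precisely (i) that the $V(\xi)$ generate all of $L^\infty({\bf X})$ as multiplication operators, and (ii) that the only translation-invariant elements of $L^\infty({\bf X})$ are constants. For the concrete phase spaces treated later (e.g.\ ${\bf X}=\R^n$ or a finite abelian group, where these become finite-dimensional or elementary Fourier-analytic statements) both are straightforward; in the general locally compact abelian case one should either cite them outright or reduce (i) to Pontryagin duality plus Plancherel and (ii) to ergodicity of the Haar translation action.
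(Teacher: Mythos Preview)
Your argument is correct and follows essentially the same route as the paper: reduce irreducibility to triviality of the commutant, use that commuting with all $V(\xi)$ forces a multiplication operator and that commuting with all $U(x)$ then forces the multiplier to be constant, and finally invoke the bicommutant theorem for ultraweak density. The only cosmetic differences are that the paper cites Dixmier directly for the multiplication-operator step and uses a bare Fubini argument where you speak of ergodicity, and that the paper dismisses the tensor-product half in one line (``follows immediately from the claims concerning $W$'') whereas you spell out the commutant computation---your version is more explicit but not a different approach.
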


\begin{proof}
The claims for the latter representation follow immediately from the claims concerning $W$. Thus, we concentrate on the first half of the claim.

To see that $W$ is irreducible, let $B\in\mc L(\hil)$ be such that $BW(g)=W(g)B$ for all $g\in G$. Thus, especially, $BV(\xi)=V(\xi)B$ for all $\xi\in\hat{\bf X}$, implying \cite[Part II, Ch. 2, Sec. 5, Theorem 1]{Dix} that there is an essentially bounded measurable function $b:{\bf X}\to\C$ such that $(B\fii)(x)=b(x)\fii(x)$ for all $\fii\in\hil$ and $x\in{\bf X}$. Similarly, $BU(x)=U(x)B$ for all $x\in{\bf X}$, implying $b(x^{-1}y)=b(y)$ for a.a. $(x,y)\in{\bf X}^2$. Using the Fubini theorem, we easily see that $b$ is constant so that $B$ is a scalar multiple of the identity. Thus, according to the Shur orthogonality condition, $W$ is irreducible.

Denote by $\mc D'$ the commutant of $\mc D\subseteq\mc L(\hil)$, i.e.,\ the set of all operators $E\in\mc L(\hil)$ such that $DE=ED$ for all $D\in\mc D$ and by $\mc D'':=(\mc D')'$ the double commutant of $\mc D\subseteq\mc L(\hil)$. According to the von Neumann density theorem \cite[Part I, Ch. 3, Sec. 4, Theorem 2]{Dix}, the ultraweak closure of the linear hull of the range ${\rm ran}\,W$ of $W$ coincides with the double commutant $({\rm ran}\,W)''=(\C\id_\hil)'=\mc L(\hil)$.
\end{proof}

\begin{remark}\label{rem:neq0}
In the sequel, particularly in the proof of Lemma \ref{lemma:CCRChan}, it is important to guarantee that the supports of the characteristic functions (or Fourier transforms) of states (which are functions on the phase space) cover the whole phase space. Let us prove this. Define the function $\hat{S}:G\to\C$, $\hat{S}(g)=\tr{SW(g)}$, $g\in G$, for every $S\in\mc T(\hil)$. For any compact sets $K_1\subseteq{\bf X}$ and $K_2\subseteq\hat{\bf X}$ there is $S\in\mc T(\hil)$ such that $\hat{S}(g)\neq0$ for all $g\in K_1\times K_2$. To see this, pick compact sets $K_1\subseteq{\bf X}$ and $K_2\subseteq\hat{\bf X}$. Using the continuity of the form $(\cdot|\cdot)$ and the compactness of $K_1$ and $K_2$, it can be shown that, for any fixed (small) $\eps>0$, there are compact environments $\tilde{K}_1\subseteq{\bf X}$ and $\tilde{K}_2\in\hat{\bf X}$ of, respectively, $1$ and $0$ such that $(x|\upsilon-\xi)\in[-\eps,\eps]$ and $(y|\xi)\in[-\eps,\eps]$ for all $(x,\xi)\in\tilde{K}_1\times\tilde{K}_2$ and $(y,\upsilon)\in K_1\times K_2$. Fix a Haar measure $dx$ of ${\bf X}$ and the associated Plancherel measure $d\xi$ for $\hat{\bf X}$. Define $S=|\check{\chi}_{\tilde{K}_2}\>\<\chi_{\tilde{K}_1}|$, where $\chi_K$ is the characteristic function of a set $K$ and $\check{\chi}_{\tilde{K}_2}$ is the inverse Fourier transform
$$
\check{\chi}_{\tilde{K}_2}(x)=\int_{K_2}e^{-i(x|\xi)}\,d\xi,\qquad x\in{\bf X}.
$$
It follows that
$$
\hat{S}(y,\upsilon)=e^{\frac{i}{2}(y|\upsilon)}\int_{\tilde{K}_1}\int_{\tilde{K}_2}e^{i\big((x|\upsilon-\xi)+(y|\xi)\big)}\,d\xi\,dx\approx|\tilde{K}_1|\,|\tilde{K}_2|e^{\frac{i}{2}(y|\upsilon)}\neq0
$$
for all $(y,\upsilon)\in K_1\times K_2$, where $|K|$ is the size of the set $K$ with respect to the appropriate measure.
\end{remark}

%\begin{property}
%Define the function $\hat{S}:G\to\C$, $\hat{S}(g)=\tr{SW(g)}$, $g\in G$, for every $S\in\mc T(\hil)$ and denote the set of positive elements in $\mc T(\hil)$ by $\mc T(\hil)_+$. We say that a subgroup $N\leq G$ the phase space $G$ (and the associated Hilbert space) satisfy the property $N_{\neq0}$ if $\bigcap_{S\in\mc T(\hil)_+}\hat{S}^{-1}(\{0\})\subseteq G\setminus N$.
%\end{property}

\subsection{Weyl-covariant channels and their joinings}\label{subsec:Wjoin}

We are interested in channels $\Phi\in{\bf Ch}_W^W$, i.e.,\ channels that behave covariantly under phase-space translations and compatibility conditions of pairs of such channels. Let us define an additional representation $W^{\otimes2}:G\to\mc U(\hil\otimes\hil)$, $W^{\otimes2}(g)=W(g)\otimes W(g)$, $g\in G$. Since the Hilbert space $\hil$ is separable, the representation $W$ is strongly continuous, and, as a locally compact Abelian group, $G$ is amenable, Proposition \ref{prop:covjoint} implies that two channels $\Phi_i\in{\bf Ch}_W^W$, $i=1,\,2$, are compatible if and only if they are margins of a channel $\Psi\in{\bf Ch}_W^{W^{\otimes2}}$. Such covariant joint channels are good candidates for optimal cloning channels due to the irreducibility of $W$. Proposition \ref{prop:Wirr} implies that $\Phi\in{\bf Ch}(\hil,\hil)$ is fully characterized by the images $\Phi\big(W(g)\big)$, $g\in G$, and $\Psi\in{\bf Ch}(\hil,\hil\otimes\hil)$ is exhaustively determined by the images $\Psi\big(W(g)\otimes W(h)\big)$, $g,\,h\in G$. The following lemma slightly modifying well known results on channels on CCR-algebras \cite{DeVaVe77} will be useful in the sequel.

\begin{lemma}\label{lemma:CCRChan}
Let $G_i={\bf X}_i\times\hat{\bf X}_i$, $i=1,\,2$, be Abelian phase spaces (like the ones introduced above), $S_i:G_i\times G_i\to\R$ be the symplectic forms on $G_i$, $i=1,\,2$, $\hil_i=L^2({\bf X}_i)$, $i=1,\,2$, and $W_i:G_i\to\mc U(\hil_i)$ be the Weyl representations associated with $G_i$, $i=1,\,2$. Suppose that $T:G_2\to G_1$ is a continuous group homomorphism. A map $\Psi:\mc L(\hil_2)\to\mc L(\hil_1)$ defined through
\begin{equation}\label{eq:CCRChan}
\Psi\big(W_2(g)\big)=f(g)[W_1\circ T](g),\qquad g\in G_2,
\end{equation}
with a function $f:G_2\to\C$ is a channel if and only if $f$ is continuous, $f(e_2)=1$ ($e_2$ being the neutral element of $G_2$), and, for every $n\in N$ and all $g_1,\ldots,\,g_n\in G_2$,
\begin{equation}\label{eq:ineqCCRChan}
\bigg(e^{\frac{i}{2}\big[S_2(g_i,g_j)-S_1\big(T(g_i),T(g_j)\big)\big]}f(g_i^{-1}g_j)\bigg)_{i,j=1}^n\geq0.
\end{equation}
\end{lemma}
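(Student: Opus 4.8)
The plan is to characterize the conditions under which the prescription $\Psi(W_2(g)) = f(g)[W_1\circ T](g)$ extends to a channel in $\mathbf{Ch}(\hil_1,\hil_2)$, i.e.\ a normal unital completely positive map. The necessity and sufficiency of unitality is immediate: $\Psi(\id_{\hil_2}) = \Psi(W_2(e_2)) = f(e_2)W_1(T(e_2)) = f(e_2)\id_{\hil_1}$ since $T$ is a homomorphism, so unitality is equivalent to $f(e_2)=1$. The real content is to translate complete positivity (together with normality) into condition \eqref{eq:ineqCCRChan}, and continuity of $f$ into continuity of $\Psi$ in the appropriate sense.

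For necessity, suppose $\Psi$ is a channel. Using the commutation relations \eqref{eq:gWproperties} for $W_2$, one computes $W_2(g_i)^*W_2(g_j) = W_2(g_i^{-1})W_2(g_j) = e^{\frac{i}{2}S_2(g_i,g_j)}W_2(g_i^{-1}g_j)$ (tracking the multiplier $e^{\frac{i}{2}S_2}$ carefully). Applying $\Psi$ and using \eqref{eq:CCRChan} gives $\Psi(W_2(g_i)^*W_2(g_j)) = e^{\frac{i}{2}S_2(g_i,g_j)}f(g_i^{-1}g_j)[W_1\circ T](g_i^{-1}g_j)$, and since $T$ is a homomorphism and $W_1$ carries the multiplier $e^{\frac{i}{2}S_1}$, we can re-expand $[W_1\circ T](g_i^{-1}g_j)$ in terms of $W_1(T(g_i))^*W_1(T(g_j))$, picking up a factor $e^{-\frac{i}{2}S_1(T(g_i),T(g_j))}$. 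Then for any $\fii_1,\dots,\fii_n\in\hil_1$ and vectors $\eta_i := W_1(T(g_i))^*\fii_i$, complete positivity of $\Psi$ says $\sum_{i,j}\langle\fii_i|\Psi(W_2(g_i)^*W_2(g_j))\fii_j\rangle \ge 0$, which after substitution becomes $\sum_{i,j} e^{\frac{i}{2}[S_2(g_i,g_j)-S_1(T(g_i),T(g_j))]} f(g_i^{-1}g_j)\langle\eta_i|\eta_j\rangle \ge 0$. To conclude matrix positivity of $\big(e^{\frac{i}{2}[S_2(g_i,g_j)-S_1(T(g_i),T(g_j))]}f(g_i^{-1}g_j)\big)_{ij}$ one needs the Gram matrices $(\langle\eta_i|\eta_j\rangle)_{ij}$ to range over a positivity-determining set of positive matrices; this is where Remark \ref{rem:neq0} enters — by choosing states whose characteristic functions are nonvanishing on prescribed compact sets, one can realize (after a suitable limiting/averaging argument) enough Gram matrices, or equivalently test against a broad enough family of trace-class operators, to force \eqref{eq:ineqCCRChan}. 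Continuity of $f$ follows from normality of $\Psi$ together with strong continuity of $W_1$ and continuity of $T$, since $g\mapsto\tr{S\,\Psi(W_2(g))} = f(g)\tr{S\,[W_1\circ T](g)}$ is continuous for suitable $S$, and by Remark \ref{rem:neq0} we can pick $S$ making $\tr{S\,[W_1\circ T](g)}$ nonzero near any point.

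For sufficiency, assume $f$ is continuous with $f(e_2)=1$ and \eqref{eq:ineqCCRChan} holds. One defines $\Psi$ on the operator system spanned by $\{W_2(g)\}$ by \eqref{eq:CCRChan} and extends by linearity; this is well-defined because the $W_2(g)$ are linearly independent (indeed a basis in the relevant sense by Proposition \ref{prop:Wirr}). Unitality is the computation above. For complete positivity, one reverses the necessity computation: given $B_k = \sum_i c_{k,i} W_2(g_{k,i})$ and $\fii_k\in\hil_1$, the sum $\sum_{k,l}\langle\fii_k|\Psi(B_k^*B_l)\fii_l\rangle$ expands into a combination of terms $e^{\frac{i}{2}[S_2(g,g')-S_1(T(g),T(g'))]}f(g^{-1}g')\langle W_1(T(g))^*\fii|W_1(T(g'))^*\fii'\rangle$, which is nonnegative by taking the Schur (Hadamard) product of the positive matrix from \eqref{eq:ineqCCRChan} with the positive Gram matrix $(\langle W_1(T(g))^*\fii|W_1(T(g'))^*\fii'\rangle)$ and using that the Hadamard product of positive semidefinite matrices is positive semidefinite. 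One then must extend $\Psi$ from the (ultraweakly dense) operator system to all of $\mc L(\hil_2)$ as a normal completely positive map; here continuity of $f$ is used to control the extension — concretely, one can realize $\Psi$ via a Stinespring-type construction from the positive-definite kernel, or invoke the cited results on CCR-channels \cite{DeVaVe77}, and then Lemma \ref{lemma:jointnormal}-style arguments or direct estimates give normality. Unitality guarantees the extension is a channel rather than merely a completely positive contraction.

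The main obstacle I anticipate is the passage between the \emph{finite} positivity condition \eqref{eq:ineqCCRChan} and the \emph{operator-level} complete positivity plus normality of the extended map $\Psi$ on all of $\mc L(\hil_2)$ — i.e.\ making rigorous that testing against the Gram matrices coming from the states constructed in Remark \ref{rem:neq0} is \emph{sufficient} (for necessity) and that the densely-defined positive map extends normally (for sufficiency). The algebra with the symplectic cocycles $e^{\frac{i}{2}S_2}$ and $e^{\frac{i}{2}S_1}$ is routine but error-prone and must be done with consistent sign conventions matching \eqref{eq:gWproperties}; by contrast, the Hadamard-product argument for sufficiency of complete positivity on the operator system is clean. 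Since the lemma is billed as a modification of known CCR-channel results \cite{DeVaVe77}, I expect much of the analytic heavy lifting (well-definedness and normal extension) to be quotable, with the novelty confined to accommodating the homomorphism $T$ and the mismatch between $S_2$ and $S_1\circ(T\times T)$.
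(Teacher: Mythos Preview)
Your outline is essentially correct and follows the same route as the paper: expand $\Psi(W_2(g_i)^*W_2(g_j))$ using the Weyl CCR, reduce complete positivity on the span of the $W_2(g)$'s to the matrix inequality \eqref{eq:ineqCCRChan}, and treat continuity of $f$ via trace-class operators with nonvanishing characteristic function. Two points where you diverge from or are vaguer than the paper are worth noting.

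First, for the \emph{necessity} of \eqref{eq:ineqCCRChan} you invoke Remark~\ref{rem:neq0} to guarantee that the Gram matrices $(\langle\eta_i|\eta_j\rangle)_{i,j}$ form a positivity-determining set. This is unnecessary: since the $\fii_i$ are arbitrary and each $W_1(T(g_i))$ is unitary, the $\eta_i=W_1(T(g_i))^*\fii_i$ are arbitrary vectors in $\hil_1$, so their Gram matrix runs over \emph{all} positive semidefinite $n\times n$ matrices, and the Schur-product duality forces \eqref{eq:ineqCCRChan} directly. The paper does not use Remark~\ref{rem:neq0} here; it reserves that remark solely for the continuity argument (which you also identify correctly).

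Second, for normality in the \emph{sufficiency} direction the paper does not extend a densely defined map or build a Stinespring dilation. Instead it argues via the quantum Bochner theorem \cite[Proposition 3.4]{Werner84}: for each positive $S\in\mc T(\hil_1)$, the function $F_S(g)=\tr{S\Psi(W_2(g))}=f(g)\,[\hat S\circ T](g)$ is shown to be continuous (from continuity of $f$) and to satisfy the $W_2$-twisted positive-definiteness condition (this reduces to \eqref{eq:ineqCCRChan} Schur-multiplied by the positive matrix with entries $[\hat S\circ T](g_i^{-1}g_j)e^{\frac{i}{2}S_1(T(g_i),T(g_j))}$). Hence $F_S$ is the characteristic function of some positive $T_S\in\mc T(\hil_2)$, which exhibits the predual $\Psi_*$ and gives normality in one stroke. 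Your proposed route via Stinespring or Lemma~\ref{lemma:jointnormal}-type estimates would also work, but the Bochner argument is more direct and avoids any separate extension step.
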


\begin{proof}
Since $W(g)$, $g\in G_2$, are linearly independent and they span an ultraweakly dense operator system in $\mc L(\hil_2)$, Equation \eqref{eq:CCRChan} defines a linear map. It remains to pinpoint the necessary and sufficient conditions for complete positivity, unitality, and normality the function $f$ has to satisfy.

Pick $n,\,m\in\N$, $\gamma_{i,r}\in\C$, $g_{i,r}\in G_2$, $r=1,\ldots,\,m$, $C_i=\sum_{r=1}^m\gamma_{i,r}W_2(g_{i,r})$, $i=1,\ldots,\,n$. Due to normality, complete positivity of $\Psi$ is fully determined by the conditions
\begin{eqnarray*}
0&\leq&\sum_{i,j=1}^n\<\fii_i|\Psi(C_i^*C_j)\fii_j\>=\sum_{i,j=1}^n\sum_{r,s=1}^m\ovl{\gamma_{i,r}}\gamma_{j,s}\<\fii_i|\Psi\big(W_2(g_{i,r})^*W_2(g_{j,s})\big)\fii_j\>\\
&=&\sum_{i,j=1}^n\sum_{r,s=1}^m\ovl{\gamma_{i,r}}\gamma_{j,s}e^{\frac{i}{2}S_2(g_{i,r},g_{j,s})}f(g_{i,r}^{-1}g_{j,s})\<\fii_i|[W_1\circ T](g_{i,r}^{-1}g_{j,s})\fii_j\>\\
&=&\sum_{i,j=1}^n\sum_{r,s=1}^m\ovl{\gamma_{i,r}}\gamma_{j,s}e^{\frac{i}{2}\big[S_2(g_{i,r},g_{j,s})-S_1\big(T(g_{i,r}),T(g_{j,s})\big)\big]}f(g_{i,r}^{-1}g_{j,s})\<[W_1\circ T](g_{i,r})\fii_i|[W_1\circ T](g_{j,s})\fii_j\>.
\end{eqnarray*}
Careful reading reveals that these conditions equal the inequality \eqref{eq:ineqCCRChan}.

For normality, let us define the Fourier transform \cite{Werner84} $\hat{T}:G_2\to\C$ of a trace-class operator $T\in\mc T(\hil_2)$, $\hat{T}(g)=\tr{TW_2(g)}$, $g\in G$. A counterpart of the Bochner theorem holds \cite[Proposition 3.4]{Werner84}: a function $F:G_2\to\C$ is a Fourier transform of a positive trace-class operator $T\in\mc T(\hil_2)$ if and only if $F$ is continuous and, for every $n\in\N$ and all $g_1,\ldots,\,g_n\in G_2$,
\begin{equation}\label{eq:BochnerPositive}
\bigg(e^{\frac{i}{2}S_2(g_i,g_j)}F(g_i^{-1}g_j)\bigg)_{i,j=1}^n\geq0.
\end{equation}

Normality of $\Psi$ is equivalent with the function $F_S:\,G_2\ni g\mapsto\tr{S\Psi\big(W_2(g)\big)}\in\C$ being the Fourier transform of some positive $T\in\mc T(\hil_2)$ for every positive $S\in\mc T(\hil_1)$. The latter condition of the above Bochner theorem is easily seen to reduce to the property of the inequality \eqref{eq:ineqCCRChan}. It thus remains to require that $F_S$ be continuous for every (positive) $S\in\mc T(\hil_1)$. The result of Remark \ref{rem:neq0} can be used to find that, for any compact $K_1\subseteq{\bf X}_1$ and $K_2\subseteq\hat{\bf X}_1$ there is $S\in\mc T(\hil_1)$ such that $\hat{S}$ is continuous and and  non-zero on $K_1\times K_2$. Since
$$
F_S(g)=f(g)[\hat{S}\circ T](g),\qquad g\in G_2,
$$
and $\hat{S}\circ T$ is continuous for every $S\in\mc T(\hil_1)$, the above (together with the fact that $G$ is locally compact) tells us that $f$ has to be continuous. If $f$ is continuous, the first condition of the modified Bochner theorem is automatically satisfied for any $F_S$, $S\in\mc T(\hil_1)$. 
%$f$ has to be continuous on the open set $G_2\setminus(\hat{S}\circ T)^{-1}(\{0\})$ for every positive $S\in\mc T(\hil_1)$. Since $N=T(G_2)$ satisfies the property $N_{\neq0}$, this means that $f$ has to be continuous. Clearly, $\Psi$ is unital if and only if $f(e_2)=1$.
\end{proof}

Let us continue considering a single phase space $G$ and the associated structures $S$, $\hil$, and $W$. Recall that a function $f:G\to\C$ is of positive type if, for any $n\in\N$ and all $g_1,\ldots,\,g_n\in G$,
$$
\big(f(g_i^{-1}g_j)\big)_{i,j=1}^n\geq0.
$$
Denote by $C_+^0(G)$ the set of continuous functions $f:G\to\C$ of positive type with $f(e)=1$. Channels $\Phi\in{\bf Ch}_W^W$ have been completely characterized \cite{DaFuHo2006,Holevo2005}: For any $\Phi\in{\bf Ch}_W^W$ there is a unique $f_\Phi\in C_+^0(G)$ such that $\Phi\big(W(g)\big)=f_\Phi(g)W(g)$ for all $g\in G$ and the map ${\bf Ch}_W^W\ni\Phi\mapsto f_\Phi\in C_+^0(G)$ is bijective. Let $\Phi\in{\bf Ch}_W^W$ and $\mu_\Phi$ be the probability measure on the Borel $\sigma$-algebra of $G$ whose Fourier transform coincides with $f_\Phi$, i.e.,\ $f_\Phi(g)=\int_G e^{-iS(g,h)}\,d\mu_\Phi(h)$. It follows that, for any $g\in G$,
\begin{eqnarray*}
\Phi\big(W(g)\big)=f_\Phi(g)W(g)=\int_G e^{-iS(g,h)}\,d\mu_\Phi(h)W(g)&=&\int_G e^{-iS(g,h)}W(h)^*W(h)W(g)\,d\mu_\Phi(h)\\
&=&\int_G W(h)^*W(g)W(h)\,d\mu_\Phi(h).
\end{eqnarray*}
Thus, $\Phi$ is also characterized by
$$
\Phi(B)=\int_G W(g)^*BW(g)\,d\mu_\Phi(g),\qquad B\in\mc L(\hil).
$$

As already noted, the compatibility properties of pairs within ${\bf Ch}_W^W\times{\bf Ch}_W^W$ crucially depend upon the set ${\bf Ch}_W^{W^{\otimes2}}$ the characterization task of which we next undertake. We will see that, as in the case of ${\bf Ch}_W^W$, channels are described by certain functions, also ${\bf Ch}_W^{W^{\otimes2}}$ is closely associated with a particular set functions.

\begin{definition}
We denote by $C_S^0(G\times G)$ the set of continuous functions $f:G\times G\to\C$ with $f(e,e)=1$ such that, for any $n\in\N$ and all $g_1,\ldots,\,g_n,\,h_1,\ldots,\,h_n\in G$,
\begin{equation}\label{eq:gskewpositive}
\bigg(e^{-\frac{i}{2}\big(S(g_i,h_j)+S(h_i,g_j)\big)}f(g_i^{-1}g_j,h_i^{-1}h_j)\bigg)_{i,j=1}^n\geq0.
\end{equation}
\end{definition}

\begin{remark}\label{rem:gskewpositive}
We may make a couple of observations on the functions $f\in C_S^0(G\times G)$. Let us first show that any $f\in C_S^0(G\times G)$ is of positive type. Pick $f\in C_S^0(G\times G)$, $n\in\N$, and $g_i,\,h_i\in G$, $i=1,\ldots,\,n$. Make the substitution $g_i\to g_i^{-1}$, $h_i\to h_i^{-1}$, $i=1,\ldots,\,n$ in Equation \eqref{eq:gskewpositive}. After the change $i\leftrightarrow j$ in indexation, we obtain
$$
\bigg(e^{\frac{i}{2}\big(S(g_i,h_j)+S(h_i,g_j)\big)}f(g_i^{-1}g_j,h_i^{-1}h_j)\bigg)_{i,j=1}^n\geq0.
$$
When $(a_{i,j})_{i,j=1}^n$ and $(b_{i,j})_{i,j=1}^n$ are positive definite, also the entrywise product $(a_{i,j}b_{i,j})_{i,j=1}^n$ is positive. Thus Equation \eqref{eq:gskewpositive} and the equation above imply that $\big(f(g_i^{-1}g_j,h_i^{-1}h_j)\big)_{i,j=1}^n\geq0$. Thus, we see that $f$ is of positive type. We also immediately observe from Equation \eqref{eq:gskewpositive} that, for any $f\in C_S^0(G\times G)$, $f(\cdot,e),\,f(e,\cdot)\in C_+^0(G)$.

Let us also note that the function $f^S:G\times G\to\C$, $f^S(g,h)=e^{-\frac{i}{2}S(g,h)}$, $g,\,h\in G$, is not in $C_S^0(G\times G)$. This is a fact of great importance as we will see later on. To see this, pick $g,\,h\in G\setminus\{e\}$ such that $g\neq h$ and fix $g_1=g$, $g_2=e$, $h_1=h$, and $h_2=e$. Plugging these into the defining inequality \eqref{eq:gskewpositive} in the case $n=2$, the matrix on the LHS of said inequality becomes
$$
\left(\begin{array}{cc}
1&e^{-\frac{i}{2}S(g,h)}\\
e^{-\frac{i}{2}S(g,h)}&1
\end{array}\right)\not\geq0,
$$
showing that $f^S\notin C_S^0(G\times G)$. In fact, the above inequality also shows that $f^S$ is not positive either.
\end{remark}

\begin{theorem}\label{theor:gWcovjointchar}
For any $\Psi\in{\bf Ch}_W^{W^{\otimes2}}$ there is a unique $f_\Psi\in C_S^0(G\times G)$ such that
\begin{equation}\label{eq:gWcovjointchar}
\Psi\big(W(g)\otimes W(h)\big)=f_\Psi(g,h)W(gh),\qquad g,\,h\in G.
\end{equation}
Moreover the map ${\bf Ch}_W^{W^{\otimes2}}\ni\Psi\mapsto f_\Psi\in C_S^0(G\times G)$ is bijective.
\end{theorem}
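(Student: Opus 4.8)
The plan is to reduce the statement to Lemma~\ref{lemma:CCRChan}, using covariance and the irreducibility of $W$ to first pin down the ansatz \eqref{eq:gWcovjointchar}. First I would take an arbitrary $\Psi\in{\bf Ch}_W^{W^{\otimes2}}$ and, using the canonical commutation relations \eqref{eq:gWproperties} together with the biadditivity of $S$ (a consequence of $(x|\cdot)$ and $(\cdot|\xi)$ being homomorphisms), record the identity
$$W^{\otimes2}(k)\big(W(g)\otimes W(h)\big)W^{\otimes2}(k)^*=e^{-iS(k,gh)}\,W(g)\otimes W(h),\qquad g,h,k\in G.$$
Applying $\Psi$ and invoking covariance yields $W(k)\Psi\big(W(g)\otimes W(h)\big)W(k)^*=e^{-iS(k,gh)}\Psi\big(W(g)\otimes W(h)\big)$ for all $k$; since also $W(k)W(gh)^*W(k)^*=e^{iS(k,gh)}W(gh)^*$, the operator $W(gh)^*\Psi\big(W(g)\otimes W(h)\big)$ commutes with every $W(k)$, hence by Proposition~\ref{prop:Wirr} (Schur) equals $f_\Psi(g,h)\id_\hil$ for a scalar $f_\Psi(g,h)$ that is uniquely determined since $W(gh)\neq0$. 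This gives \eqref{eq:gWcovjointchar}. Injectivity of $\Psi\mapsto f_\Psi$ is then immediate: the operators $W(g)\otimes W(h)$ span an ultraweakly dense operator system (Proposition~\ref{prop:Wirr}) and $\Psi$ is normal, so $\Psi$ is determined by its values on them.

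Next I would match the data with Lemma~\ref{lemma:CCRChan}. Under the identification $\hil\otimes\hil=L^2({\bf X})\otimes L^2({\bf X})\cong L^2({\bf X}\times{\bf X})$, the representation $W\otimes W$ is exactly the Weyl representation $W_2$ of the product phase space $G_2:=({\bf X}\times{\bf X})\times(\hat{\bf X}\times\hat{\bf X})\cong G\times G$, whose symplectic form is $S_2\big((g_1,h_1),(g_2,h_2)\big)=S(g_1,g_2)+S(h_1,h_2)$. Taking $G_1:=G$, $\hil_1:=\hil$, $W_1:=W$, and the continuous group homomorphism $T:G\times G\to G$, $T(g,h)=gh$ (a homomorphism because $G$ is Abelian), we have $W_1\circ T=(g,h)\mapsto W(gh)$, so \eqref{eq:gWcovjointchar} is precisely \eqref{eq:CCRChan} for this choice. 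The biadditive expansion $S(g_ih_i,g_jh_j)=S(g_i,g_j)+S(g_i,h_j)+S(h_i,g_j)+S(h_i,h_j)$ gives
$$S_2\big((g_i,h_i),(g_j,h_j)\big)-S_1\big(T(g_i,h_i),T(g_j,h_j)\big)=-\big(S(g_i,h_j)+S(h_i,g_j)\big),$$
so the positivity requirement \eqref{eq:ineqCCRChan} turns into exactly \eqref{eq:gskewpositive}, and the condition $f(e_2)=1$ reads $f(e,e)=1$. Hence Lemma~\ref{lemma:CCRChan} tells us that a map defined by \eqref{eq:gWcovjointchar} from a function $f$ is a channel if and only if $f\in C_S^0(G\times G)$; applied to the $\Psi$ and $f_\Psi$ of the previous paragraph this shows $f_\Psi\in C_S^0(G\times G)$, so the map $\Psi\mapsto f_\Psi$ is well defined.

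Finally I would check that a channel $\Psi$ built in this way from some $f\in C_S^0(G\times G)$ is in fact $(W,W^{\otimes2})$-covariant, which is what is needed for surjectivity of $\Psi\mapsto f_\Psi$. From the displayed commutation identity above, $\Psi\circ\alpha^{W^{\otimes2}}_k$ and $\alpha^W_k\circ\Psi$ agree on each $W(g)\otimes W(h)$, hence on the ultraweakly dense operator system these span; both maps being normal, they coincide, so $\Psi\in{\bf Ch}_W^{W^{\otimes2}}$, and $f_\Psi=f$ by construction. Together with the injectivity already established, this proves that $\Psi\mapsto f_\Psi$ is a bijection onto $C_S^0(G\times G)$, and the theorem follows.

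The work here is bookkeeping rather than conceptual, and the main points of care will be: correctly identifying the tensor-product Weyl system $W\otimes W$ with the Weyl system of the product phase space (tracking the multiplier $e^{\frac{i}{2}S_2}$ and the reordering of the ${\bf X}$- and $\hat{\bf X}$-coordinates), and making sure the biadditive expansion produces exactly the phase $e^{-\frac{i}{2}(S(g_i,h_j)+S(h_i,g_j))}$ appearing in \eqref{eq:gskewpositive}. One should also note that $T(g,h)=gh$ is only surjective, not injective, but Lemma~\ref{lemma:CCRChan} imposes no injectivity assumption on $T$, so this causes no difficulty.
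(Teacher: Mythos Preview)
Your proposal is correct and follows essentially the same route as the paper's proof: first use covariance together with the CCR and irreducibility (Proposition~\ref{prop:Wirr}) to see that $W(gh)^*\Psi\big(W(g)\otimes W(h)\big)$ is scalar, then identify $G\times G$ with the product phase space ${\bf X}^2\times\hat{\bf X}^2$ and apply Lemma~\ref{lemma:CCRChan} with $T(g,h)=gh$. If anything, you are slightly more explicit than the paper in spelling out the symplectic-form bookkeeping that turns \eqref{eq:ineqCCRChan} into \eqref{eq:gskewpositive} and in verifying covariance for the surjectivity direction, both of which the paper leaves to the reader.
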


\begin{proof}
Pick $\Psi\in{\bf Ch}_W^{W^{\otimes2}}$. Fix $g,\,g',\,h\in G$. Using Equation \eqref{eq:gWproperties} and the $(W,W^{\otimes2})$-covariance of $\Psi$, we have
\begin{eqnarray*}
&&W(g')W(gh)^*\Psi\big(W(g)\otimes W(h)\big)\\
&=&e^{\frac{i}{2}S(g',gh)}W(gh)^*W(g')\Psi\big(W(g)\otimes W(h)\big)W(g')^*W(g')\\
&=&e^{\frac{i}{2}S(g',gh)}W(gh)^*\Psi\big(W(g')W(g)W(g')^*\otimes W(g')W(h)W(g')^*\big)W(g')\\
&=&W(gh)^*\Psi\big(W(g)\otimes W(h)\big)W(g').
\end{eqnarray*}
Thus, for all $g,\,h\in G$, the operator $W(gh)^*\Psi\big(W(g)\otimes W(h)\big)$ commutes with the representation $W$ and is thus a scalar multiple of the identity operator. Hence, there is a function $f_\Psi:G\times G\to\C$ such that Equation \eqref{eq:gWcovjointchar} holds.

The claim now follows from Lemma \ref{lemma:CCRChan} upon noticing that $G\times G$ can be identified with the product phase space ${\bf X}^2\times\hat{\bf X}^2$ and defining the homomorphism $T:G\times G\to G$, $T(g,h)=gh$, $g,\,h\in G$.
\end{proof}

\begin{remark}
Recall the function $f^S:\,(g,h)\mapsto e^{-\frac{i}{2}S(g,h)}$ of Remark \ref{rem:gskewpositive}. We have already seen that $f^S\notin C_S^0(G\times G)$. Indeed, if this was not the case, there would be a channel $\Psi\in{\bf Ch}_W^{W^{\otimes2}}$,
$$
\Psi\big(W(g)\otimes W(h)\big)=f^S(g,h)W(gh)=W(g)W(h),\qquad g,\,h\in G.
$$
Since operators $W(g)\otimes W(h)$, $g,\,h\in G$, span an ultraweakly dense operator system, it follows that this channel would be the perfect cloner $A\otimes B\mapsto AB$, which is impossible.
\end{remark}

\subsection{Compatibility conditions for Weyl-covariant channels}\label{subsec:Wcompdilat}

We go on to give strict characterizations of compatibility of Weyl-covariant channels. Recalling Remark \ref{rem:0resource}, we can make a simple observation.

\begin{remark}\label{rem:translation}
Pick $\Phi_1,\,\Phi_2\in{\bf Ch}_W^W$ and let $\mu_1$ and $\mu_2$ be the Borel probability measures on $G$ such that
$$
\Phi_i(B)=\int_G W(g)^*BW(g)\,d\mu_i(g),\qquad B\in\mc L(\hil),\quad i=1,\,2.
$$
Equivalently, we may consider the continuous functions $f_i:G\to\C$ of positive type with $f_i(e)=1$, $i=1,\,2$, such that $\Phi_i\big(W(g)\big)=f_i(g)W(g)$, $g\in G$, $i=1,\,2$. These functions are Fourier transforms of the preceding measures, i.e.,
$$
f_i(g)=\int_G e^{-iS(g,h)}\,d\mu_i(h),\qquad g\in G,\quad i=1,\,2.
$$

Let $k_1,\,k_2\in G$. {\it The channels $\Phi_1$ and $\Phi_2$ are compatible if and only if the channels $\Phi_1^{k_1}$ and $\Phi_2^{k_2}$ associated, respectively, with the measures $\mu_1^{k_1}$ and $\mu_2^{k_2}$, where $\mu_i^{k_i}(X)=\mu_i(k_i^{-1}X)$ for all Borel sets $X\subseteq G$, $i=1,\,2$, are compatible.} The latter channels are equivalently associated with the functions $f_i^{k_i}$, $f_i^{k_i}(g)=e^{-iS(g,k_i)}f_i(g)$, $g\in G$, $i=1,\,2$. To see this, define $\Delta_1,\,\Delta_2\in{\bf Ch}_W^W$, $\Delta_i(B)=W(k_i)^*BW(k_i)$, $B\in\mc L(\hil)$, $i=1,\,2$. It is easily checked that $\Phi_i^{k_i}=\Phi_i\circ\Delta_i$, $i=1,\,2$. Thus, $\Phi_1^{k_1}$ and $\Phi_2^{k_2}$ are post-processings of $\Phi_1$ and $\Phi_2$. Since $\Delta_1$ and $\Delta_2$ are unitary, $\Phi_1$ and $\Phi_2$ are also post-processings of $\Phi_1^{k_1}$ and $\Phi_2^{k_2}$. The claim now follows from Remark \ref{rem:0resource}.
\end{remark}

The first compatibility characterization for Weyl-covariant channels follows directly from Proposition \ref{prop:covjoint} and Theorem \ref{theor:gWcovjointchar}.

\begin{corollary}
Let $\Phi_i\in{\bf Ch}_W^W$, $i=1,\,2$, and let $f_i:G\to\C$ be the corresponding continuous functions of positive type with $f_i(e)=1$ such that $\Phi_i\big(W(g)\big)=f_i(g)W(g)$ for all $g\in G$, $i=1,\,2$. The channels $\Phi_1$ and $\Phi_2$ are compatible if and only if there is a function $f\in C_S^0(G\times G)$ such that
$$
f_1(g)=f(g,e),\quad f_2(g)=f(e,g),\qquad g\in G.
$$
\end{corollary}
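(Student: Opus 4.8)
The plan is to deduce this corollary directly from the two main results already established: Proposition~\ref{prop:covjoint} (compatible covariant channels admit a covariant joint channel) and Theorem~\ref{theor:gWcovjointchar} (the bijective parametrization of ${\bf Ch}_W^{W^{\otimes2}}$ by functions in $C_S^0(G\times G)$). The two directions of the equivalence are then almost immediate, so the work is really just assembling the pieces and checking that the margin relations translate correctly into the conditions $f_1(g)=f(g,e)$ and $f_2(g)=f(e,g)$.

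For the ``only if'' direction, suppose $\Phi_1$ and $\Phi_2$ are compatible. Since $\hil=L^2({\bf X})$ is separable, $W$ is strongly continuous, and $G={\bf X}\times\hat{\bf X}$ is a locally compact abelian (hence amenable) group, Proposition~\ref{prop:covjoint} applies with $U=W$ and $V_1=V_2=W$ (so $V_{12}=W^{\otimes2}$), giving a joint channel $\Psi\in{\bf Ch}_W^{W^{\otimes2}}$ with $\Psi_{(1)}=\Phi_1$, $\Psi_{(2)}=\Phi_2$. By Theorem~\ref{theor:gWcovjointchar} there is $f=f_\Psi\in C_S^0(G\times G)$ with $\Psi\big(W(g)\otimes W(h)\big)=f(g,h)W(gh)$. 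Now I compute the margins on the Weyl operators: taking $h=e$ gives $\Phi_1\big(W(g)\big)=\Psi\big(W(g)\otimes\id\big)=\Psi\big(W(g)\otimes W(e)\big)=f(g,e)W(g)$, and taking $g=e$ gives $\Phi_2\big(W(g)\big)=f(e,g)W(g)$. By the uniqueness of the parametrizing function for channels in ${\bf Ch}_W^W$ (the bijection ${\bf Ch}_W^W\ni\Phi\mapsto f_\Phi\in C_+^0(G)$ cited from \cite{DaFuHo2006,Holevo2005}), this forces $f_1(g)=f(g,e)$ and $f_2(g)=f(e,g)$ for all $g\in G$.

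For the ``if'' direction, suppose $f\in C_S^0(G\times G)$ satisfies $f(g,e)=f_1(g)$ and $f(e,g)=f_2(g)$. By Theorem~\ref{theor:gWcovjointchar} the function $f$ corresponds to a channel $\Psi\in{\bf Ch}_W^{W^{\otimes2}}$ via Equation~\eqref{eq:gWcovjointchar}. The same computation of margins as above shows $\Psi_{(1)}\big(W(g)\big)=f(g,e)W(g)=f_1(g)W(g)=\Phi_1\big(W(g)\big)$ and $\Psi_{(2)}\big(W(g)\big)=f_2(g)W(g)=\Phi_2\big(W(g)\big)$. Since by Proposition~\ref{prop:Wirr} the operators $W(g)$, $g\in G$, span an ultraweakly dense operator system in $\mc L(\hil)$ and channels are ultraweakly continuous (normal), agreement on the $W(g)$ forces $\Psi_{(1)}=\Phi_1$ and $\Psi_{(2)}=\Phi_2$, so $\Psi$ is a joint channel and $\Phi_1,\Phi_2$ are compatible.

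I do not anticipate a genuine obstacle here: everything of substance has been proved in Proposition~\ref{prop:covjoint}, Theorem~\ref{theor:gWcovjointchar}, and Proposition~\ref{prop:Wirr}. The only points requiring a little care are (i) correctly identifying $\id_\hil=W(e)$ so that the margins evaluated on Weyl operators land on the ``slices'' $f(\cdot,e)$ and $f(e,\cdot)$, which are indeed in $C_+^0(G)$ by Remark~\ref{rem:gskewpositive}, and (ii) invoking normality together with the density of $\{W(g)\}$ to pass from agreement on Weyl operators to equality of the channels. These are routine, so the proof will be short.
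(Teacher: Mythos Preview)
Your proposal is correct and follows essentially the same approach as the paper: reduce to covariant joint channels via Proposition~\ref{prop:covjoint}, invoke the bijection of Theorem~\ref{theor:gWcovjointchar}, and read off the margins on Weyl operators using $W(e)=\id_\hil$. The paper's proof is terser (it treats both directions at once via the equivalence ``compatible iff margins of some $\Psi\in{\bf Ch}_W^{W^{\otimes2}}$''), but the substance is identical.
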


\begin{proof}
As already noted, $\Phi_1$ and $\Phi_2$ are compatible if and only if they are margins of a covariant channel $\Psi\in{\bf Ch}_W^{W^{\otimes2}}$. Fix such a covariant joint channel $\Psi\in{\bf Ch}_W^{W^{\otimes2}}$ and fix $f=f_\Psi$. It follows that, for all $g\in G$,
$$
f_1(g)W(g)=\Phi_1\big(W(g)\big)=\Psi\big(W(g)\otimes W(e)\big)=f(g,e)W(g),
$$
and similarly for $f_2$.
\end{proof}

From now on, fix a Haar measure $dg$ on $G$; the particular choice of measure is irrelevant. Whenever $\mu$ is a positive measure on the Borel $\sigma$-algebra of $G$ which is absolutely continuous with respect to $dg$, denote the Radon-Nikod\'ym derivative of $\mu$ with respect to $dg$ by $\rho_\mu$ and ${\rm supp}\,\mu:=\rho_\mu^{-1}\big((0,\infty]\big)$. Utilizing Proposition \ref{prop:covdilat}, we obtain another compatibility characterization. Note that particular restrictions are placed on the Borel probability measure associated with one of the channels. When this conditions is met, we may give a more direct compatibility condition of two Weyl-covariant channels. First, however, we give a lemma. From now on, all the `almost all' (a.a.) phrases are to be understood with respect to $dg$.

\begin{lemma}\label{lemma:zeta}
Suppose that $\mu$ is a probability measure on the Borel $\sigma$-algebra of $G$ which is absolutely continuous with respect to $dg$ and let $\mc L:=L^2(\mu)$. Denote by $\ovl W:G\to\mc U(\mc L)$ the representation,
$$
\big(\ovl W(g)f\big)(h)=e^{iS(g,h)}f(h),\qquad g\in G,\quad h\in G.
$$
For any $\tilde{\Phi}\in{\bf Ch}_{\ovl W}^W$ there is $\zeta:G\times G\to\C$ such that $\zeta(\cdot,h)$ is continuous and $\zeta(e,h)=1$ for a.a. $h\in G$, $\zeta(g,\cdot)$ is measurable and essentially bounded for all $g\in G$, for any $n\in\N$ and all $g_1,\ldots,\,g_n\in G$,
\begin{equation}\label{eq:Z4}
\bigg(e^{\frac{i}{2}S(g_i,g_j)}\zeta(g_i^{-1}g_j,g_ih)\bigg)_{i,j=1}^n\geq0,
\end{equation}
and
\begin{equation}\label{eq:tildePhichar}
\big(\tilde{\Phi}\big(W(g)\big)f\big)(h)=\zeta(g,h)f(gh)\sqrt{\frac{\rho_\mu(gh)}{\rho_\mu(h)}},\qquad g\in G,\quad h\in G,\quad f\in\mc L.
\end{equation}
Moreover, for any $\zeta:G\times G\to\C$ continuous in the first argument, measurable and essentially bounded in the second argument, with $\zeta(e,\cdot)=1$ a.e., and satisfying the condition of Equation \eqref{eq:Z4}, the Equation \eqref{eq:tildePhichar} defines a $\tilde{\Phi}\in{\bf Ch}_{\ovl{W}}^W$.
\end{lemma}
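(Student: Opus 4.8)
The plan is to follow the pattern of Theorem \ref{theor:gWcovjointchar} and Lemma \ref{lemma:CCRChan}: first use covariance to force the explicit form \eqref{eq:tildePhichar} of $\tilde\Phi$, then translate the remaining channel axioms into the stated properties of $\zeta$, and finally check that these properties are also sufficient. For the first step I would isolate the ``shift part'' of $\tilde\Phi(W(k))$. Put $A_k:=\tilde\Phi(W(k))$ (a bounded operator on $\mc L=L^2(\mu)$) and let $R_k$ be the operator $(R_kf)(h)=f(kh)\sqrt{\rho_\mu(kh)/\rho_\mu(h)}$ with the convention $0/0=0$; since $\mu$ need not be translation-invariant, $R_k$ is in general only a partial isometry, but one checks directly that $R_k^*=R_{k^{-1}}$, that $k\mapsto R_k$ is strongly continuous (it is a reweighted translation, and translation is strongly continuous on $L^2$ of Haar measure), and that $\ovl W(g)R_k\ovl W(g)^*=e^{-iS(g,k)}R_k$ for all $g,k\in G$. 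Since $W(g)W(k)W(g)^*=e^{-iS(g,k)}W(k)$ by \eqref{eq:gWproperties}, the $(\ovl W,W)$-covariance of $\tilde\Phi$ yields the same relation $\ovl W(g)A_k\ovl W(g)^*=e^{-iS(g,k)}A_k$. Because the characters $h\mapsto e^{iS(g,h)}$ exhaust the dual $\hat G$ (a consequence of the non-degeneracy of the pairing and Pontryagin duality) and hence separate the points of $G$, the von Neumann algebra generated by $\{\ovl W(g)\}_{g\in G}$ is the maximal abelian algebra $L^\infty(\mu)$ of multiplication operators; writing the relation for $A_k$ against a general multiplication operator, it says that $A_k$ maps $L^2(Bk,\mu)$ into $L^2(B,\mu)$ for every Borel $B$, which by the standard disintegration argument forces $A_k=M_{\zeta(k,\cdot)}R_k$ for a unique $\zeta(k,\cdot)\in L^\infty(\mu)$ — this is exactly \eqref{eq:tildePhichar}, and it is the analogue, adapted to a possibly non-invariant $\mu$, of the step in the proof of Theorem \ref{theor:gWcovjointchar} where $W(gh)^*\Psi(W(g)\otimes W(h))$ is shown to be a scalar. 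Unitality $\tilde\Phi(\id)=\tilde\Phi(W(e))=\id$ forces $\zeta(e,\cdot)=1$ $\mu$-a.e., and, once the positivity condition below is in hand, its $n=2$ instance gives $|\zeta(k,\cdot)|\le1$ $\mu$-a.e., so after redefining $\zeta$ off $\mathrm{supp}\,\mu$ it is a genuine essentially bounded measurable function of its second argument.

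Next I would read off \eqref{eq:Z4}. As in Lemma \ref{lemma:CCRChan}, complete positivity of $\tilde\Phi$ reduces to $\sum_{i,j}\<\psi_i|\tilde\Phi(W(g_i)^*W(g_j))\psi_j\>\ge0$ for $g_i\in G$, $\psi_i\in\mc L$; substituting $W(g_i)^*W(g_j)=e^{\frac{i}{2}S(g_i,g_j)}W(g_i^{-1}g_j)$ and \eqref{eq:tildePhichar}, and performing the change of variable $h\mapsto g_ih$ in the $i$-th integral (which absorbs the shift $R_{g_i^{-1}g_j}$ and replaces $\psi_i$ by $\Psi_i:=\psi_i(g_i\,\cdot)\sqrt{\rho_\mu(g_i\,\cdot)}\in L^2(dg)$), the sum becomes $\int_G\sum_{i,j}\ovl{\Psi_i(h)}\,[e^{\frac{i}{2}S(g_i,g_j)}\zeta(g_i^{-1}g_j,g_ih)]\,\Psi_j(h)\,dh$. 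Since the $\Psi_i$ sweep out a sufficiently rich family of $L^2$-functions, a Lebesgue-differentiation argument shows this is nonnegative for all choices precisely when the matrices in \eqref{eq:Z4} are positive semi-definite for a.a.\ $h$; conversely those matrix inequalities make the integral, hence complete positivity, hold. For the continuity of $\zeta$ in its first argument I would invoke normality: as in Lemma \ref{lemma:CCRChan}, normality of $\tilde\Phi$ forces $g\mapsto\tr{\rho\,\tilde\Phi(W(g))}$ to be, for every $\rho\in\mc S(\mc L)$, the Fourier transform of a state on $\hil$, hence continuous; combined with the strong continuity of $R$ and the bound $\|M_{\zeta(g,\cdot)}\|\le1$ this makes $g\mapsto M_{\zeta(g,\cdot)}$ weakly continuous and so $g\mapsto\zeta(g,\cdot)$ weak-$*$ continuous into the unit ball of $L^\infty(\mu)$, and then the pointwise Cauchy--Schwarz bounds supplied by \eqref{eq:Z4} (which control $|\zeta(k,h)-\zeta(k',h)|$ by the size of $1-\zeta$ near $e$) let one choose a jointly measurable representative of $\zeta$ that is continuous in the first variable for a.a.\ fixed second variable.

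For the converse, given $\zeta$ with the listed properties, \eqref{eq:tildePhichar} together with linearity defines a map on the ultraweakly dense operator system $\mathrm{span}\{W(g):g\in G\}$ (the $W(g)$ being linearly independent, Proposition \ref{prop:Wirr}); the relation $\ovl W(g)R_k\ovl W(g)^*=e^{-iS(g,k)}R_k$ yields $(\ovl W,W)$-covariance, $\zeta(e,\cdot)=1$ yields unitality, and running the previous computation backwards shows \eqref{eq:Z4} is exactly complete positivity on this operator system. Finally, continuity of $\zeta(\cdot,h)$ and strong continuity of $R$ make each $g\mapsto\tr{\rho\,\tilde\Phi(W(g))}$ continuous while \eqref{eq:Z4} makes it satisfy the positivity hypothesis of the Bochner-type theorem of \cite{Werner84}; hence it equals $g\mapsto\tr{\sigma_\rho W(g)}$ for a unique state $\sigma_\rho$ on $\hil$, so the map has a state-preserving predual and therefore extends to a normal channel $\tilde\Phi\in{\bf Ch}_{\ovl W}^W$, exactly as in Lemma \ref{lemma:CCRChan}.

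The main difficulty is measure-theoretic rather than algebraic. Because $\mu$ is only absolutely continuous with respect to Haar measure, $R_k$ is merely a partial isometry, $\mathrm{supp}\,\mu$ need not be translation-invariant, and the change of variable $h\mapsto g_ih$ does not preserve $\mu$; one must therefore be careful when passing between ``$\mu$-a.a.'' and ``$dg$-a.a.'', when choosing coherent representatives of the a.e.-defined functions $\zeta(g,\cdot)$ for the uncountably many tuples occurring in \eqref{eq:Z4}, and when extending $\zeta$ off $\mathrm{supp}\,\mu$ so that continuity in $g$ and the positivity \eqref{eq:Z4} hold simultaneously. Once these points are handled, everything else is a transcription of the arguments already used for Lemma \ref{lemma:CCRChan} and Theorem \ref{theor:gWcovjointchar}.
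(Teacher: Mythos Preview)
Your proposal is correct and follows the same three-step architecture as the paper's proof: covariance forces the weighted-shift form \eqref{eq:tildePhichar}, complete positivity translates into the matrix inequality \eqref{eq:Z4}, and normality yields continuity of $\zeta(\cdot,h)$. The two proofs differ only in how each step is implemented. For the first step, the paper conjugates by the decomposable isometry $J:\mc L\to L^2(G)$, $(Jf)(h)=\rho_\mu(h)^{1/2}f(h)$, so as to work on $L^2$ of Haar measure, and then invokes \cite[Lemma 1]{HaPe2017} to produce $\zeta$; you instead stay on $L^2(\mu)$, identify $\{\ovl W(g)\}''$ with $L^\infty(\mu)$, and run the disintegration argument directly against the partial isometry $R_k$. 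The two are equivalent, but the paper's conjugation trick sidesteps the translation-noninvariance of $\mu$ that you flag as the main difficulty, at the price of an external citation. For the continuity step, the paper argues by localization: continuity of $g\mapsto\int_X\zeta(g,h)F(gh)\,dh$ for arbitrary compact $X$ and $F\in C_0(G)$ is used to extract pointwise continuity of $\zeta(\cdot,h)$ for a.a.\ $h$; your route via weak-$*$ continuity of $g\mapsto\zeta(g,\cdot)$ combined with the Cauchy--Schwarz control coming from the $n=2$ case of \eqref{eq:Z4} is a legitimate alternative and arguably more robust, since it makes explicit why a \emph{jointly} good representative of $\zeta$ can be chosen.
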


\begin{proof}
Pick $\tilde{\Phi}\in{\bf Ch}_{\ovl W}^W$. It is easy to see that
$$
\ovl W(g')\tilde{\Phi}\big(W(g)\big)=e^{iS(g',g)}\tilde{\Phi}\big(W(g)\big)\ovl W(g'),\qquad g,\,g'\in G.
$$
Denote by $L^2(G)$ the Hilbert space of (equivalence classes of) functions which are square integrable with respect to $dg$. One can define a decomposable isometry $J:\mc L\to L^2(G)$, $(Jf)(h)=\rho_\mu(h)^{1/2}f(h)$, $f\in\mc L$, $h\in G$. Clearly, for all $F\in L^2(G)$, $(J^*F)(h)=\rho_\mu(h)^{-1/2}F(h)$ whenever $h\in G$ and $(J^*F)(h)=0$ otherwise. Define $\tilde{\Phi}_J:=J\tilde{\Phi}(\cdot)J^*$ and the (canonical) diagonal representation $\delta:G\to\mc U\big(L^2(G)\big)$, $\big(\delta(g)F\big)(h)=e^{iS(g,h)}F(h)$, $g,\,h\in G$, $F\in L^2(G)$. It is easy to see that $J\ovl W(g)=\delta(g)J$ and $\delta(g')\tilde{\Phi}_J\big(W(g)\big)=e^{iS(g',g)}\tilde{\Phi}_J\big(W(g)\big)\delta(g')$, $g,\,g'\in G$. According to \cite[Lemma 1]{HaPe2017}, this is equivalent with the existence of a function $\zeta:G\times G\to\C$ such that $\zeta(g,\cdot)$ is measurable and essentially bounded for all $g\in G$ and
$$
\big(\tilde{\Phi}_J\big(W(g)\big)F\big)(h)=\zeta(g,h)F(gh),\qquad g\in G,\quad h\in G,\quad F\in L^2(G).
$$
Since the Radon-Nikod\'ym derivative of the measure $X\mapsto\mu(gX)$, $g\in G$, is $h\mapsto\rho_\mu(gh)$, the above is equivalent with Equation \eqref{eq:tildePhichar}. The unitality of $\tilde{\Phi}$ implies $\zeta(e,h)=1$ for a.a.\ $h\in G$.

We go on to the conditions the complete positivity of $\tilde{\Phi}$ sets on $\zeta$. We again fix $n,\,m\in\N$, $f_1,\ldots,\,f_n\in\mc L$, $\gamma_{i,r}\in\C$, and $g_{i,r}\in G$, $i=1,\ldots,\,n$, $r=1,\ldots,\,m$, and define $C_i=\sum_{r=1}^m\gamma_{i,r}W(g_{i,r})$. The complete positivity is, again, fully characterized by inequalities of the following kind (where the latter equality is obtained through a direct calculation using Equation \eqref{eq:tildePhichar} and the cocycle properties of the function $(g,h)\mapsto\rho_\mu(gh)$):
\begin{eqnarray*}
0&\leq&\sum_{i,j=1}^n\<f_i|\tilde{\Phi}(C_i^*C_j)f_j\>\\
&=&\sum_{i,j=1}^n\sum_{r,s=1}^m\ovl{\gamma_{i,r}}\gamma_{j,s}e^{\frac{i}{2}S(g_{i,r},g_{j,s})}\int_{G}\ovl{f_i(g_{i,r}h)}f_j(g_{j,s}h)\zeta(g_{i,r}^{-1}g_{j,s},g_{i,r}h)\sqrt{\rho_\mu(g_{i,r}h)\rho_\mu(g_{j,s}h)}\,dh\\
&=&\sum_{i,j=1}^n\sum_{r,s=1}^m\ovl{\gamma_{i,r}}\gamma_{j,s}e^{\frac{i}{2}S(g_{i,r},g_{j,s})}\int_G\ovl{(Jf_i)(g_{i,r}h)}(Jf_j)(g_{j,s}h)\zeta(g_{i,r}^{-1}g_{j,s},g_{i,r}h)\,dh.
\end{eqnarray*}
From this, after careful reading, one obtains the property of Equation \eqref{eq:Z4}.

Let us concentrate on the normality of $\tilde{\Phi}$. We probe this question again using the Fourier transform $\widehat{\tilde{\Phi}_*(S)}$ similarly as in the proof of Lemma \ref{lemma:CCRChan}. For simplicity, suppose $S=|f\>\<f|$, $f\in\mc L$. It follows, for every $g\in G$,
\begin{eqnarray*}
\widehat{\tilde{\Phi}_*(|f\>\<f|)}(g)=\<f|\tilde{\Phi}\big(W(g)\big)f\>&=&\int_{G}\ovl{f(h)}f(gh)\zeta(g,h)\sqrt{\rho_\mu(gh)\rho_\mu(h)}\,dh\\
&=&\int_G\ovl{(Jf)(h)}(Jf)(gh)\zeta(g,h)\,dh.
\end{eqnarray*}
As a function of $g$, the above expression has to be continuous and satisfy a condition analogous to the inequality \eqref{eq:BochnerPositive}. The latter condition is easily seen to reduce to \eqref{eq:Z4}. Using polarization, we see that the functions $g\mapsto\int_G\ovl{(Jf_1)(k)}(Jf_2)(gh)\zeta(g,h)\,dh=\<Jf_1|\tilde{\Phi}_J\big(W(g)\big)Jf_2\>$, $f_1,\,f_2\in\mc L$ also have to be continuous. Clearly we may replace $Jf_i$, $i=1,\,2$, with arbitrary functions $F_i\in L^2(G)$, $i=1,\,2$. Especially, for any compact $X\subset G$ and any $F\in C_0(G)$ (continuous and compactly supported functions on $G$), fixing $F_1=\chi_X$ and $F_2=F$, the map $g\mapsto\int_{X}\zeta(g,h)F(gh)\,dh$ is continuous. Since this holds for any compact $X$, we have that, for a.a. $h\in G$, the function $g\mapsto\zeta(g,h)F(gh)$ is continuous for any $F\in C_0(G)$. This implies that $\zeta(\cdot,h)$ is continuous for a.a. $h\in G$.

Thus, $\zeta$ satisfies the conditions of the claim. The reverse claim already follows from the proof thus far.
\end{proof}

We say that $\beta:G\times G\to\C$ is a {\it positive kernel} if it is measurable and, for any $n\in\N$ and all $g_1,\ldots,\,g_n\in G$,
$$
\big(\beta(g_i,g_j)\big)_{i,j=1}^n\geq0.
$$
Although the diagonal $\Delta=\{(g,g)\,|g\in G\}\subseteq G\times G$ is of zero measure, we may define the {\it diagonal values of $\beta$} as follows: Any positive kernel possesses a {\it Kolmogorov construction}, i.e.,\ a pair $(\mc M,\eta)$ where $\mc M$ is a Hilbert space and $\eta:G\to\mc M$ is a function such that $\beta(g,h)=\<\eta(g)|\eta(h)\>$ for all $g,\,h\in G$. Amongst these constructions there is an essentially unique minimal one where the vectors $\eta(g)$, $g\in G$, span a dense subspace of $\mc M$. The diagonal value at $g\in G$ is given by $\beta^\Delta(g):=\|\eta(g)\|^2$.

\begin{theorem}\label{theor:Wcovjointdilat}
Let $\Phi_1,\,\Phi_2\in{\bf Ch}_W^W$, and assume that the probability measure $\mu$ on the Borel $\sigma$-algebra of $G$ associated with $\Phi_1$ through
$$
\Phi_1(A)=\int_G\,W(g)^*AW(g)\,d\mu(g),\qquad A\in\mc L(\hil),
$$
is absolutely continuous with respect to $dg$. These channels are compatible if and only if there is a positive kernel $\beta:G\times G\to\C$ such that $\beta(\cdot,g)$ and $\beta(g,\cdot)$ are continuous for a.a.\ $g\in G$, $\beta^\Delta(g)=1$ for a.a.\ $g\in G$, and
\begin{equation}\label{eq:Phi2}
\Phi_2\big(W(g)\big)=\int_{G}\beta(h,gh)\sqrt{\rho_\mu(h)\rho_\mu(gh)}\,dh\,W(g),\qquad g\in G.
\end{equation}
\end{theorem}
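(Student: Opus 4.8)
The plan is to derive the statement from the dilation criterion of Proposition~\ref{prop:covdilat}, so the first task is to exhibit the minimal covariant Stinespring dilation of $\Phi_1$ explicitly. Writing $\Phi_1(A)=\int_G W(g)^*AW(g)\,d\mu(g)$, I would take $\mc L=L^2(\mu)$, the isometry $J:\hil\to\hil\otimes\mc L\cong L^2(\mu;\hil)$ with $(J\fii)(g)=W(g)\fii$, and $\ovl W$ the representation of Lemma~\ref{lemma:zeta}. That $J$ is isometric uses $\mu(G)=1$; $J^*(A\otimes\id_{\mc L})J=\Phi_1(A)$ is immediate; and $JW(k)=(W(k)\otimes\ovl W(k))J$ follows from $W(k)W(g)=e^{-\frac i2 S(k,g)}W(kg)$, commutativity of $G$ and antisymmetry of $S$. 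Minimality holds because $\ovl W(k)$ acts as multiplication by the character $g\mapsto e^{iS(k,g)}$, and these separate the points of $G$ by non-degeneracy of $S$ (cf.\ Proposition~\ref{prop:Wirr}), so the von Neumann algebra generated by $\ovl W$ is maximal abelian, the $\ovl W$-reducing subspaces of $L^2(\mu)$ are the $L^2$-spaces of measurable sets, and none of them carries $\ran J$ properly (here the linear independence of the $W(g)$ enters).

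By Proposition~\ref{prop:covdilat}, $\Phi_1$ and $\Phi_2$ are compatible iff there is $\tilde\Phi_2\in{\bf Ch}_{\ovl W}^{W}$ with $\Phi_2(B)=J^*(\id_\hil\otimes\tilde\Phi_2(B))J$, and since the $W(g)$ span an ultraweakly dense operator system and the relevant maps are normal it suffices to impose this for $B=W(g)$. Lemma~\ref{lemma:zeta} parametrizes such $\tilde\Phi_2$ by a function $\zeta$ (continuous in its first argument, measurable and essentially bounded in its second, $\zeta(e,\cdot)=1$ a.e., satisfying \eqref{eq:Z4}) via \eqref{eq:tildePhichar}. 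Using $(J\fii)(h)=W(h)\fii$ and \eqref{eq:tildePhichar},
$$
\langle\fii|J^*(\id\otimes\tilde\Phi_2(W(g)))J\psi\rangle=\int_G\zeta(g,h)\sqrt{\frac{\rho_\mu(gh)}{\rho_\mu(h)}}\,\langle W(h)\fii|W(gh)\psi\rangle\,d\mu(h),
$$
and inserting $\langle W(h)\fii|W(gh)\psi\rangle=e^{-\frac i2 S(g,h)}\langle\fii|W(g)\psi\rangle$ and $d\mu(h)=\rho_\mu(h)\,dh$ gives
$$
\Phi_2(W(g))=\Big(\int_G e^{-\frac i2 S(g,h)}\zeta(g,h)\sqrt{\rho_\mu(h)\rho_\mu(gh)}\,dh\Big)W(g).
$$
Comparing with \eqref{eq:Phi2} forces $\beta(h,gh)=e^{-\frac i2 S(g,h)}\zeta(g,h)$, i.e.\ $\beta(h_1,h_2)=e^{\frac i2 S(h_1,h_2)}\zeta(h_2h_1^{-1},h_1)$, with inverse $\zeta(g,h)=e^{\frac i2 S(g,h)}\beta(h,gh)$.

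It remains to match the hypotheses of Lemma~\ref{lemma:zeta} with those of the theorem under this substitution. Setting $k_i=g_ih$ in the matrix of \eqref{eq:Z4} and simplifying with bi-additivity and antisymmetry of $S$ turns its $(i,j)$ entry into $\lambda_i\overline{\lambda_j}\,\beta(g_ih,g_jh)$ with $\lambda_i=e^{-\frac i2 S(g_i,h)}$; conjugating by the unitary $\mathrm{diag}(\lambda_i)$ shows \eqref{eq:Z4} (for a fixed admissible $h$, all $n$, all $g_i$) is equivalent to $\big(\beta(k_i,k_j)\big)_{i,j}\geq0$. Fixing one $h$ outside the exceptional null set — which can be taken independent of $n$ and of the $g_i$ by second countability of $G$ and continuity of $\zeta(\cdot,h)$ — and letting the $g_i$ range over $G$, this says precisely that $\beta$ is a genuine everywhere-defined positive kernel; hence $\beta^\Delta(g)=\beta(g,g)=\zeta(e,g)$, which is $1$ a.e., and conversely. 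The remaining regularity is routine: continuity of $\zeta$ in its first argument (for a.a.\ second argument) plus joint measurability (a Carathéodory argument after a null-set modification) gives $\beta(h,\cdot)$ continuous for a.a.\ $h$, whence $\beta(\cdot,h)$ is too since $\beta(h_1,h_2)=\overline{\beta(h_2,h_1)}$; essential boundedness of $\zeta(g,\cdot)$ corresponds to $|\beta(h_1,h_2)|^2\le\beta^\Delta(h_1)\beta^\Delta(h_2)$. Thus, given $\beta$ as in the statement one defines $\zeta$, applies Lemma~\ref{lemma:zeta} to obtain $\tilde\Phi_2$ and Proposition~\ref{prop:covdilat} to obtain a joint channel for $\Phi_1,\Phi_2$; the converse runs backwards.

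The main obstacle, I expect, is the first paragraph — verifying carefully that the explicit triple is the \emph{minimal} covariant Stinespring dilation and handling the case $\mathrm{supp}\,\mu\neq G$ — together with the measure-theoretic bookkeeping of the various ``almost everywhere'' clauses, notably the precise meaning of \eqref{eq:Z4} and the single-exceptional-null-set argument upgrading $\beta$ to an honest positive kernel. The symplectic cocycle computations are lengthy but entirely mechanical.
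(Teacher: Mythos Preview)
Your proposal is correct and follows essentially the same route as the paper: explicit covariant minimal Stinespring dilation of $\Phi_1$, then Proposition~\ref{prop:covdilat}, then Lemma~\ref{lemma:zeta}, then the substitution $\beta(h_1,h_2)=e^{\frac{i}{2}S(h_1,h_2)}\zeta(h_1^{-1}h_2,h_1)$ (equivalently $\zeta(g,h)=e^{\frac{i}{2}S(g,h)}\beta(h,gh)$), with the positivity of $\beta$ read off from \eqref{eq:Z4} after a diagonal unitary conjugation. The paper writes the dilation isometry as $(J\fii)(g)=\sqrt{\rho_\mu(g)}\,W(g)\fii$, which agrees with your $(J\fii)(g)=W(g)\fii$ once one passes through the unitary $L^2(\mu)\to L^2(G)$, $f\mapsto\sqrt{\rho_\mu}\,f$ used in the proof of Lemma~\ref{lemma:zeta}; and where you fix a single good $h$ and translate, the paper equivalently introduces the family $\beta_k(g,h)=e^{\frac{i}{2}S(g,h)}\zeta(g^{-1}h,gk)$, picks a good $k$, and recovers $\beta=\beta_e$ from $\beta_k$ via $\beta(g,h)=e^{\frac{i}{2}(S(g,k)-S(h,k))}\beta_k(gk^{-1},hk^{-1})$.
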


\begin{proof}
Let $\mc L$ and $\ovl W$ be as in Lemma \ref{lemma:zeta}. Define the isometry $J:\hil\to\hil\otimes\mc L$, $(J\fii)(g)=\sqrt{\rho(g)}W(g)\fii$, $\fii\in\hil$, $g\in G$. It is simple to check that $(\mc L,J,\ovl W)$ is a covariant minimal Stinespring dilation for $\Phi_1$. According to Proposition \ref{prop:covdilat}, $\Phi_1$ and $\Phi_2$ are compatible if and only if there is $\tilde{\Phi}_2\in{\bf Ch}_{\ovl W}^W$ such that
\begin{equation}\label{eq:compdilatW}
\Phi_2(B)=J^*\big(\id_\hil\otimes\tilde{\Phi}_2(B)\big)J,\qquad B\in\mc L(\hil).
\end{equation}
Since channels $\tilde{\Phi}\in{\bf Ch}_{\ovl W}^W$ are, according to Lemma \ref{lemma:zeta}, in one-to-one correspondence with the functions $\zeta$ as detailed in the claim of Lemma \ref{lemma:zeta}, there must exist such $\zeta$ for $\tilde{\Phi}_2$.

We have that $\beta_k$, $\beta_k(g,h)=e^{\frac{i}{2}S(g,h)}\zeta(g^{-1}h,gk)$, $g,\,h\in G$, is a positive kernel with $\beta_k^\Delta=1$ a.e. for a.a.\ $k\in G$. Suppose that $k\in G$ is such that these conditions hold. Define $\beta=\beta_e$, whence it follows that $\beta(g,h)=e^{\frac{i}{2}\big(S(g,k)-S(h,k)\big)}\beta_k(gk^{-1},hk^{-1})$ implying that $\beta$ is a positive kernel with $\beta^\Delta(g)=1$ for a.a.\ $g\in G$. It follows immediately that $\beta(g,\cdot)$ is continuous for a.a.\ $g\in G$. Hence also $g\mapsto\beta(g,h)=\ovl{\beta(h,g)}$ is continuous for a.a.\ $h\in G$. Substituting the channel $\tilde{\Phi}_2=\tilde{\Phi}$ of \eqref{eq:tildePhichar} with $\zeta(g,h)=e^{\frac{i}{2}S(g,h)}\beta(h,gh)$, $g\in G$, $h\in G$, into \eqref{eq:compdilatW}, we obtain Equation \eqref{eq:Phi2}. On the other hand, $\zeta:G\times G\to\C$, $\zeta(g,h)=e^{\frac{i}{2}S(g,h)}\beta(h,gh)$, $g\in G$, $h\in G$, satisfies the conditions of the claim of Lemma \ref{lemma:zeta}.
\end{proof}

\begin{remark}\label{rem:alternative}
We may rephrase Theorem \ref{theor:Wcovjointdilat} in the following form: {\it Suppose that $\Phi_1\in{\bf Ch}_W^W$ is as in Theorem \ref{theor:Wcovjointdilat}. The continuous function $f_2:G\to\C$ of positive type with $f_2(e)=1$ associated with a channel $\Phi_2\in{\bf Ch}_W^W$ compatible with $\Phi_1$ such that $\Phi_2\big(W(g)\big)=f_2(g)W(g)$, $g\in G$, can be written in the form}
\begin{equation}\label{eq:alternative}
f_2(g)=\int_{G}\beta(h,gh)\sqrt{\rho_\mu(h)\rho_\mu(gh)}\,dh,\qquad g\in G,
\end{equation}
{\it where $\beta:G\times G\to\C$ is a positive kernel continuous in both arguments whose diagonal values are $1$ a.e.}

Suppose now that $\Phi_1,\,\Phi_2\in{\bf Ch}_W^W$ are associated with Borel probability measures measures $\mu_1$ and $\mu_2$ such that $\Phi_i(B)=\int_G W(g)^*BW(g)\,d\mu_i(g)$, $B\in\mc L(\hil)$, $i=1,\,2$. Assume that both $\mu_1$ and $\mu_2$ are absolutely continuous with respect to $dg$ and denote their Radon-Nikod\'ym derivates with respect to $dg$ by $\rho_1$ and, respectively, $\rho_2$. Also assume that $\hat{\rho_2}=f_2\in L^1(G)$ so that we may use the inverse Fourier transform to $f_2$ to obtain $\rho_2$. In this situation, $\Phi_1$ and $\Phi_2$ are compatible if and only if there is a positive kernel $\beta:G\times G\to\C$ such that $\beta(\cdot,g)$ and $\beta(g,\cdot)$ are continuous and $\beta^\Delta(g)=1$ for a.a.\ $g\in G$ and
\begin{equation}\label{eq:alternative2}
\rho_2(k)=\int_G\int_G e^{iS(g^{-1}h,k)}\beta(g,h)\sqrt{\rho_1(g)\rho_1(h)}\,dg\,dh,\qquad k\in G.
\end{equation}
The above equation is obtained from \eqref{eq:alternative} by taking the inverse Fourier transformation of $f_2$.

Especially if $\sqrt{\rho_1(\cdot)}\in L^1(G)$, any channel $\Phi_2\in{\bf Ch}_W^W$ compatible with $\Phi_1$ is associated with a measure $\mu_2$ which is absolutely continuous with respect to $dg$ and, consequently, compatibility is characterized by Equation \eqref{eq:alternative2}. Indeed, if this is the case, we may evaluate for the continuous function $f_2$ of positive type associated with $\Phi_2$
\begin{eqnarray*}
\int_G|f_2(g)|\,dg&=&\int_G\bigg|\int_{G}\beta(h,hg)\sqrt{\rho_1(h)\rho_1(gh)}\,dh\bigg|\,dg\\
&\leq&\int_G\int_{G}\underbrace{|\beta(h,gh)|}_{\leq1\ {\rm a.e.}}\sqrt{\rho_1(h)\rho_1(gh)}\,dh\,dg\leq\bigg(\int_{G}\sqrt{\rho_1(h)}\,dh\bigg)^2<\infty.
\end{eqnarray*}
Thus, the inverse Fourier transform can be applied to $f_2$ yielding the density function $\rho_2$ such that $\Phi_2(B)=\int_G \rho_2(g)W(g)^*BW(g)\,dg$, $B\in\mc L(\hil)$.

It should be noted that, in Proposition \ref{prop:covdilat}, the correspondence between channels $\tilde{\Phi}_2\in{\bf Ch}_{\ovl U}^{V_2}$ and $\Phi_2\in{\bf Ch}_U^{V_2}$ set up in \eqref{eq:covdilat} is many-to-one. Indeed, for any joint channel $\Psi\in{\bf Ch}_U^{V_{1,2}}$ for compatible $\Phi_i\in{\bf Ch}_U^{V_i}$, $i=1,\,2$, there is a unique $\tilde{\Phi}_2\in{\bf Ch}_{\ovl U}^{V_2}$ such that $\Psi(A\otimes B)=V^*\big(A\otimes\tilde{\Phi}_2(B)\big)V$, $A\in\mc L(\mc K_1)$, $B\in\mc L(\mc K_2)$, when we fix a covariant minimal Stinespring dilation $(\mc L,V,\ovl U)$ for $\Phi_1\in{\bf Ch}_U^{V_2}$. However, as any compatible pair of channels typically has infinitely many joint channels, there are typically many $\tilde{\Phi}_2$ such that Equation \ref{eq:covdilat} is satisfied. From this it follows that, for any $\Phi_2\in{\bf Ch}_W^W$ compatible with $\Phi_1\in{\bf Ch}_W^W$ satisfying the conditions of Theorem \ref{theor:Wcovjointdilat}, there are {\it a priori} several kernels $\beta$ satisfying Equation \eqref{eq:Phi2} or, equivalently, \eqref{eq:alternative}.
\end{remark}

\section{Physical phase spaces}\label{sec:PhysPhase}

Next we shall adapt the results of Section \ref{sec:gPhaseSpace} in a couple of physically motivated phase spaces. First we consider the case of a non-compact and continuous configuration space where ${\bf X}=\R^N=\hat{\bf X}$. After this, we briefly discuss the case of a finite configuration space ${\bf X}=\Z_d=\hat{\bf X}$, where $\Z_d:=\Z/(d\Z)$ for some $d\in\N$.

\subsection{Phase space $\R^N\times\R^N$}\label{sec:contphasespace}

We now take a closer look at the case of joinings Weyl-covariant channels of a spin-0 system in $N$-dimensional Euclidean configuration space ${\bf X}=\R^N$. The product $(\cdot|\cdot)$ is the natural scalar product in $\R^N$ and $S(\vec w_1,\vec w_2)=\vec q_1^T\vec p_2-\vec q_2^T\vec p_1=\vec w_1^T\Omega\vec w_2$ for $\vec w_i=(\vec q_i,\vec p_i)\in\R^N\times\R^N$, $i=1,\,2$, where
$$
\Omega=\left(\begin{array}{cc}
0&\id_N\\
-\id_N&0
\end{array}\right)
$$
is defined in the block form. We identify $(\R^N\times\R^N)\times(\R^N\times\R^N)=\R^{4N}$. The set $C_S^0(\R^{4N})$ consists of continuous functions $f:\R^{4N}\to\C$ such that, for any $n\in\N$ and all $\vec v_1,\ldots,\,\vec v_n,\,\vec w_1,\ldots,\,\vec w_n\in\R^{2N}$,
\begin{equation}\label{eq:continuousphasespace}
\bigg(e^{-\frac{i}{2}\big(\vec v_i^T\Omega\vec w_j+\vec w_i^T\Omega\vec v_j\big)}f(\vec v_j-\vec v_i,\vec w_j-\vec w_i)\bigg)_{i,j=1}^n\geq0.
\end{equation}

We will characterize the functions in $C_S^0(\R^{4N})$ corresponding to Gaussian channels in ${\bf Ch}_W^{W^{\otimes2}}$. Gaussian channels have been studied in conjunction with compatibility questions earlier, e.g.,\ in \cite{HeKiSchu2015} albeit in the case of joint measurability of observables. Recall that we may treat the representation $W_2:\R^{4N}\to\mc U(\hil\otimes\hil)$, $W_2(\vec{v},\vec{w})=W(\vec{v})\otimes W(\vec{w})$, $\vec{v},\,\vec{w}\in\R^{2N}$ as the Weyl representation of the product phase space $\R^{4N}$. The associated symplectic matrix is denoted $\Omega_2$,
$$
\Omega_2=\left(\begin{array}{cc}
\Omega&0\\
0&\Omega
\end{array}\right),
$$
and hence $W_2(\vec{z}_1)W_2(\vec{z}_2)=e^{-i\vec{z}_1^T\Omega_2\vec{z}_2}W_2(\vec{z}_2)W_2(\vec{z}_1)$, $\vec{z}_1,\,\vec{z_2}\in\R^{4N}$. Recall that a channel $\Psi\in{\bf Ch}(\hil,\hil\otimes\hil)$ is Gaussian, if and only if the Fourier transform $\widehat{\Psi_*(S)}$ is a Gaussian function for any positive $S\in\mc T(\hil)$ such that $\hat{S}$ is Gaussian. Equivalently \cite{GiCi2002}, there must be real-entry matrices $A\in\mc M_{2N\times 4N}(\R)$ and $B\in\mc M_{4N\times4N}(\R)$ satisfying
\begin{equation}\label{eq:compposGauss}
B+i\Omega_2-iA^T\Omega A\geq0
\end{equation}
and a $\vec{c}\in\R^{4N}$ (which can be chosen at random) such that $\Psi=\Psi_{A,B,\vec{c}}$,
\begin{equation}\label{eq:GaussCh}
\Psi_{A,B,\vec{c}}\big(W_2(\vec{z})\big)=e^{-\frac{1}{4}\vec{z}^TB\vec{z}-i\vec{c}^T\vec{z}}W(A\vec{z}),\qquad\vec{z}\in\R^{4N}.
\end{equation}

Define the linear map $\R^{2N}\ni\vec{w}\mapsto(\vec{w},\vec{w})\in\R^{4N}$ given by the matrix $J\in\mc M_{4N\times2N}(\R)$,
$$
J=\left(\begin{array}{c}
\id_{2N}\\
\id_{2N}
\end{array}\right).
$$
Note that we may write $W^{\otimes2}(\vec{w})=W_2(J\vec{w})$, $\vec{w}\in\R^{2N}$. Let $\Psi=\Psi_{A,B,\vec{c}}$ be as in equations \eqref{eq:compposGauss} and \eqref{eq:GaussCh} and assume additionally that $\Psi\in{\bf Ch}_W^{W^{\otimes2}}$. Hence, for all $\vec{w}\in\R^{2N}$ and $\vec{z}\in\R^{4N}$,
\begin{eqnarray*}
&&e^{-\frac{1}{4}\vec{z}^TB\vec{z}-i\vec{c}^T\vec{z}-i\vec{w}^T\Omega A\vec{z}}W(A\vec{z})=e^{-\frac{1}{4}\vec{z}^TB\vec{z}-i\vec{c}^T\vec{z}}W(\vec{w})W(A\vec{z})W(\vec{w})^*\\
&=&W(\vec{w})\Psi\big(W_2(\vec{z})\big)W(\vec{w})^*=\Psi\big(W_2(J\vec{w})W_2(\vec{z})W_2(J\vec{w})^*\big)\\
&=&e^{-i\vec{w}^TJ^T\Omega_2\vec{z}}\Psi\big(W_2(\vec{z})\big)=e^{-\frac{1}{4}\vec{z}^TB\vec{z}-i\vec{c}^T\vec{z}-i\vec{w}^TJ^T\Omega_2\vec{z}}W(A\vec{z}).
\end{eqnarray*}
Hence, $J^T\Omega_2=\Omega A$. Using the fact that $\Omega^2=-\id_{2N}$, we obtain through a simple calculation $A=-\Omega J^T\Omega_2=J^T$. Thus, a Gaussian channel $\Psi=\Psi_{A,B,\vec{c}}$ is $(W,W^{\otimes2})$-covariant if and only if $A=J^T=\big(\id_{2N}\ \id_{2N}\big)$. For such a covariant Gaussian channel, when we write $B=(B_{i,j})_{i,j=1}^2$ where $B_{i,j}\in\mc M_{2N\times2N}(\R)$, the condition \eqref{eq:compposGauss} takes the form
\begin{equation}\label{eq:covcompposGaus}
\left(\begin{array}{cc}
B_{1,1}&B_{1,2}-i\Omega\\
B_{2,1}-i\Omega&B_{2,2}
\end{array}\right)\geq0.
\end{equation}
The matrix on the LHS is self-adjoined if and only if $B_{1,1}$ and $B_{2,2}$ are symmetric and $B_{2,1}=B_{1,2}^T$; note that $\Omega$ is antisymmetric. Thus we have proven the following:

\begin{theorem}\label{theor:covGaussCh}
For a Gaussian channel $\Psi\in{\bf Ch}(\hil,\hil\otimes\hil)$, i.e.,\ $\Psi=\Psi_{A,B,\vec{c}}$ where $A$, $B$, and $\vec{c}$ satisfy equations \eqref{eq:compposGauss} and \eqref{eq:GaussCh}, we have $\Psi\in{\bf Ch}_W^{W^{\otimes2}}$ if and only if $A=\big(\id_{2N}\ \id_{2N}\big)$ and $B$ is a symmetric matrix such that
\begin{equation}\label{eq:covGaussCh}
\left(\begin{array}{cc}
B_{1,1}&B_{1,2}-i\Omega\\
B_{1,2}^T+i\Omega^T&B_{2,2}
\end{array}\right)\geq0
\end{equation}
when we write $B=(B_{i,j})_{i,j=1}^2$ in the block form. The function associated to $\Psi$ by Equation \eqref{eq:gWcovjointchar} is given by
$$
f_\Psi(\vec{z})=e^{-\frac{1}{4}\vec{z}^TB\vec{z}-i\vec{c}^T\vec{z}},\qquad\vec{z}\in\R^{4N}.
$$
\end{theorem}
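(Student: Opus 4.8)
The plan is to note that essentially all the work has already been carried out in the discussion preceding the statement, so that the proof amounts to organizing that computation into the two implications of the biconditional and then reading off $f_\Psi$. First I would establish necessity. Assume $\Psi=\Psi_{A,B,\vec c}\in{\bf Ch}_W^{W^{\otimes2}}$. Writing the covariance condition as $W(\vec w)\Psi\big(W_2(\vec z)\big)W(\vec w)^*=\Psi\big(W^{\otimes2}(\vec w)W_2(\vec z)W^{\otimes2}(\vec w)^*\big)$ and using $W^{\otimes2}(\vec w)=W_2(J\vec w)$ together with the canonical commutation relations for $W_2$ (with symplectic matrix $\Omega_2$), the chain of equalities already displayed in the text forces $J^T\Omega_2=\Omega A$; since $\Omega^2=-\id_{2N}$ this gives $A=-\Omega J^T\Omega_2=J^T=\big(\id_{2N}\ \id_{2N}\big)$. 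Substituting $A=J^T$ into the complete-positivity constraint \eqref{eq:compposGauss} and computing $A^T\Omega A=J(\Omega\ \Omega)$, one finds that $i\Omega_2-iA^T\Omega A$ is the block matrix with vanishing diagonal blocks and off-diagonal blocks $-i\Omega$, so \eqref{eq:compposGauss} becomes \eqref{eq:covcompposGaus}. A positive semidefinite matrix is in particular self-adjoint, and forcing the block matrix in \eqref{eq:covcompposGaus} to be Hermitian, using that $\Omega$ is real and antisymmetric so that the conjugate transpose of $B_{1,2}-i\Omega$ is $B_{1,2}^T+i\Omega^T=B_{1,2}^T-i\Omega$, yields that $B_{1,1}$ and $B_{2,2}$ are symmetric and $B_{2,1}=B_{1,2}^T$, i.e.\ that $B$ is symmetric; then \eqref{eq:covcompposGaus} is precisely \eqref{eq:covGaussCh}.

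For sufficiency, suppose $A=\big(\id_{2N}\ \id_{2N}\big)=J^T$ and $B$ is symmetric with \eqref{eq:covGaussCh}. Since \eqref{eq:covGaussCh} coincides with \eqref{eq:compposGauss} for this choice of $A$, the triple $(A,B,\vec c)$ defines a genuine Gaussian channel $\Psi_{A,B,\vec c}$ via \eqref{eq:GaussCh} by \cite{GiCi2002}. For covariance, note that $A=J^T$ satisfies $J^T\Omega_2=(\Omega\ \Omega)=\Omega J^T=\Omega A$, so the same chain of identities as in the necessity argument now runs in reverse and gives $W(\vec w)\Psi\big(W_2(\vec z)\big)W(\vec w)^*=\Psi\big(W^{\otimes2}(\vec w)W_2(\vec z)W^{\otimes2}(\vec w)^*\big)$ for all $\vec w\in\R^{2N}$, $\vec z\in\R^{4N}$; since the operators $W_2(\vec z)$ span an ultraweakly dense operator system (Proposition \ref{prop:Wirr}) and $\Psi$ is normal, this extends to $\Psi\circ\alpha^{W^{\otimes2}}_g=\alpha^W_g\circ\Psi$ for all $g$, i.e.\ $\Psi\in{\bf Ch}_W^{W^{\otimes2}}$.

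Finally, for the formula for $f_\Psi$, I would compare \eqref{eq:GaussCh} with $A=J^T$ against the characterization of Theorem \ref{theor:gWcovjointchar}: writing $\vec z=(\vec v,\vec w)$ with $\vec v,\vec w\in\R^{2N}$ one has $W_2(\vec z)=W(\vec v)\otimes W(\vec w)$ and $J^T\vec z=\vec v+\vec w$, which in the additive phase-space group $\R^{2N}$ is the product of the two arguments, so that $\Psi\big(W(\vec v)\otimes W(\vec w)\big)=e^{-\frac14\vec z^TB\vec z-i\vec c^T\vec z}W(\vec v+\vec w)$; Theorem \ref{theor:gWcovjointchar} then identifies the scalar prefactor as $f_\Psi(\vec z)$. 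The only step demanding real care is the block-matrix bookkeeping with $\Omega$, $\Omega_2$ and $J$ in passing from \eqref{eq:compposGauss} to \eqref{eq:covGaussCh}, together with the verification that Hermiticity of the resulting block matrix is equivalent to symmetry of $B$; the rest is a direct transcription of the computation already present in the text.
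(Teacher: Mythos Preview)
Your proposal is correct and follows essentially the same approach as the paper, which consists of the computation displayed immediately before the theorem statement (note the concluding ``Thus we have proven the following''). You have simply organized that argument cleanly into the two implications, made the sufficiency direction and the density/normality step explicit, and spelled out how $f_\Psi$ is read off by matching \eqref{eq:GaussCh} with $A=J^T$ against Theorem~\ref{theor:gWcovjointchar}; no new ideas are required beyond what the text already provides.
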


\begin{remark}
Especially, choosing $B=\alpha\id_{4N}$ for any $\alpha\geq1$, inequality \eqref{eq:covGaussCh} is satisfied. Thus, the set $C_S^0(\R^{4N})$ contains a wide variety of Gaussian functions.

However, there are no modulus-1 functions in $C_S^0(\R^{4N})$, i.e.,\ functions $f:\R^{4N}\to\T$. Suppose that $f:\R^{4N}\to\T$ and make the counter assumption that $f\in C_S^0(\R^{4N})$. Since $f$ is also a function of positive type, also $\big(\ovl{f(\vec{v}_j-\vec{v}_i,\vec{w}_j-\vec{w}_i)}\big)_{i,j=1}^n\geq0$ for any $n\in\N$ and all $\vec{v}_1,\ldots,\,\vec{v}_n,\,\vec{w}_1,\ldots,\,\vec{w}_n\in\R^{2N}$. When we multiply the matrix here entry-wise with the matrix on the LHS of inequality \eqref{eq:continuousphasespace}, we find that, for any $n\in\N$ and all $\vec{v}_1,\ldots,\,\vec{v}_n,\,\vec{w}_1,\ldots,\,\vec{w}_n\in\R^{2N}$,
$$
\Big(e^{-\frac{i}{2}\big(\vec{v}_i^T\Omega\vec{w}_j+\vec{w}_i^T\Omega\vec{v}_j\big)}\Big)_{i,j=1}^n\geq0.
$$
Fix now $\vec{v},\,\vec{w}\in\R^{2N}\setminus\{0\}$ such that $\vec{v}\neq\vec{w}$. Note that $\vec{v}^T\Omega\vec{w}\neq0$. Fix $n=3$, $\vec{v}_i=\frac{1}{3}\vec{v}$, $i=1,\,2,\,3$, $\vec{w}_1=-\frac{1}{2}\vec{w}$, $\vec{w}_2=\frac{3}{2}\vec{w}$, and $\vec{w}_3=\frac{1}{2}\vec{w}$, and substitute these to the above equation. We denote the resulting matrix on the LHS of the equation above by $M$. Direct calculation shows
$$
\det{M}=2\Big(\cos{\Big({\small\frac{1}{2}}\vec{v}^T\Omega\vec{w}\Big)-1}\Big)<0
$$
implying that $M\not\geq0$.
\end{remark}

A channel $\Phi\in{\bf Ch}(\hil,\hil)$ is Gaussian if there are real-entry matrices $A_0,\,B_0\in\mc M_{2N\times2N}(\R)$ such that $B_0+i\Omega-iA_0^T\Omega A_0\geq0$ and a vector $\vec{c}_0\in\R^{2N}$ (which can be chosen at random) such that $\Phi=\Phi_{A_0,B_0,\vec{c}_0}$,
$$
\Phi\big(W(\vec{w})\big)=e^{-\frac{1}{4}\vec{w}^TB_0\vec{w}-i\vec{c}_0^T\vec{w}}W(A_0\vec{w}),\qquad\vec{w}\in\R^{2N}.
$$
Similarly as in the case of Gaussian channels $\Psi\in{\bf Ch}(\hil,\hil\otimes\hil)$, it can be easily checked that a Gaussian channel $\Phi=\Phi_{A_0,B_0,\vec{c}_0}\in{\bf Ch}(\hil,\hil)$ is $(W,W)$-covariant if and only if $A_0=\id_{2N}$ and $B_0\geq0$.

\begin{example}
Let us consider the compatibility of two covariant channels $\Phi_1,\,\Phi_2\in{\bf Ch}_W^W$ where $\Phi_1=\Phi_{\id_{2N},B,\vec{c}}$ is Gaussian from the perspective of Theorem \ref{theor:Wcovjointdilat} and Remark \ref{rem:alternative}. We assume that the measure associated with $\Phi_1$ is absolutely continuous with respect to the Lebesgue measure. This is easily seen to be equivalent with $B$ being of full rank; indeed the function $f_1:\R^{2N}\to\C$, $f_1(\vec{v})=e^{-\frac{1}{4}\vec{v}^TB\vec{v}-i\vec{c}^T\vec{v}}$, $\vec{v}\in\R^{2N}$, is in $L^1(\R^{2N})$ if and only if $B$ does not have the eigenvalue $0$. Taking the inverse Fourier transform of the preceding function, we obtain the density function $\rho_1:\R^{2N}\to\R$,
$$
\rho_1(\vec{w})=\frac{1}{\pi^N\sqrt{\det B}}e^{-(\Omega\vec{w}+\vec{c})^TB^{-1}(\Omega\vec{w}+\vec{c})},\qquad\vec{w}\in\R^{2N},
$$
of the measure $\mu_1$ associated with $\Phi_1$ with respect to the Lebesgue measure. Note that $\sqrt{\rho_1(\cdot)}\in L^1(\R^{2N})$ implying that any Weyl-covariant channel compatible with $\Phi_1$ is associated with a measure which is absolutely continuous with respect to the Lebesgue measure.

Suppose that $\Phi_2$ is associated with the continuous function $f_2$ of positive type with $f_2(0)=1$. According to Theorem \ref{theor:Wcovjointdilat}, $\Phi_1$ and $\Phi_2$ are compatible if and only if there is a positive kernel $\beta:\R^{2N}\times\R^{2N}\to\C$ such that $\beta(\cdot,\vec{w})$ and $\beta(\vec{w},\cdot)$ are continuous and $\beta^\Delta(\vec{w})=1$ for a.a.\ $\vec{w}\in\R^{2N}$ and, for all $\vec{v}\in\R^{2N}$,
\begin{eqnarray*}
f_2(\vec{v})&=&\frac{1}{\pi^N\sqrt{\det B}}\int_{\R^{2N}}\beta(\vec{w},\vec{v}+\vec{w})e^{-\frac{1}{2}(\Omega\vec{w}+\vec{c})^TB^{-1}(\Omega\vec{w}+\vec{c})-\frac{1}{2}(\Omega\vec{v}+\Omega\vec{w}+\vec{c})^TB^{-1}(\Omega\vec{v}+\Omega\vec{w}+\vec{c})}\,d\vec{w}\\
&=&\frac{e^{-\frac{1}{2}\vec{v}^T\Omega^TB^{-1}\Omega\vec{v}}}{\pi^N\sqrt{\det B}}\int_{\R^{2N}}\beta(\vec{w}-\vec{v}/2+\Omega\vec{c},\vec{w}+\vec{v}/2+\Omega\vec{c})e^{-\vec{w}^T\Omega^TB^{-1}\Omega\vec{w}}\,d\vec{w}.
\end{eqnarray*}
Replacing the kernel $\beta$ with the kernel $\beta':(\vec{v},\vec{w})\mapsto\beta(\vec{v}+\Omega\vec{c},\vec{w}+\Omega\vec{c})$, we see that we may omit the terms $\Omega\vec{c}$ in the equation above. If we further replace $\beta'$ with $\tilde{\beta}:(\vec{v},\vec{w})\mapsto\beta'(\Omega^TB^{1/2}\vec{v},\Omega^TB^{1/2}\vec{w})$, and make a change of variables, we find that the above $\Phi_1$ and $\Phi_2$ are compatible if and only if there is a positive kernel $\beta:\R^{2N}\times\R^{2N}\to\C$ such that $\beta(\cdot,\vec{v})$ and $\beta(\vec{v},\cdot)$ are continuous and $\beta^\Delta(\vec{v})=1$ for a.a.\ $\vec{v}\in\R^{2N}$ and
$$
f_2(\Omega^TB^{1/2}\vec{v})e^{\frac{1}{2}\|\vec{v}\|^2}=\frac{1}{\pi^N}\int_{\R^{2N}}\beta(\vec{w}-\vec{v}/2,\vec{w}+\vec{v}/2)e^{-\|\vec{w}\|^2}\,d\vec{w},\qquad\vec{v}\in\R^{2N}.
$$

For two Gaussian channels $\Phi_1=\Phi_{\id_{2N},B,\vec{c}}$ and $\Phi_2=\Phi_{\id_{2N},C,\vec{d}}$, where $B$ is of full rank, the above necessary and sufficient compatibility condition becomes
$$
e^{\frac{1}{2}\|\vec{v}\|^2-\frac{1}{4}\vec{v}^TB^{1/2}\Omega C\Omega^TB^{1/2}\vec{v}-i\vec{d}^T\Omega^TB^{1/2}\vec{v}}=\frac{1}{\pi^N}\int_{\R^{2N}}\beta(\vec{w}-\vec{v}/2,\vec{w}+\vec{v}/2)e^{-\|\vec{w}\|^2}\,d\vec{w},\qquad\vec{v}\in\R^{2N}.
$$
When we replace $\beta$ with $\beta'$, $\beta'(\vec{v},\vec{w})=e^{i\vec{d}^T\Omega^TB^{1/2}(\vec{v}-\vec{w})}\beta(\vec{v},\vec{w})$, $\vec{v},\,\vec{w}\in\R^{2N}$, we may omit the term $i\vec{d}^T\Omega^TB^{1/2}\vec{v}$ in the above equation. Thus, the above Gaussian $\Phi_1$ and $\Phi_2$ are compatible if and only if there is a positive kernel $\beta:\R^{2N}\times\R^{2N}\to\C$ such that $\beta(\cdot,\vec{v})$ and $\beta(\vec{v},\cdot)$ are continuous and $\beta^\Delta(\vec{v})=1$ for a.a.\ $\vec{v}\in\R^{2N}$ and
\begin{equation}\label{eq:CompGaussCond}
e^{\frac{1}{2}\|\vec{v}\|^2-\frac{1}{4}\vec{v}^TB^{1/2}\Omega C\Omega^TB^{1/2}\vec{v}}=\frac{1}{\pi^N}\int_{\R^{2N}}\beta(\vec{w}-\vec{v}/2,\vec{w}+\vec{v}/2)e^{-\|\vec{w}\|^2}\,d\vec{w},\qquad\vec{v}\in\R^{2N}.
\end{equation}
For the compatibility of the above $\Phi_1$ and $\Phi_2$ it is necessary that $\id-\frac{1}{2}B^{1/2}\Omega C\Omega^TB^{1/2}\leq0$ or, equivalently,
\begin{equation}\label{eq:NecCompGaussCond}
\Omega C\Omega^T\geq2B^{-1}.
\end{equation}
Indeed, if this was not the case, there would exist $\vec{v}_+\in\R^{2N}$ such that, upon substituting $\vec{v}=\vec{v}_+$, the exponent on the LHS of Equation \eqref{eq:CompGaussCond} is positive and, consequently, the LHS is greater than 1 whereas, since $\beta(\vec{v},\vec{w})\leq1$ for a.a.\ $\vec{v},\,\vec{w}\in\R^{2N}$, the RHS is bounded from the above by 1 for every $\vec{v}\in\R^{2N}$. Thus, in order to guarantee that two Gaussian channels are incompatible, it suffices to ensure that the inequality \eqref{eq:NecCompGaussCond} is not satisfied.

There are also simple sufficient compatibility conditions for covariant Gaussian channels: Let $\Phi_i\in{\bf Ch}_W^W$, $i=1,\,2$, be Gaussian channels, i.e.,\ there are positive $B_{i,i}\in\mc M_{2N\times2N}(\R)$ and vectors $\vec{c}_i\in\R^{2N}$ such that $\Phi_i=\Phi_{\id_{2N},B_{i,i},\vec{c}_i}$, $i=1,\,2$. We do not have to assume that $B_{1,1}$ or $B_{2,2}$ be of full rank. The channels $\Phi_1$ and $\Phi_2$ are compatible if there is $B_{1,2}\in\mc M_{2N\times2N}(\R)$ such that the inequality \eqref{eq:covGaussCh} holds. To see this, suppose that there is a real matrix $B_{1,2}$ such that the inequality \eqref{eq:covGaussCh} holds. Denote $B=(B_{i,j})_{i,j=1}^2$ and $\vec{c}=(\vec{c}_1,\vec{c}_2)\in\R^{4N}$. According to Theorem \ref{theor:covGaussCh}, we may define the Gaussian channel $\Psi:=\Psi_{J^T,B,\vec{c}}\in{\bf Ch}_W^{W^{\otimes2}}$. Considering images $\Psi\big(W_2(\vec{w},0)\big)=\Psi_{(1)}\big(W(\vec{w})\big)$ and $\Psi\big(W_2(0,\vec{w})\big)=\Psi_{(2)}\big(W(\vec{w})\big)$, $\vec{w}\in\R^{2N}$, one easily finds that $\Psi_{(i)}=\Phi_i$, $i=1,\,2$.
\end{example}

\subsection{Phase space $\Z_d\times\Z_d$}\label{subsec:finitephase}

When the configuration space is the finite $\Z_d$ with $d\in\N$, the scalar product is $\Z_d\times\Z_d\ni(k,l)\mapsto(k|l)=\frac{2\pi}{d}kl\in\R$ and $S(\vec{m}_1,\vec{m}_2)=\frac{2\pi}{d}(k_1l_2-k_2l_1)$, $\vec{m}_i=(k_i,l_i)\in\Z_d^2$, $i=1,\,2$. The set $C_S^0(\Z_d^2\times\Z_d^2)$ consists of functions $f:\Z_d^2\times\Z_d^2\to\C$ with $f(0,0)=1$ such that the multi-index matrix inequality
$$
\bigg(e^{\frac{i}{2}\big(S(\vec{m},\vec{s})+S(\vec{r},\vec{n})\big)}f(\vec{m}-\vec{n},\vec{r}-\vec{s})\bigg)_{(\vec{m},\vec{r}),\,(\vec{n},\vec{s})\in\Z_d^2}\geq0
$$
holds.

Naturally, the function $f_0:\Z_d\times\Z_d\to\C$,
$$
f_0(\vec{m},\vec{n})=\left\{\begin{array}{ll}
1,&\vec{m}=0=\vec{n},\\
0&{\rm otherwise},
\end{array}\right.
$$
is in $C_S^0(\Z_d^2\times\Z_d^2)$. One easily checks that the corresponding Weyl-covariant joint channel $W\in{\bf Ch}_W^{W^{\otimes2}}$ is the completely depolarizing channel,
\begin{equation}\label{eq:depolarizing}
\Psi_0(C)=\frac{1}{d}\tr{C}\id_{\hil},\qquad C\in\mc L(\hil\otimes\hil).
\end{equation}
Naturally, the Hilbert space $\hil$ is now $\ell_{\Z_d}^2\simeq\C^d$.

Let us investigate the consequences of Theorem \ref{theor:Wcovjointdilat} in the finite phase space case. Note that the condition for $\Phi_1$ in the said theorem is now automatically satisfied. We say that a function $\beta:\Z_d^2\times\Z_d^2\to\C$ is a {\it positive kernel}, if $\big(\beta(\vec{m},\vec{n})\big)_{\vec{m},\vec{n}\in\Z_d^2}\geq0$. The following is an immediate consequence of the discussion in Remark \ref{rem:alternative}.

\begin{corollary}\label{prop:finitecomp}
Let $\Phi_1,\,\Phi_2\in{\bf Ch}_W^W$ be associated with probability vectors $p_1,\,p_2:\Z_d^2\to[0,1]$,
$$
\Phi_i(B)=\sum_{\vec{m}\in\Z_d^2}p_i(\vec{m})W(\vec{m})^*BW(\vec{m}),\qquad B\in\mc L(\hil),\quad i=1,\,2.
$$
These channels are compatible if and only if there is a positive kernel $\beta:\Z_d^2\times\Z_d^2\to\C$ with $\beta(\vec{m},\vec{m})=1$ for all $\vec{m}\in\Z_d^2$ such that
\begin{equation}\label{eq:finitecomp}
p_2(\vec{r})=\frac{1}{d^2}\sum_{\vec{m},\vec{n}\in\Z_d^2}e^{iS(\vec{m}-\vec{n},\vec{r})}\beta(\vec{m},\vec{n})\sqrt{p_1(\vec{m})p_1(\vec{n})},\qquad\vec{r}\in\Z_d^2.
\end{equation}
\end{corollary}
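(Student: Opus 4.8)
The plan is to read this off from Theorem \ref{theor:Wcovjointdilat}, in the reformulated shape of Remark \ref{rem:alternative}, specialised to the finite phase space $G=\Z_d^2$; the point is simply that on a finite group every measure-theoretic hypothesis of the general theory becomes vacuous. First I would observe that, $G$ being finite, a Haar measure $dg$ is a positive multiple of the counting measure, so \emph{every} Borel probability measure on $G$ is absolutely continuous with respect to $dg$. In particular the measure $\mu$ attached to $\Phi_1$ has a Radon--Nikod\'ym derivative proportional to $p_1$, so the hypothesis placed on $\Phi_1$ in Theorem \ref{theor:Wcovjointdilat} holds automatically (as already noted before the statement). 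Moreover $\sqrt{\rho_1(\cdot)}$ and $f_2$ are automatically in $L^1(G)$, being finite sums, so the last paragraph of Remark \ref{rem:alternative} applies word for word: any $\Phi_2\in{\bf Ch}_W^W$ compatible with $\Phi_1$ is itself associated with an absolutely continuous measure, i.e.\ with a probability vector $p_2$, and $\Phi_1,\Phi_2$ are compatible if and only if there is a positive kernel $\beta$, continuous in each variable almost everywhere with $\beta^\Delta=1$ almost everywhere, satisfying the finite analogue of \eqref{eq:alternative2}.

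Next I would unwind those side conditions on a discrete group. Continuity of $\beta(\cdot,\vec m)$ and $\beta(\vec m,\cdot)$ is vacuous, and ``almost everywhere'' with respect to the counting measure means ``everywhere'' (the only null set is empty), so the conditions collapse to: $\beta\colon\Z_d^2\times\Z_d^2\to\C$ is a positive kernel with $\beta(\vec m,\vec m)=1$ for all $\vec m$. Here I would also note that the diagonal-value machinery is superfluous in the finite case: $\beta$ is genuinely defined at every point, so the Kolmogorov prescription gives $\beta^\Delta(\vec m)=\langle\eta(\vec m)|\eta(\vec m)\rangle=\beta(\vec m,\vec m)$ outright, whence ``$\beta^\Delta=1$ a.e.'' becomes ``$\beta(\vec m,\vec m)=1$ for all $\vec m$'', exactly the condition in the statement.

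Finally I would translate the integral identity. In the finite setting \eqref{eq:alternative} reads $f_2(\vec v)=\sum_{\vec h\in\Z_d^2}\beta(\vec h,\vec v+\vec h)\sqrt{p_1(\vec h)p_1(\vec v+\vec h)}$, the proportionality constant between $dg$ and the counting measure cancelling between the integration and the two factors $\rho_\mu^{1/2}$. To pass from $f_2$ to $p_2$ one applies Fourier inversion on $\Z_d^2$ relative to the pairing $(g,h)\mapsto e^{iS(g,h)}$; this is a nondegenerate bicharacter, as one checks directly from $S((k_1,l_1),(k_2,l_2))=\tfrac{2\pi}{d}(k_1l_2-k_2l_1)$ by exhibiting, for each $\vec m\neq 0$, an $\vec s$ with $S(\vec m,\vec s)\notin 2\pi\Z$, so that $\sum_{\vec v}e^{iS(\vec v,\vec r-\vec m)}=d^2\delta_{\vec m,\vec r}$ and hence $p_2(\vec r)=\frac{1}{d^2}\sum_{\vec v}e^{iS(\vec v,\vec r)}f_2(\vec v)$. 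Substituting the expression for $f_2$ and changing variables $\vec n=\vec v+\vec h$ yields $p_2(\vec r)=\frac{1}{d^2}\sum_{\vec m,\vec n}e^{iS(\vec n-\vec m,\vec r)}\beta(\vec m,\vec n)\sqrt{p_1(\vec m)p_1(\vec n)}$; since $\beta$ is Hermitian and $S$ real, replacing $\beta$ by $(\vec m,\vec n)\mapsto\beta(\vec n,\vec m)$ (again a positive kernel with the same diagonal) turns this into \eqref{eq:finitecomp}. Running the same chain backwards shows that any $\beta$ obeying \eqref{eq:finitecomp} produces, via \eqref{eq:alternative}, a channel $\Phi_2$ compatible with $\Phi_1$, which completes the equivalence.

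I do not expect a genuine obstacle here: the content is entirely the observation that the separability, amenability, absolute-continuity and continuity requirements of the general theory are automatic for a finite phase space. The only items needing a little care are the bookkeeping of the Plancherel constant $1/d^2=1/|\Z_d^2|$ in the Fourier inversion, the innocuous sign in the exponent of \eqref{eq:finitecomp} (absorbed into the Hermitian symmetry of $\beta$), and the harmless verification that $e^{iS}$ is a nondegenerate bicharacter on $\Z_d^2$ so that Fourier inversion is legitimate.
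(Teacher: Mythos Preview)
Your proposal is correct and follows exactly the route the paper takes: the paper gives no proof at all beyond the sentence ``The following is an immediate consequence of the discussion in Remark \ref{rem:alternative}'', and you have simply spelled out that immediate consequence carefully, including the harmless transposition $\beta(\vec m,\vec n)\mapsto\beta(\vec n,\vec m)$ needed to match the sign convention in \eqref{eq:finitecomp} with that of \eqref{eq:alternative2}.
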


Corollary \ref{prop:finitecomp} gives us a recipe of finding all the channels $\Phi_2\in{\bf Ch}_W^W$ compatible with a fixed $\Phi_1\in{\bf Ch}_W^W$. Let us first take a look at two simple cases. Consider first the case where $\Phi_1(A)=W(\vec{m}_0)^*AW(\vec{m}_0)$ for all $A\in\mc L(\hil)$ with some fixed $\vec{m}_0\in\Z_d^2$. Now $p_1(\vec{m}_0)=1$ and $p_1(\vec{m})=0$ whenever $\vec{m}\neq\vec{m}_0$. When $\beta:\Z_d^2\times\Z_d^2\to\C$ is a positive kernel with $\beta(\vec{m},\vec{m})=1$ for all $\vec{m}\in\Z_d^2$, it follows that the channel associated with $\beta$ according to Corollary \ref{prop:finitecomp} is determined by the probability vector $p_2:\Z_d^2\to[0,1]$,
$$
p_2(\vec{r})=\frac{1}{d^2}\sum_{\vec{m},\vec{n}\in\Z_d^2}e^{iS(\vec{m}-\vec{n},\vec{r})}\beta(\vec{m},\vec{n})\sqrt{p_1(\vec{m})p_1(\vec{n})}=\frac{1}{d^2}\beta(\vec{m}_0,\vec{m}_0)=\frac{1}{d^2},\quad\vec{r}\in\Z_d^2.
$$
Hence, the only channel in ${\bf Ch}_W^W$ which is compatible with $\Phi_1$ is the depolarizing channel $\Psi_0$ of Equation \eqref{eq:depolarizing}. This fact is, of course, well known; whenever $\Phi$ is a unitary channel, the only channels compatible with $\Phi$ are of the form $B\mapsto\tr{\sigma B}\id$ for some positive trace-1 operator $\sigma$ and the only one of these within ${\bf Ch}_W^W$ is the depolarizing channel corresponding to $\sigma=d^{-1}\id_\hil$.

On the other hand, let $\Phi_1\in{\bf Ch}_W^W$ be a depolarizing channel. This is easily seen to correspond to $p_1(\vec{m})=d^{-2}$ for all $\vec{m}\in\Z_d^2$. Pick any probability vector $p_2:\Z_d^2\to[0,1]$ and define the positive kernel $\beta:\Z_d^2\times\Z_d^2\to\C$, $\beta(\vec{m},\vec{n})=\sum_{\vec{s}\in\Z_d^2}p_2(\vec{s})e^{iS(\vec{n}-\vec{m},\vec{s})}$, $\vec{m},\,\vec{n}\in\Z_d^2$. It is easily checked that $\beta(\vec{m},\vec{m})=1$ for all $\vec{m}\in\Z_d^2$ and
$$
\frac{1}{d^2}\sum_{\vec{m},\vec{n}\in\Z_d^2}e^{iS(\vec{m}-\vec{n},\vec{r})}\beta(\vec{m},\vec{n})\sqrt{p_1(\vec{m})p_1(\vec{n})}=p_2(\vec{r}),\quad\vec{r}\in\Z_d^2.
$$
Thus, $\Phi_1$ is compatible with any channel $\Phi_2\in{\bf Ch}_W^W$. Indeed, it is well known that $\Phi_1$ is compatible with any channel, not only with covariant channels.

\begin{example}\label{ex:Zd}
Let us consider an example where we mix white noise in the form of the completely depolarizing channel to two unitary Weyl-covariant channels and determine the conditions the mixing parameters have to satisfy so that the approximate unitary channels are compatible. Our target unitary channels are $\Phi_1$ and $\Phi_2$, $\Phi_i(B)=W(\vec{m}_i)^*BW(\vec{m}_i)$, $B\in\mc L(\hil)$, $i=1,\,2$, where $\vec{m}_1,\,\vec{m}_2\in\Z_d^2$ are some fixed phase space points. These channels correspond to the point mass probability vectors $\delta_{\vec{m}_i}$, $i=1,\,2$, i.e.,\ $\delta_{\vec{m}_i}(\vec{m}_i)=1$ and $\delta_{\vec{m}_i}(\vec{m})=0$ otherwise, $i=1,\,2$. The white noise is represented by the channel $\Phi^0:B\mapsto d^{-1}\tr{B}\id_\hil$ corresponding to the uniform probability vector $p^0$, $p^0(\vec{m})=d^{-2}$ for all $\vec{m}\in\Z_d^2$.
Define the channels $\Phi_i^s\in{\bf Ch}_W^W$, $0\leq s\leq1$, $i=1,\,2$, $\Phi_i^s=(1-s)\Phi_i+s\Phi^0$, $i=1,\,2$. These correspond to probability vectors $p_i^s=(1-s)\delta_{\vec{m}_i}+sp^0$, $0\leq s\leq1$, $i=1,\,2$.

Let us look at the compatibility conditions of channels $\Phi_1^s$ and $\Phi_2^t$, $s,\,t\in[0,1]$. According to Remark \ref{rem:translation}, we may simplify this task: we may simply assume that $\vec{m}_1=0=\vec{m}_2$ and still obtain the compatibility conditions of the more general situation described above. Thus, we are looking at the conditions $s$ and $t$ have to satisfy so that the channels $\Phi^s=(1-s){\rm id}+s\Phi^0$ and $\Phi^t=(1-t){\rm id}+t\Phi^0$ are compatible. These channels correspond to the probability vectors $p^s=(1-s)\delta_0+sp^0$ and $p^t=(1-t)\delta_0+tp^0$.

Direct substitution of $p^s$ and $p^t$ into Equation \eqref{eq:finitecomp} yields for all $\vec{r}\in\Z_d^2$
\begin{eqnarray}\label{eq:sij}
(1-t)\delta_{\vec{r},0}+\frac{t}{d^2}&=&\frac{1}{d^2}\bigg(1-\frac{d^2-1}{d^2}s\bigg)+2\frac{\sqrt{s}}{d^3}\sqrt{1-\frac{d^2-1}{d^2}s}\sum_{\vec{m}\in\Z_d^2\setminus\{0\}}{\rm Re}\big(e^{iS(\vec{m},\vec{r})}\beta(\vec{m},0)\big)\nonumber\\
&&+\frac{s}{d^4}\sum_{\vec{m},\vec{n}\in\Z_d^2\setminus\{0\}}e^{iS(\vec{m}-\vec{n},\vec{r})}\beta(\vec{m},\vec{n}),
\end{eqnarray}
where $\delta_{\vec{m},\vec{r}}$ stands for the Kronecker symbol, i.e.,\ $\delta_{\vec{m},\vec{n}}=1$ if and only if $\vec{m}=\vec{n}$ and otherwise $\delta_{\vec{m},\vec{n}}=0$. To find the boundary of the region of those $(s,t)\in[0,1]^2$ such that $\Phi^s$ and $\Phi^t$ are compatible, we look at the minimum value of $t$ for each $s$ such that $\Phi^s$ and $\Phi^t$ are compatible. We denote this value by $t_{\rm min}(s)$; this is found, e.g.,\ by substituting $\vec{r}=0$ in Equation \eqref{eq:sij}, setting $1-t_{\rm min}(s)+t_{\rm min}/d^2$ on the LHS and choosing $\beta$ so that the RHS is maximized. This is due to the fact that the entry $p^t(0)$ is the largest of $p^t$ and by maximizing this, the noise terms $p^t(\vec{r})=t/d^2$, $\vec{r}\in\Z_d^2\setminus\{0\}$, are minimized.

We have $|\beta(\vec{m},\vec{n})|\leq1$ for all $\vec{m},\,\vec{n}\in\Z_d^2$, and it is immediately seen that the above maximization task is solved by setting $\beta(\vec{m},\vec{n})=1$ for all $\vec{m},\,\vec{n}\in\Z_d^2$. Substituting this into Equation \eqref{eq:sij} where the LHS is $1-t_{\rm min}(s)+t_{\rm min}/d^2$, through direct calculation we obtain $t_{\rm min}(s)=\big(\sqrt{1-(1-1/d^2)s}-\sqrt{s}/d\big)^2$. This means that the above channels are compatible if and only if $\sqrt{t}\geq\sqrt{1-d^{-2}(d^2-1)s}-\sqrt{s}/d$. Moving the last term on the RHS of this inequality to the LHS and squaring both sides of the resulting inequality we arrive at the following: the channels $\Phi^s$ and $\Phi^t$ (or equivalently the channels $\Phi_1^s$ and $\Phi_2^t$ of the beginning of this example) are compatible if and only if
\begin{equation}\label{eq:finiteex}
s+\frac{2}{d}\sqrt{st}+t\geq1.
\end{equation}

\begin{figure}
\includegraphics[scale=1.3]{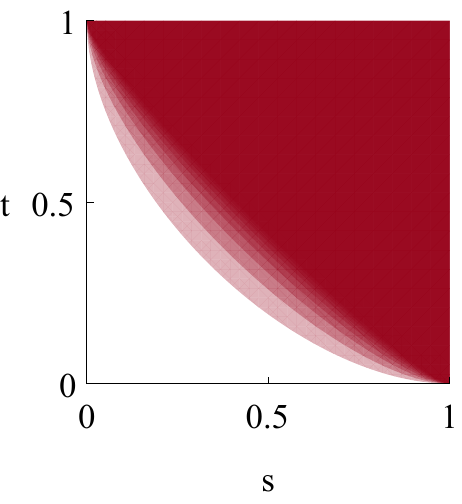}
\caption{\label{fig:F1} The area characterized by the inequality \eqref{eq:finiteex} of those noise parameters $(s,t)\in[0,1]\times[0,1]$ such that $\Phi_1^s$ and $\Phi_2^t$ are compatible is plotted here with varying dimension $d=2,\ldots,\,10$ (size of configuration space). Each darker shade indicates a rise in $d$ by $1$, i.e.,\ the parameter values from the whole coloured area are the ones with which $\Phi_1^s$ and $\Phi_2^t$ are compatible in $d=2$, those from the darker hue correspond to compatibility in $d=3$, and so forth. We see that, as $d\to\infty$, the area of compatibility grows closer to just the upper right triangle. The white area is the area where $\Phi_1^s$ and $\Phi_2^t$ are incompatible in any dimension.}
\end{figure}

We immediately notice that, as the dimension $d$ increases, the inequality \eqref{eq:finiteex} grows ever closer to $s+t\geq1$; this is also demonstrated in Figure \ref{fig:F1}. Pairs of unitary channels are the pairs whose incompatibility is the most resistant under added noise and, as dimension increases, we near the dimension-independent ultimate noise tolerance, since if the noise parameters satisfy $s+t\geq1$, the channels $\Phi^s$ and $\Phi^t$ are guaranteed to be compatible \cite{BuHeSchuSte2013,Haapasalo2015}. In the symmetric case where $s=t$, we see that the channels $\Phi_1^t$ and $\Phi_2^t$ are compatible if and only if $t\geq d/\big(2(d+1)\big)$. This result is in line with earlier findings \cite{Haapasalo2015}.
\end{example}

\begin{example}
Let us consider the case of the simple phase space $\Z_2\times\Z_2$. Naturally, the Hilbert space associated with the phase space is two dimensional. Let us fix a basis in $\C^2$ and define the Pauli matrices:
$$
\sigma_0=\id_2,\quad
\sigma_1=\left(\begin{array}{cc}
0&1\\
1&0
\end{array}\right),\quad
\sigma_2=\left(\begin{array}{cc}
0&-i\\
i&0
\end{array}\right),\quad
\sigma_3=\left(\begin{array}{cc}
1&0\\
0&-1
\end{array}\right).
$$
When we define $W:\Z_2^2\to\mc L(\C^2)$, $W(0,0)=\sigma_0$, $W(0,1)=\sigma_1$, $W(1,0)=\sigma_2$, and $W(1,1)=\sigma_3$, we notice that this map satisfies the CCR-relations \eqref{eq:gWproperties} meaning that $W$ is the Weyl representation associated with the phase space $\Z_2\times\Z_2$.

Any channel $\Phi\in{\bf Ch}_W^W$ is determined by a probability vector $p:\Z_2^2\to[0,1]$ through $\Phi(B)=\sum_{k,l\in\Z_2}p(k,l)W(k,l)^*BW(k,l)$, $B\in\mc L(\C^2)$. According to Corollary \ref{prop:finitecomp}, two channels $\Phi_1,\,\Phi_2\in{\bf Ch}_W^W$ associated, respectively, with probability vectors $p_1,\,p_2:\Z_2^2\to[0,1]$ are compatible if and only if there is a positive matrix
$$B=\big(\beta(\vec{m},\vec{n})\big)_{\vec{m},\vec{n}\in\Z_2^2}=\left(\begin{array}{cccc}
1&a&b&d\\
\ovl a&1&c&e\\
\ovl b&\ovl c&1&f\\
\ovl d&\ovl e&\ovl f&1
\end{array}\right)
$$
such that Equation \eqref{eq:finitecomp} is satisfied with $d=2$.

We immediately notice that the phase terms $e^{iS(\vec{m}-\vec{n},\vec{r})}$ in \eqref{eq:finitecomp} are all real, and taking the complex conjugate on both sides of this equation and, summing the resulting (true) equation with the original one, we find that we are free to assume that $B\in\mc M_{4\times 4}(\R)$, i.e.,\ $\beta(\vec{m},\vec{n})\in\R$ for all $\vec{m},\,\vec{n}\in\Z_2^2$. We are free to set $a=b=d=0$ if $p_1(0,0)=0$, $c=e=a=0$ if $p_1(0,1)=0$, $f=b=c=0$ if $p_1(1,0)=0$, and $d=e=f=0$ if $p_1(1,1)=0$; the resulting matrix is positive whenever the original $B$ is positive and satisfies Equation \eqref{eq:finitecomp} similarly as $B$. Also note that $B\geq0$ if and only if $\tilde{B}\geq0$ where
\begin{eqnarray*}
\tilde{B}&=&\left(\begin{array}{cccc}
p_1(0,0)&\tilde{a}&\tilde{b}&\tilde{d}\\
\tilde{a}&p_1(0,1)&\tilde{c}&\tilde{e}\\
\tilde{b}&\tilde{c}&p_1(1,0)&\tilde{f}\\
\tilde{d}&\tilde{e}&\tilde{f}&p_1(1,1)\\
\end{array}\right)\\
&:=&\left(\begin{array}{cccc}
p_1(0,0)&a\sqrt{p_1(0,0)p_1(0,1)}&b\sqrt{p_1(0,0)p_1(1,0)}&d\sqrt{p_1(0,0)p_1(1,1)}\\
a\sqrt{p_1(0,0)p_1(0,1)}&p_1(0,1)&c\sqrt{p_1(0,1)p_1(1,0)}&e\sqrt{p_1(0,1)p_1(1,1)}\\
b\sqrt{p_1(0,0)p_1(1,0)}&c\sqrt{p_1(0,1)p_1(1,0)}&p_1(1,0)&f\sqrt{p_1(1,0)p_1(1,1)}\\
d\sqrt{p_1(0,0)p_1(1,1)}&e\sqrt{p_1(0,1)p_1(1,1)}&f\sqrt{p_1(1,0)p_1(1,1)}&p_1(1,1)
\end{array}\right)
\end{eqnarray*}
is the entrywise product of $B$ and $\big(\sqrt{p_1(\vec{m})p_1(\vec{n})}\big)_{\vec{m},\vec{n}\in\Z_2^2}$ (which is positive). We obtain the original $B$ from $\tilde{B}$ by taking the entrywise product of $\tilde{B}$ and the positive $\big(c_{\vec{m},\vec{n}}\big)_{\vec{m},\vec{n}\in\Z_2^2}$ where $c_{\vec{m},\vec{n}}=\big(p_1(\vec{m})p_1(\vec{n})\big)^{-1/2}$ whenever $p_1(\vec{m})\neq 0\neq p_1(\vec{n})$, and $c_{\vec{m},\vec{n}}=0$ otherwise; note the above freedom in nullifying certain entries of $B$ depending on the vanishing of $p_1$. Making these assumptions, we find, using \eqref{eq:finitecomp},
$$
\left\lbrace\begin{array}{rcl}
p_2(0,0)+p_2(0,1)&=&\frac{1}{2}+\tilde{a}+\tilde{f},\\
p_2(0,0)+p_2(1,0)&=&\frac{1}{2}+\tilde{b}+\tilde{e},\\
p_2(0,0)+p_2(1,1)&=&\frac{1}{2}+\tilde{d}+\tilde{c}.
\end{array}\right.
$$

From this we find that $\Phi_1$ and $\Phi_2$ are compatible if and only if there are $x,\,y,\,z\in\R$ such that
\begin{equation}\label{eq:qubitcond1}
\left(\begin{array}{cccc}
p_1(0,0)&x&y&d(z)\\
x&p_1(0,1)&z&e(y)\\
y&z&p_1(1,0)&f(x)\\
d(z)&e(y)&f(x)&p_1(1,1)
\end{array}\right)\geq0,
\end{equation}
where
$$
\left\lbrace\begin{array}{rcl}
d(z)&=&\frac{1}{2}\big(p_2(0,0)-p_2(0,1)-p_2(1,0)+p_2(1,1)\big)-z,\\
e(y)&=&\frac{1}{2}\big(p_2(0,0)-p_2(0,1)+p_2(1,0)-p_2(1,1)\big)-y,\\
f(x)&=&\frac{1}{2}\big(p_2(0,0)+p_2(0,1)-p_2(1,0)-p_2(1,1)\big)-x.
\end{array}\right.
$$
This is immediately seen as the existence of $x,\,y,\,z\in\R$ such that \eqref{eq:qubitcond1} is satisfied is equivalent with the existence of a positive $B=\big(\beta(\vec{m},\vec{n})\big)_{\vec{m},\vec{n}\in\Z_2^2}$ such that Equation \eqref{eq:finitecomp} is satisfied.

\end{example}

\section{Notes on the multipartite case}\label{sec:multipartite}

All the results of this paper except for Proposition \ref{prop:covdilat}, Theorem \ref{theor:Wcovjointdilat}, Corollary \ref{prop:finitecomp}, and their corollaries can be generalized in a straight-forward manner to the multipartite case.

\begin{definition}
Fix $m\in\N$ and Hilbert spaces $\hil$ and $\mc K_i$, $i=1,\ldots,\,m$. Suppose that $\Psi\in{\bf Ch}(\hil,\mc K_1\otimes\cdots\otimes\mc K_m)$ and define the {\it $i$'th margin of $\Psi$}, $i=1,\ldots,\,m$,
$$
\Psi_{(i)}(B)=\Psi(\id_{\mc K_1}\otimes\cdots\otimes\id_{\mc K_{i-1}}\otimes B\otimes\id_{\mc K_{i+1}}\otimes\cdots\otimes\id_{\mc K_m}),\qquad B\in\mc L(\mc K_i).
$$
Channels $\Phi_i\in{\bf Ch}(\hil,\mc K_i)$, $i=1,\ldots,\,m$ are {\it compatible} if there is a {\it joint channel} $\Psi\in{\bf Ch}(\hil,\mc K_1\otimes\cdots\otimes\mc K_m)$ such that $\Phi_i=\Psi_{(i)}$, $i=1,\ldots,\,m$.
\end{definition}

Fix Hilbert spaces $\hil$ and $\mc K_i$, $i=1,\ldots,\,m$, a group $G$, and projective unitary representations $U:G\to\mc U(\hil)$ and $V_i:G\to\mc U(\mc K_i)$, $i=1,\ldots,\,m$. Also define $V:G\to\mc U(\mc K_1\otimes\cdots\mc K_m)$,
$$
V(g)=V_1(g)\otimes\cdots\otimes V_m(g),\qquad g\in G.
$$
With obvious modifications, the proofs of Lemma \ref{lemma:jointnormal} and Proposition \ref{prop:covjoint} can be adapted to the multipartite case. Hence, we have:

\begin{proposition}
Suppose that $\hil$ and $\mc K_i$, $i=1,\ldots,\,m$, are separable Hilbert spaces, $G$ is an amenable locally compact group, and $U:G\to\mc U(\hil)$ and $V_i:G\to\mc U(\mc K_i)$, $i=1,\ldots,\,m$, are strongly continuous projective unitary representations. Any $m$-tuple $\Phi_i\in{\bf Ch}_U^{V_i}$, $i=1,\,2$, of compatible covariant channels has a joint channel $\ovl\Psi\in{\bf Ch}_U^V$, where $V$ is defined as above.
\end{proposition}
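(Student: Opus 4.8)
The plan is to follow the proof of Proposition \ref{prop:covjoint} almost verbatim, the only genuine work being to upgrade Lemma \ref{lemma:jointnormal} to $m$ tensor factors.

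First I would establish the $m$-partite normality lemma: if $\Psi:\mc L(\mc K_1\otimes\cdots\otimes\mc K_m)\to\mc L(\hil)$ is a positive linear map all of whose margins $\Psi_{(i)}$, $i=1,\ldots,m$, are normal, then $\Psi$ is normal. Everything in the proof of Lemma \ref{lemma:jointnormal} carries over once the key approximation estimate is re-derived: for finite-dimensional subspaces $\mc M_i\subseteq\mc K_i$ with orthogonal projections $P_i$, the telescoping identity
$$
\id-P_1\otimes\cdots\otimes P_m=\sum_{i=1}^m P_1\otimes\cdots\otimes P_{i-1}\otimes(\id_{\mc K_i}-P_i)\otimes\id_{\mc K_{i+1}}\otimes\cdots\otimes\id_{\mc K_m}\leq\sum_{i=1}^m\big(\id-\id_{\mc K_1}\otimes\cdots\otimes P_i\otimes\cdots\otimes\id_{\mc K_m}\big)
$$
yields $\Psi(\id-P_1\otimes\cdots\otimes P_m)\leq\sum_{i=1}^m\Psi_{(i)}(\id_{\mc K_i}-P_i)$, which tends to $0$ along the product of the directed sets of finite-dimensional subspaces by normality of the margins. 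With this in hand, the Cauchy--Schwarz splitting of $\tr{T\Psi(C-C_\alpha)}$ into a finite-dimensional ``corner'' term (handled by normality of $\Psi$ restricted to a finite-dimensional subalgebra) plus error terms bounded by the estimate above is identical to the bipartite case.

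Given compatible $\Phi_i\in{\bf Ch}_U^{V_i}$, $i=1,\ldots,m$, and a not-necessarily-covariant joint channel $\Psi\in{\bf Ch}(\hil,\mc K_1\otimes\cdots\otimes\mc K_m)$, I would then covariantize exactly as in Proposition \ref{prop:covjoint}: set
$$
f_{C,\fii,\psi}(g)=\<U(g)^*\fii|\Psi\big(V(g)^*CV(g)\big)U(g)^*\psi\>,\qquad g\in G,
$$
observe that these lie in $L^\infty(G)$ by strong continuity, separability and the bound $|f_{C,\fii,\psi}(g)|\leq\|\fii\|\|\psi\|\|C\|$, fix an invariant mean $f\mapsto\ovl f$, and define $\ovl\Psi$ by $\<\fii|\ovl\Psi(C)\psi\>=\ovl{f_{C,\fii,\psi}}$. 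Linearity and unitality are immediate; the relation $f^g_{C,\fii,\psi}=f_{\alpha^V_g(C),U(g)\fii,U(g)\psi}$ together with invariance of the mean gives $\ovl\Psi\circ\alpha^V_g=\alpha^U_g\circ\ovl\Psi$; positivity of the mean applied to $\sum_{i,j=1}^n f_{C_i^*C_j,\fii_i,\fii_j}\geq0$ gives complete positivity; and, by covariance of each $\Phi_i$, the single-channel function $g\mapsto\<U(g)^*\fii|\Phi_i(V_i(g)^*AV_i(g))U(g)^*\psi\>$ is constantly $\<\fii|\Phi_i(A)\psi\>$, whence $\ovl\Psi_{(i)}=\Phi_i$ for $i=1,\ldots,m$. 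Applying the $m$-partite normality lemma then shows $\ovl\Psi$ is normal, so $\ovl\Psi\in{\bf Ch}_U^V$ is the desired joint channel.

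The only essentially new ingredient is the $m$-fold telescoping identity above; I expect no obstacle beyond routine bookkeeping, since the invariant-mean argument never referred to the number of tensor factors and the normality reduction handles the margins one at a time.
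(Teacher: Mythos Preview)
Your proposal is correct and follows exactly the route the paper indicates: the paper simply states that ``with obvious modifications, the proofs of Lemma \ref{lemma:jointnormal} and Proposition \ref{prop:covjoint} can be adapted to the multipartite case,'' and you have carried out precisely those modifications, the only substantive one being the $m$-fold telescoping identity replacing the two-factor splitting in Lemma \ref{lemma:jointnormal}.
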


Consider now the general phase space $G={\bf X}\times\hat{\bf X}$ and the Weyl representation $W$ of Section \ref{sec:gPhaseSpace}. Again, the compatibility properties hinge on the set ${\bf Ch}_W^{W^{\otimes m}}$, where $W^{\otimes m}:G\to\mc U(\hil^{\otimes m})$,
$$
W^{\otimes m}(g)=\underbrace{W(g)\otimes\cdots\otimes W(g)}_{m\ {\rm copies}}.
$$

\begin{definition}
Denote by $C_S^0(G^N)$ the set of continuous functions $f:G^N\to\C$ with $f(e,\ldots,e)=1$ such that, for any $n\in\N$ and $g_{k,i}\in G$, $k=1,\ldots,\,m$, $i=1,\ldots,\,n$,
$$
\left(f(g_{1,i}^{-1}g_{1,j},\ldots,g_{m,i}^{-1}g_{m,j})\prod_{\underset{k\neq l}{k,l=1}}^m e^{-\frac{i}{2}S(g_{k,i},g_{l,j})}\right)_{i,j=1}^n\geq0.
$$
\end{definition}

Again, the proofs of Proposition \ref{prop:Wirr} and Theorem \ref{theor:gWcovjointchar} are easily generalized and we obtain:

\begin{theorem}
For any $\Psi\in{\bf Ch}_W^{W^{\otimes m}}$ there is a unique $f_\Psi\in C_S^0(G^N)$ such that
$$
\Psi\big(W(g_1)\otimes\cdots\otimes W(g_m)\big)=f_\Psi(g_1,\ldots,g_m)W(g_1\cdots g_m),\qquad g_1,\ldots,\,g_m\in G.
$$
Moreover, the map ${\bf Ch}_W^{W^{\otimes m}}\ni\Psi\mapsto f_\Psi\in C_S^0(G^N)$ is bijective.
\end{theorem}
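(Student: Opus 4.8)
The plan is to follow the proof of Theorem~\ref{theor:gWcovjointchar} essentially verbatim, the only new ingredient being a careful identification of the symplectic form on the $m$-fold product phase space. As before, the argument splits into an existence-of-$f_\Psi$ part, a reduction to Lemma~\ref{lemma:CCRChan}, and a bijectivity part.

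\emph{Existence of $f_\Psi$.} Fix $\Psi\in{\bf Ch}_W^{W^{\otimes m}}$ and $g_1,\ldots,g_m,g'\in G$. Arguing exactly as in Theorem~\ref{theor:gWcovjointchar}, I would use the canonical commutation relations \eqref{eq:gWproperties} together with the $(W,W^{\otimes m})$-covariance of $\Psi$ to show that $W(g_1\cdots g_m)^*\Psi\big(W(g_1)\otimes\cdots\otimes W(g_m)\big)$ commutes with $W(g')$; the two phase factors generated by commuting $W(g')$ past $W(g_1\cdots g_m)^*$ and, via covariance, past $W^{\otimes m}(g')$ inside $\Psi$ cancel because $S(g',g_1\cdots g_m)=\sum_{k=1}^m S(g',g_k)$. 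Since $W$ is irreducible (the first half of Proposition~\ref{prop:Wirr} carries over without change), this operator is a scalar multiple of $\id_\hil$; calling the scalar $f_\Psi(g_1,\ldots,g_m)$ yields the displayed formula.

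\emph{Reduction to Lemma~\ref{lemma:CCRChan}.} Next I would identify $G^m$ with the product phase space ${\bf X}^m\times\hat{\bf X}^m$, whose symplectic form is $S_m\big((g_1,\ldots,g_m),(h_1,\ldots,h_m)\big)=\sum_{k=1}^m S(g_k,h_k)$, and introduce the continuous homomorphism $T\colon G^m\to G$, $T(g_1,\ldots,g_m)=g_1\cdots g_m$. Because $S$ is additive in each argument, $S\big(T(\mathbf{g}),T(\mathbf{h})\big)=\sum_{k,l=1}^m S(g_k,h_l)$ for $\mathbf{g}=(g_1,\ldots,g_m)$, $\mathbf{h}=(h_1,\ldots,h_m)$, hence $S_m(\mathbf{g},\mathbf{h})-S\big(T(\mathbf{g}),T(\mathbf{h})\big)=-\sum_{k\neq l}S(g_k,h_l)$. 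With this substitution the positivity condition \eqref{eq:ineqCCRChan} of Lemma~\ref{lemma:CCRChan} applied to a function on $G^m$ becomes exactly the defining inequality of $C_S^0(G^m)$, and the continuity and normalization clauses of the lemma become the corresponding clauses in the definition of $C_S^0(G^m)$. Applying Lemma~\ref{lemma:CCRChan} with $\hil_2=\hil^{\otimes m}$, $W_2=W^{\otimes m}$, $\hil_1=\hil$, $W_1=W$, and the homomorphism $T$ above thus shows both that $f_\Psi\in C_S^0(G^m)$ and, conversely, that every $f\in C_S^0(G^m)$ arises from a channel $\mc L(\hil^{\otimes m})\to\mc L(\hil)$ through the displayed formula.

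\emph{Bijectivity and obstacle.} Injectivity is immediate: by the generalization of the second half of Proposition~\ref{prop:Wirr} the operators $W(g_1)\otimes\cdots\otimes W(g_m)$ span an ultraweakly dense operator system in $\mc L(\hil^{\otimes m})$, so a channel is determined by its values on them, i.e.\ by $f_\Psi$. For surjectivity there remains only to check that the channel $\Psi$ produced by Lemma~\ref{lemma:CCRChan} from a given $f\in C_S^0(G^m)$ is genuinely $(W,W^{\otimes m})$-covariant; this is automatic, since for every $k\in G$ one has $\alpha^{W^{\otimes m}}_k\big(W(g_1)\otimes\cdots\otimes W(g_m)\big)=e^{-iS(k,g_1\cdots g_m)}\,W(g_1)\otimes\cdots\otimes W(g_m)$ and $\alpha^W_k\big(W(g_1\cdots g_m)\big)=e^{-iS(k,g_1\cdots g_m)}\,W(g_1\cdots g_m)$, so both sides of the covariance identity acquire the same scalar, and the identity then extends by linearity and normality. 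I do not anticipate a genuine obstacle here: everything is bootstrapped from Lemma~\ref{lemma:CCRChan} and Proposition~\ref{prop:Wirr}, whose multipartite versions hold with only notational changes, and the only delicate point is the symplectic bookkeeping on $G^m$ that matches the two matrix inequalities, which is handled by the computation above.
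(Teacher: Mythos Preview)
Your proposal is correct and follows exactly the route the paper intends: the paper does not give a separate proof for the multipartite statement but simply says the proofs of Proposition~\ref{prop:Wirr} and Theorem~\ref{theor:gWcovjointchar} generalize, and you have carried out precisely that generalization, including the key symplectic bookkeeping $S_m(\mathbf{g},\mathbf{h})-S\big(T(\mathbf{g}),T(\mathbf{h})\big)=-\sum_{k\neq l}S(g_k,h_l)$ needed to match condition~\eqref{eq:ineqCCRChan} with the definition of $C_S^0(G^m)$. Your explicit verification of covariance for the channel produced by Lemma~\ref{lemma:CCRChan} is a detail the paper leaves implicit even in the $m=2$ case, and it is correct as stated.
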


\begin{corollary}
Channels $\Phi_1,\ldots,\,\Phi_m\in{\bf Ch}_W^W$ are compatible if and only if there is $f\in C_S^0(G^N)$ such that
$$
f_1(g)=f(g,e,\ldots,e),\ldots,\,f_m(g)=f(e,\ldots,e,g),\qquad g\in G,
$$
where $f_i:G\to\C$, $i=1,\ldots,\,m$, are the continuous functions of positive type such that $f_i(e)=1$ and $\Phi_i\big(W(g)\big)=f_i(g)W(g)$ for all $g\in G$, $i=1,\ldots,\,m$.
\end{corollary}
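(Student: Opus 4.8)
The plan is to mirror, essentially verbatim, the argument behind the bipartite Corollary that follows Theorem~\ref{theor:gWcovjointchar}, now feeding in the multipartite analogues of Proposition~\ref{prop:covjoint} and Theorem~\ref{theor:gWcovjointchar} stated just above. First I would invoke the multipartite version of Proposition~\ref{prop:covjoint}: since $\hil=L^2({\bf X})$ is separable, $W$ is strongly continuous, and $G={\bf X}\times\hat{\bf X}$, being locally compact abelian, is amenable, the channels $\Phi_1,\ldots,\Phi_m\in{\bf Ch}_W^W$ are compatible if and only if they are the margins of some joint channel $\Psi\in{\bf Ch}_W^{W^{\otimes m}}$. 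Then, by the preceding multipartite characterization theorem, any such $\Psi$ is given by a unique $f_\Psi\in C_S^0(G^N)$ via $\Psi\big(W(g_1)\otimes\cdots\otimes W(g_m)\big)=f_\Psi(g_1,\ldots,g_m)W(g_1\cdots g_m)$.

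The core computation is the evaluation of the $i$-th margin. Using $W(e)=\id_\hil$ we have $\Phi_i\big(W(g)\big)=\Psi_{(i)}\big(W(g)\big)=\Psi\big(\id_\hil\otimes\cdots\otimes W(g)\otimes\cdots\otimes\id_\hil\big)=\Psi\big(W(e)\otimes\cdots\otimes W(g)\otimes\cdots\otimes W(e)\big)=f_\Psi(e,\ldots,g,\ldots,e)\,W(e\cdots g\cdots e)=f_\Psi(e,\ldots,g,\ldots,e)\,W(g)$, the argument $g$ sitting in the $i$-th slot. Comparing with $\Phi_i\big(W(g)\big)=f_i(g)W(g)$ and using $W(g)\neq 0$, we get $f_i(g)=f_\Psi(e,\ldots,g,\ldots,e)$, which is exactly the asserted relation with $f:=f_\Psi$. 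This proves the ``only if'' direction.

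For the converse, suppose $f\in C_S^0(G^N)$ satisfies the stated relations. The multipartite characterization theorem produces a channel $\Psi\in{\bf Ch}_W^{W^{\otimes m}}$ with $f_\Psi=f$, and by the margin computation above its $i$-th margin sends $W(g)$ to $f(e,\ldots,g,\ldots,e)W(g)=f_i(g)W(g)$; since $\Phi\mapsto f_\Phi$ is a bijection ${\bf Ch}_W^W\to C_+^0(G)$ (and $f_i\in C_+^0(G)$ is known because $f(e,\ldots,\cdot,\ldots,e)\in C_+^0(G)$ by the multipartite analogue of the observation in Remark~\ref{rem:gskewpositive}), this margin is precisely $\Phi_i$. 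Hence $\Psi$ is a joint channel for $\Phi_1,\ldots,\Phi_m$, so they are compatible.

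I do not expect a genuine obstacle here: once the multipartite versions of Proposition~\ref{prop:covjoint} and Theorem~\ref{theor:gWcovjointchar} are in hand the statement is immediate. The only point needing a moment's care is the bookkeeping at the boundary values, namely that plugging $g_k=e$ for all $k\neq i$ collapses $W(g_1\cdots g_m)$ to the single operator $W(g)$, so that the margin of the covariant joint channel lands back in ${\bf Ch}_W^W$ and is uniquely pinned down by its values on the $W(g)$ (via irreducibility of $W$, Proposition~\ref{prop:Wirr}); after that the identification $f_i=f(e,\ldots,\cdot,\ldots,e)$ is forced.
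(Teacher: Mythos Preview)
Your proposal is correct and follows essentially the same route as the paper: the paper leaves this multipartite corollary without a separate proof, and its bipartite analogue (Corollary 1) is proved exactly as you describe, by combining the covariant-joint-channel result (Proposition~\ref{prop:covjoint}) with the bijection ${\bf Ch}_W^{W^{\otimes m}}\leftrightarrow C_S^0(G^N)$ and then reading off the margins via $W(e)=\id_\hil$. Your treatment of the converse direction is in fact slightly more explicit than the paper's bipartite proof, but the underlying argument is the same.
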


Let us give a few notes on the continuous non-compact phase space (${\bf X}=\R^N$) and retain the notation of Subsection \ref{sec:contphasespace}. Define the representation $W_m:\R^{2mN}\to\mc U(\hil^{\otimes m})$, $W_m(\vec{w}_1,\ldots,\vec{w}_m)=W(\vec{w}_1)\otimes\cdots\otimes W(\vec{w}_m)$, $\vec{w}_1,\ldots,\,\vec{w}_m\in\R^{2N}$.

\begin{theorem}
A covariant channel $\Psi\in{\bf Ch}_W^{W^{\otimes m}}$ is Gaussian if and only if there is a symmetric matrix $B\in\mc M_{2mN\times2mN}(\R)$, $B=(B_{k,l})_{k,l=1}^m$ where $B_{k,l}\in\mc M_{2N\times2N}(\R)$, $k,\,l=1,\ldots,\,m$, with
\begin{equation}
\left(\begin{array}{cccc}
B_{1,1}&B_{1,2}-i\Omega&\cdots&B_{1,m}-i\Omega\\
B_{1,2}^T+i\Omega^T&B_{2,2}&\cdots&B_{2,m}-i\Omega\\
\vdots&\vdots&\ddots&\vdots\\
B_{1,m}^T+i\Omega^T&B_{2,m}^T+i\Omega^T&\cdots&B_{m,m}
\end{array}\right)\geq0
\end{equation}
and a vector $\vec{c}\in\R^{2mN}$ (that can be picked at random) such that
$$
\Psi\big(W_m(\vec{z})\big)=e^{-\frac{1}{4}\vec{z}^TB\vec{z}-i\vec{c}^T\vec{z}}W(J_m^T\vec{z}),\qquad\vec{z}\in\R^{2mN},
$$
where $J_m:\R^{2N}\to\R^{2mN}$,
$$
\R^{2N}\ni\vec{w}\mapsto\underbrace{(\vec{w},\ldots,\vec{w})}_{m\ {\rm copies}}\in\R^{2mN}.
$$
\end{theorem}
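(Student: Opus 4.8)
The plan is to follow, up to the obvious notational changes, the computation carried out just before Theorem~\ref{theor:covGaussCh}, now viewing $W_m:\R^{2mN}\to\mc U(\hil^{\otimes m})$ as the Weyl representation of the product phase space $\R^{2mN}$, whose symplectic matrix is the block-diagonal $\Omega_m=\bigoplus_{k=1}^m\Omega$. First I would invoke the classification of Gaussian channels (as in \cite{GiCi2002} and equations~\eqref{eq:compposGauss}--\eqref{eq:GaussCh}, applied to a channel between the $2mN$-mode system $\hil^{\otimes m}$ and the $2N$-mode system $\hil$): a channel $\Psi\in{\bf Ch}(\hil,\hil^{\otimes m})$ is Gaussian exactly when there are real matrices $A\in\mc M_{2N\times 2mN}(\R)$ and $B\in\mc M_{2mN\times 2mN}(\R)$ with $B+i\Omega_m-iA^T\Omega A\geq0$ and a vector $\vec{c}\in\R^{2mN}$ (which may be chosen arbitrarily) such that $\Psi\big(W_m(\vec{z})\big)=e^{-\frac14\vec{z}^TB\vec{z}-i\vec{c}^T\vec{z}}W(A\vec{z})$ for all $\vec{z}\in\R^{2mN}$.

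Next I would impose $(W,W^{\otimes m})$-covariance. Writing $W^{\otimes m}(\vec{w})=W_m(J_m\vec{w})$ and conjugating the above formula by $W(\vec{w})$ on the output side, while rewriting $W(\vec{w})\Psi(\,\cdot\,)W(\vec{w})^*=\Psi\big(W_m(J_m\vec{w})(\,\cdot\,)W_m(J_m\vec{w})^*\big)$ via covariance, the canonical commutation relations of $W$ and of $W_m$ produce on the two sides the phase factors $e^{-i\vec{w}^T\Omega A\vec{z}}$ and $e^{-i\vec{w}^TJ_m^T\Omega_m\vec{z}}$; equating these for all $\vec{w},\vec{z}$ forces $\Omega A=J_m^T\Omega_m$. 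Since $\Omega^2=-\id_{2N}$ and $J_m^T\Omega_m=(\Omega\ \cdots\ \Omega)$, a one-line computation then gives $A=-\Omega J_m^T\Omega_m=J_m^T$; conversely, the identity $\Omega J_m^T=J_m^T\Omega_m$ shows that every Gaussian channel with $A=J_m^T$ is automatically $(W,W^{\otimes m})$-covariant.

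Finally I would rewrite the complete-positivity inequality under the constraint $A=J_m^T$. Here $A^T\Omega A=J_m\Omega J_m^T$ is the $m\times m$ block matrix every one of whose $2N\times 2N$ blocks equals $\Omega$, so $\Omega_m-A^T\Omega A$ has vanishing block-diagonal and equals $-\Omega$ in each off-diagonal block. Substituting this and splitting $B=(B_{k,l})_{k,l=1}^m$ into $2N\times 2N$ blocks turns $B+i\Omega_m-iA^T\Omega A\geq0$ into the block-matrix inequality of the statement; observing that the left-hand side must be self-adjoint (necessary for positivity) forces each $B_{k,k}$ to be symmetric and $B_{l,k}=B_{k,l}^T$, i.e.\ $B$ symmetric, which is exactly why the sub-diagonal blocks appear as $B_{k,l}^T+i\Omega^T$. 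Reading the whole chain backwards shows that, conversely, any symmetric $B$ satisfying that inequality yields a genuine $\Psi\in{\bf Ch}_W^{W^{\otimes m}}$ of the stated form, and comparing with the multipartite analogue of~\eqref{eq:gWcovjointchar}, together with $W(J_m^T\vec{z})=W(\vec{z}_1+\cdots+\vec{z}_m)$, identifies its associated function as $\vec{z}\mapsto e^{-\frac14\vec{z}^TB\vec{z}-i\vec{c}^T\vec{z}}$. None of this is genuinely hard; the main points requiring care are the matrix bookkeeping behind $A=-\Omega J_m^T\Omega_m=J_m^T$ and the expansion of $J_m\Omega J_m^T$, plus verifying that the Gaussian-channel classification quoted at the outset is indeed available for channels whose input and output phase spaces have different numbers of modes.
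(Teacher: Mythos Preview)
Your proposal is correct and follows exactly the route the paper intends: the paper does not give a separate proof of this multipartite theorem, but treats it as the obvious $m$-fold generalization of the computation preceding Theorem~\ref{theor:covGaussCh}, which is precisely what you have written out (with the same use of the Gaussian-channel parametrization~\eqref{eq:compposGauss}--\eqref{eq:GaussCh}, the same covariance argument forcing $A=J_m^T$, and the same reduction of $B+i\Omega_m-iA^T\Omega A\geq0$ to the displayed block inequality). Your bookkeeping of $J_m^T\Omega_m=(\Omega\ \cdots\ \Omega)$ and $J_m\Omega J_m^T$ having every block equal to $\Omega$ is accurate, and your remark that self-adjointness forces $B$ symmetric matches the paper's handling of the $m=2$ case.
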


\section{Conclusions}

We have studied covariant channels and their compatibility conditions and have shown that covariant channels always have a covariant joint channel. These general results were utilized in an analysis of the compatibility conditions of Weyl-covariant channels, i.e.,\ channels which behave symmetrically under phase space shifts. We have obtained necessary and sufficient conditions for compatibility of such channels involving the characteristic functions associated with the Weyl-covariant channels. Under some extra assumptions, we obtain a very descriptive compatibility condition which can be used as a recipe for generating all the covariant channels which are compatible with a fixed covariant channel. These results were investigated in the case of a non-compact continuous phase space and the case of a finite phase space with some illustrative examples. Finally, some notes on the multipartite case were presented.

We have seen that symmetries in the form of covariance properties can be used to greatly restrict the variety of joining of compatible channels, a results mirroring earlier results on quantum observables, particularly position and momentum \cite{Werner2004}. The analysis of Weyl covariance was chosen here for the simplicity arising from the canonical commutation relations. The case of more general physical symmetries, such as Euclidean covariance, remains to be examined. The non-commutative groups involved in these studies will provide richer structures and hopefully interesting methods of establishing incompatibility of quantum channels.

\section*{Acknowledgements}

The support in the form of discussions and suggestions of Teiko Heinosaari, Jukka Kiukas, and Juha-Pekka Pellonp\"a\"a has been instrumental for this work. Particularly Dr. Kiukas is recognized for helping to see the connection between Weyl-covariant joint channels and Gaussian channels. Henri Lyyra and Jose Teittinen are recognized for their help in plotting for Example \ref{ex:Zd} and for providing comic relieves during the author's visit at the University of Turku.

\end{document}